\newcommand{\hiddengls}[1]{\glslink{#1}{}}
\newglossaryentry{graph}{%
    type=notation,
    name={\ensuremath{G}},
    description={A graph},
    sort={graph}
}%
\newglossaryentry{adjMatG}{%
    type=notation,
    name={\ensuremath{A_G}},
    description={The adjacency matrix of the graph $G$},
    sort={adjacency matrix}
}%
\newglossaryentry{gState}{%
    type=notation,
    name={\ensuremath{\ket{G}}},
    description={A graph state},
    sort={graph state}
}%
\newglossaryentry{vertexOfG}{%
    type=notation,
    name={\ensuremath{V_G}},
    description={The vertex set of $G$},
    sort={graph theory: vertex set}
}%
\newglossaryentry{edgeOfG}{%
    type=notation,
    name={\ensuremath{E_G}},
    description={The edge set of $G$},
    sort={graph theory: edge set}
}%
\newglossaryentry{neighborhoodV}{%
    type=notation,
    name={\ensuremath{N_v}},
    description={The neighborhood of $v$},
    sort={graph theory: neighborhood set}
}%
\newglossaryentry{inducedG}{%
    type=notation,
    name={\ensuremath{G[\xi]}},
    description={The subgraph of $G$ induced by vertices $\xi$},
    sort={graph theory: induced subgraph}
}%
\newcommand{\iBasis}[2][]{i_{#1}^{(#2)}} 
\newglossaryentry{iState}{%
    type=notation,
    name={\ensuremath{|\iBasis[\alpha]{\iota}\rangle}},
    description={The $\alpha$-basis state with binary number corresponding to the index set $\iota$, $\alpha\in\{X,Y,Z\}$},
    sort={iState}
}%
\newglossaryentry{gStateGenerator}{%
    type=notation,
    name={\ensuremath{g_i}},
    description={The graph state stabilizer generator associated to $i$th vertex},
    sort={graph state: 1 stabilizer generator}
}%
\newglossaryentry{inducedS}{%
    type=notation,
    name={\ensuremath{s_G^{(\xi)}}},
    description={The graph state stabilizer induced by the vertex subset $\xi$},
    sort={graph state: 2 induced stabilizer}
}%
\newglossaryentry{stabGroup}{%
    type=notation,
    name={\ensuremath{\mathcal{S}_G}},
    description={The graph state stabilizer group of $G$},
    sort={graph state: 3 stabilizer group}
}%
\newglossaryentry{powerset}{%
    type=notation,
    name={\ensuremath{\mathcal{P}(V_G)}},
    description={The power set of the vertex set $V_G$},
    sort={powerset}
}%
\newglossaryentry{corrIndex}{%
    type=notation,
    name={\ensuremath{c_{\xi}}},
    description={The correlation index of the vertex subset $\xi$},
    sort={correlation index}
}%
\newglossaryentry{corrIndexSet}{%
    type=notation,
    name={\ensuremath{\mathcal{C}_G}},
    description={The set all correlation indices in $G$},
    sort={correlation index set}
}%
\newglossaryentry{stabParity}{%
    type=notation,
    name={\ensuremath{\pi_G(\xi)}},
    description={The stabilizer parity of $\xi$ in $G$},
    sort={stabilizer parity}
}%
\newglossaryentry{XResourceSet}{%
    type=notation,
    name={\ensuremath{\mathcal{X}_G^{(c)}}},
    description={The set of all X-resources of $c$-correlation},
    sort={X-resource set}
}%
\newglossaryentry{Xchain}{%
    type=notation,
    name={\ensuremath{\mathcal{X}_G^{(\emptyset)}}},
    description={The set of X-chains},
    sort={X-chain}
}%
\newglossaryentry{XchainGroup}{%
    type=notation,
    name={\ensuremath{\braket{\Gamma_G}}},
    description={The X-chain group generated by its generating set $\Gamma_G$},
    sort={X-chain group}
}%
\newglossaryentry{corrGroupG}{%
    type=notation,
    name={\ensuremath{\braket{\mathcal{K}_G}}},
    description={The correlation group of $G$ generated by its generating set $\mathcal{K}_G$},
    sort={correlation group of G}
}%
\newglossaryentry{corrGroup}{%
    type=notation,
    name={\ensuremath{\braket{\mathcal{K}}}},
    description={A general correlation subgroup of $\braket{\mathcal{K}_G}$},
    sort={correlation group 1}
}%
\newglossaryentry{XchainStateBasic}{%
    type=notation,
    name={\ensuremath{\ket{i^{(x_{\Gamma})}}}},
    description={The basic X-chain state},
    sort={X-chain state: basic}
}%
\newglossaryentry{XchainState}{%
    type=notation,
    name={\ensuremath{\ket{\psi_{\emptyset}(\xi)}}},
    description={An X-chain state},
    sort={X-chain state: general}
}%
\newglossaryentry{corrStateSet}{%
    type=notation,
    name={\ensuremath{\Psi_{\mathcal{K}'}^{(\mathcal{K})}}},
    description={The set $\mathcal{K}$-correlation states $\ket{\psi_{\mathcal{K}}(\xi)}$ with $\xi\in \braket{\mathcal{K}'}/\braket{\mathcal{K}}$},
    sort={correlation state set}
}%
\newglossaryentry{biasDegree}{%
    type=notation,
    name={\ensuremath{\beta(\ket{G})}},
    description={The Z-bias degree of the graph state $\ket{G}$},
    sort={bias degree}
}%
\newglossaryentry{SchmidtRank}{%
    type=notation,
    name={\ensuremath{r_S}},
    description={The Schmidt rank},
    sort={Schmidt rank}
}%
\newglossaryentry{corrGroupToB}{%
    type=notation,
    name={\ensuremath{\braket{\mathcal{K}^{(B)}}}},
    description={The correlation subgroup, whose elements possess correlation index only in the subsystem $B$},
    sort={correlation group to B}
}%
\newglossaryentry{corrGroupAToA}{%
    type=notation,
    name={\ensuremath{\braket{\mathcal{K}_{A}^{(A)}}}},
    description={The correlation subgroup, whose elements and their corresponding correlation index are both in the subsystem $A$},
    sort={correlation group A to B}
}%
\newglossaryentry{corrGroupSimBToA}{%
    type=notation,
    name={\ensuremath{\braket{\mathcal{K}_{\sim B}^{(A)}}}},
    description={A special correlation subgroup},
    sort={correlation group sim B to A}
}%
\newglossaryentry{corrGroupASepB}{%
    type=notation,
    name={\ensuremath{\braket{\mathcal{K}^{A\rfloor B}}}},
    description={The correlation subgroup, whose corresponding correlation state are the $A|B$-separable Schmidt basis},
    sort={correlation group A sep B}
}%
\newglossaryentry{corrState}{%
    type=notation,
    name={\ensuremath{\ket{\psi_{\mathcal{K}}(\xi)}}},
    description={A $\mathcal{K}$-correlation state},
    sort={correlation state}
}%
\newglossaryentry{corrASepBState}{%
    type=notation,
    name={\ensuremath{\ket{\psi_{A\rfloor B}(\xi)}}},
    description={A $\mathcal{K}^{(A\rfloor B)}$-correlation state},
    sort={correlation state A sep B}
}%
\newglossaryentry{corrGroupAandB}{%
    type=notation,
    name={\ensuremath{\braket{\mathcal{K}^{A\leftharpoonup B}}}},
    description={The correlation subgroup obtained by the quotient group $\mathcal{K}_G/\mathcal{K}^{A\rfloor B}$},
    sort={correlation group A sep B}
}%
\newglossaryentry{corrAsepBStateOnA}{%
    type=notation,
    name={\ensuremath{\ket{\phi_{A\rfloor B}^{(A)}(\xi)}}},
    description={The state projected from the $A\rfloor B$-correlation state $\ket{\psi_{A\rfloor B}(\xi)}$ onto the subsystem $A$},
    sort={correlation state A sep B on A}
}%
\newglossaryentry{corrAsepBStateOnB}{%
    type=notation,
    name={\ensuremath{\ket{\phi_{A\rfloor B}^{(B)}(\xi)}}},
    description={The state projected from the $A\rfloor B$-correlation state $\ket{\psi_{A\rfloor B}(\xi)}$ onto the subsystem $B$},
    sort={correlation state A sep B on B}
}%
\newglossaryentry{BiEntGeoMeas}{%
    type=notation,
    name={\ensuremath{\mathcal{E}^{A|B}_g}},
    description={$A|B$-bipartite geometric measure of entanglement},
    sort={geometric measure}
}%
\newcommand{\midrule}{\colrule}
\newcommand{\bottomrule}{\botrule}
\newcolumntype{L}[1]{>{\raggedright\let\newline\\\arraybackslash\hspace{0pt}}m{#1}}
\newcolumntype{C}[1]{>{\centering\let\newline\\\arraybackslash\hspace{0pt}}m{#1}}
\newcolumntype{R}[1]{>{\raggedleft\let\newline\\\arraybackslash\hspace{0pt}}m{#1}}
\newlength{\colwidthA}\setlength{\colwidthA}{0pt} 
\newlength{\colwidthB}\setlength{\colwidthB}{0pt} 
\newlength{\colwidthC}\setlength{\colwidthC}{0pt} 
\newlength{\colwidthD}\setlength{\colwidthD}{0pt} 
\newlength{\colwidthE}\setlength{\colwidthE}{0pt} 
\newlength{\lastcolwidth} 
\newlength{\floatwidth}\setlength{\floatwidth}{0.9\textwidth} 
\def\graphscale{0.9} 
\begin{document}
\title{Group structures and representations of graph states}
\author{Jun-Yi Wu}
\affiliation{Institut f\"ur Theoretische Physik III, Heinrich-Heine-Universit\"at D\"usseldorf, D-40225 D\"usseldorf, Germany}
\author{Hermann Kampermann}
\affiliation{Institut f\"ur Theoretische Physik III, Heinrich-Heine-Universit\"at D\"usseldorf, D-40225 D\"usseldorf, Germany}
\author{Dagmar Bru\ss}
\affiliation{Institut f\"ur Theoretische Physik III, Heinrich-Heine-Universit\"at D\"usseldorf, D-40225 D\"usseldorf, Germany}

\keywords{graph states, bipartite entanglement, geometric measure, X-chains, graph state stabilizers}
\pacs{PACS number}

\begin{abstract}
A special configuration of graph state stabilizers, which contains only Pauli $\sigma_X$ operators, is studied. The vertex sets $\xi$ associated with such configurations are defined as what we call X-chains of graph states.
The X-chains of a general graph state can be determined efficiently.
They form a group structure such that one can obtain the explicit representation of graph states in the X-basis via the so-called X-chain factorization diagram.
We show that graph states with different X-chain groups can have different probability distributions of X-measurement outcomes, which allows one to distinguish certain graph states with X-measurements.
We provide an approach to find the Schmidt decomposition of graph states in the X-basis.
The existence of X-chains in a subsystem facilitates error correction in the entanglement localization of graph states.
In all of these applications, the difficulty of the task decreases with increasing number of X-chains.
Furthermore, we show that the overlap of two graph states can be efficiently determined via X-chains, while its computational complexity with other known methods increases exponentially.
\end{abstract}

%
%
%
%

\maketitle

\section{Introduction}
Graph states \cite{BriegelRaussendorf2001-FirstGrSt, RaussendorfBriegel2001-05,RaussendorfBriegel2002-10,RaussendorfBrowneBriegel2003-08,RaussendorfPHDThesis2003,HeinEisertBriegel2004-06, GraphStateReviews2006} represent specific multipartite entangled quantum systems.
They are an important resource for measurement-based quantum computation:
there, the multipartite entanglement of cluster states (a special class of graph states) is consumed by local measurements on subsystems.
Depending on the measurement outcomes, local unitary transformations of the remaining systems are performed. In this way, certain quantum operations can be implemented.
\par
Graph states can be represented in the stabilizer formalism as eigenstates of certain tensor products of Pauli $\sigma_X$- and $\sigma_Z$-operators (the graph state stabilizers).
The explicit structure of the stabilizer operators depends on the structure of the underlying graph.
The stabilizers form a group (under multiplication), which is generated from $n$ generators, where $n$ is the number of vertices of the graph.
\par
In this paper, we will discuss what we call X-chains.
X-chains are subsets of vertices of a given graph which correspond to graph state stabilizers that consist \emph{only} of Pauli $\sigma_X$-operators.
We will show that these X-chains form a group.
Not every graph contains an X-chain.
However, it will be shown that if a graph does contain X-chains, this fact can be used as an efficient tool to determine essential properties of the corresponding graph state, such as its overlap with other graph states, its entanglement characteristics, and the existence of error correcting code words in subsystems of graph states.
Note that the overlap of two graph states cannot be determined efficiently to date. The X-chains provide an efficient method to solve this problem.
\par
While graph states are usually given in the Z-basis, the concepts and methods developed in this paper show that it is often favorable to represent graph states in the X-basis, in particular when one wants to study overlaps of graph states or determine their entanglement properties.
The reason for this fact is that for all graph states originating from the same number of vertices, the probability distribution of outcomes of local Z-measurements is uniform,
while it is nonuniform for outcomes of local X-measurements.
Different X-measurement outcomes of two graph states reflect their difference in the X-chain groups,
as the existence of an X-chain in a graph state implies vanishing probability of certain X-measurement outcomes.
Conversely, X-chain groups of graph states determine their representation in the X-basis.
\par
In the present paper we will focus on introducing the concept of X-chains, illustrating it with examples, and presenting some applications.
The X-chain group of a given graph state can be efficiently determined, the search of X-chains in a given graph state will be studied in detail elsewhere \cite{WuKampermannBruss2015-XChainAlgo} and a MATHEMATICA package is available in the Supplemental
Material \cite{XchainMpackage_Wu}.
%
%
\par
This paper is organized as follows.
In section \ref{sec::basic_concepts}, we review the essential concepts of graph theory and graph states.
In section \ref{sec::representation_of_graph_states}, we review the representation of graph states in the Z-basis and point out its disadvantage in distinguishing graph states.
Then we introduce X-chains and study their properties in section \ref{sec::X-chains}.
The representation of graph states in the X-basis is derived via the so-called X-chain factorization in section \ref{sec::X-chain_factorization_of_graph_states}, where we show how X-chain groups feature the X-measurement outcomes on graph states.
In section \ref{sec::application_of_X-chains}, we discuss several applications of X-chains, namely the calculation of the overlap of two graphs states (section \ref{sec::graph_state_overlap}), the Schmidt decomposition of graph states in the X-basis (section \ref{sec::schmidt_decomposition}) and the entanglement localization \cite{PoppVMCirac2005-LocalizableEnt} of graph states against errors (section \ref{sec::unilateral_projection_against_errors}).
The proofs are presented in Appendix \ref{apdx::proofs_for_X-chains}. A list of notations and symbols is given in Appendix \ref{apdx::list_of_notations}.
\subsection{Basic concepts}
\label{sec::basic_concepts}
Here we review the concepts of graphs \cite{Diestel_GraphTheory} and graph states \cite{HeinEisertBriegel2004-06,GraphStateReviews2006}, and introduce the notation used in the main text.\par
\bigskip
\paragraph{Graph theory \cite{Diestel_GraphTheory}:}
\hiddengls{graph}
\hiddengls{vertexOfG}
\hiddengls{edgeOfG}
\hiddengls{neighborhoodV}
\hiddengls{inducedG}
A \emph{graph} $G=(V,E)$ consists of $n$ vertices $V$ and $l$ edges $E$.
The \emph{vertices}, denoted by $V_{G}=\left\{  v_{1},...,v_{n}\right\}  $, are depicted as dots and represent locations, particles etc.
The \emph{edges}, denoted by $E_{G}=\left\{  e_{1},...,e_{l}\right\}  $, describe a relation network between vertices.
A symmetric relation between two vertices $v_{1}$ and $v_{2}$, e.g. a two-way bridge between two islands, can be represented by the vertex set $e=\{v_{1},v_{2}\}$, which is called \emph{undirected edge}.
Let $\xi_{a},\xi_{b}\subseteq V_{G}$ be two subsets of
$V_{G}$, then the edges between $\xi_{a}$ and $\xi_{b}$ are the edges
$e=\{v_{a},v_{b}\}$, which have one vertex $v_{a}\in\xi_{a}$ and another
vertex $v_{b}\in\xi_{b}$. The set of these edges is denoted by $E_{G}(\xi
_{a}:\xi_{b})$. A vertex $v_{1}$ is a \emph{neighbor} of $v_{2}$, if they are connected
by an edge. The set of all neighbors of $v$, called the neighborhood of $v$,
is denoted as $N_{v}$.
In Table \ref{table::different_types_of_graphs} we list two of the relevant types of
graphs, which will be considered in the main text.
\begin{table}[t]
  \centering
\setlength{\colwidthA}{0.2\columnwidth}
\setlength{\colwidthB}{0.15\columnwidth}
\setlength{\lastcolwidth}{0.9\columnwidth-\colwidthA-\colwidthB}

\begin{tabular}[c]{|C{\colwidthA}|C{\colwidthB}|m{\lastcolwidth}|}
    \toprule
    Name of graph & Graph & Definition\\
    \midrule
    \emph{Star graph} $S_{n}$ &
    \includegraphics[width=\graphscale\linewidth]{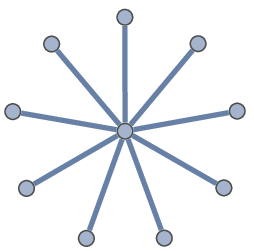} &
    Graphs, for which the center vertex has $n-1$ neighbors and all the others have the center vertex as their only neighbor.
    \\
    \midrule
    \emph{Cycle graph} $C_{n}$ &
    \includegraphics[width=\graphscale\linewidth]{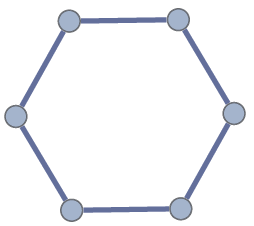} &
    Graphs, for which every vertex has degree $2$. They are closed paths.
    \\
    \bottomrule
\end{tabular} 
  \caption{The graphs considered in this paper.}
  \label{table::different_types_of_graphs}
\end{table}
\par
A graph $F$ is a \emph{subgraph} of $G$, if its vertices and edges are subsets
of the vertex set and the edge set of $G$, respectively, i.e., $V_{F}\subseteq
V_{G}$ and $E_{F}\subseteq E_{G}$.
A subgraph induced by a vertex set $\xi\subseteq V_{G}$ is defined as the
graph
\begin{equation}
G[\xi]:=(\xi,E_{G}(\xi:\xi)),
\end{equation}
which has the edge set $E_{G}(\xi:\xi)$ consisting of edges between vertices
inside the set $\xi$.
\par
\paragraph{Binary notation:}
\hiddengls{iState}
In this paper, we use binary numbers to denote a subset of vertices of
graphs. Let $G$ be a graph with vertices $V_{G}=\left\{  1,...,n\right\}  $
and $\xi\subseteq V_{G}$ be a vertex subset. We denote the binary number of
$\xi$ as%
\begin{equation}
i^{(\xi)}:=i_{1}\cdots i_{n},
\end{equation}
with
\[
i_{j}=\left\{
\begin{array}
[c]{cc}%
0 & ,j\not \in \xi\\
1 & ,j\in\xi
\end{array}
\right.  .
\]
For example, in a 4-vertex graph, $0110=i^{\left\{  2,3\right\}  }$.
The tensor product of Pauli-operators $\sigma_{\alpha}$ with $\alpha\in\{x,y,z\}$ is denoted as%
\begin{equation}
\sigma_{\alpha}^{(\xi)}:=\sigma_{\alpha}^{i_{1}\left(  \xi\right)  }\otimes\cdots \otimes\sigma_{\alpha}^{i_{n}\left(  \xi\right)  },
\end{equation}
with $\sigma_{\alpha}^{0}=\id,\sigma_{\alpha}^{1}=\sigma_{\alpha}$. For example, for $n=4$,
$\sigma_{\alpha}^{\left\{  2,3\right\}  }:=\id\otimes\sigma_{\alpha}\otimes\sigma_{\alpha}\otimes\id$.

\section{Representation of graph states}
\label{sec::representation_of_graph_states}
\hiddengls{gStateGenerator}
\hiddengls{inducedS}
\hiddengls{stabGroup}
\hiddengls{powerset}
We review the representation of graph states \cite{HeinEisertBriegel2004-06,GraphStateReviews2006}.
A given graph with $n$ vertices has a
corresponding quantum state by associating each vertex $v_i$ with a graph
state \emph{stabilizer generator} $g_{i}$,
\begin{equation}
g_{i}=\sigma_{X}^{(i)}\sigma_{Z}^{(N_{i})}.
\end{equation}
Here, $N_{i}$ is the neighborhood of the vertex $v_i$. A graph state $|G\rangle$
is the $n$-qubit state stabilized by all $g_{i}$, i.e.,
\begin{equation}
g_{i}|G\rangle=|G\rangle,\text{for all }i=1,...,n.
\end{equation}
The $n$ graph state stabilizer generators, $g_{i}$, generate the whole stabilizer group
$\left(  \mathcal{S}_{G},\cdot\right)  $ of $|G\rangle$ with multiplication as
its group operation. The group $\mathcal{S}_{G}$ is Abelian and contains
$2^{n}$ elements. These $2^{n}$ stabilizers uniquely represent a graph state
on $n$ vertices.
Let us define the``induced stabilizer'', which is uniquely associated with a given vertex subset.

\begin{definition}[Induced stabilizer]\label{def::induced_stabilizer}
\hiddengls{inducedS}
Let $G$ be a graph on vertices $V_{G}=\left\{
v_{1},v_{2},\cdots,v_{n}\right\}  $. Let $\xi$ be a subset of $V_{G}$. We call the product of all $g_{i}$ with $i\in\xi$, i.e.
\begin{equation}
s_{G}^{(\xi)}:=\prod_{i\in\xi}g_{i}, \label{eq::induced_stabilizer}%
\end{equation}
the \emph{$\xi$-induced stabilizer} of the graph state $|G\rangle$. Here,
$g_{i}$ is the graph state stabilizer generator of $|G\rangle$ associated with
the $i$-th vertex.
\end{definition}

Since this $\xi$-induction map is bijective,
it maps the group $\left(  \mathcal{P}\left(  V_{G}\right)  ,\Delta\right)  $ into
the stabilizer group $\left(  \mathcal{S}_{G},\cdot\right)  $,
where $\mathcal{P}\left(  V_{G}\right)  :=\left\{  \xi\subseteq V_{G}\right\}
$ is the power set (the set of all subsets) of $V_{G}$ and $\Delta$ is the symmetric difference operation
acting on two sets as $\xi_1\Delta\xi_2=(\xi_1\setminus\xi_2)\cup(\xi_2\setminus\xi_1)$.

\begin{proposition}
[Isomorphism of $\xi$-induction]%
\label{prop::isomorphism_of_vertex-induction_operation}
\Needspace*{7\baselineskip}
Let $\left(
\mathcal{S}_{G},\cdot\right)  $ be the stabilizer group of a graph state
$|G\rangle$ and $\mathcal{P}(V_{G})$ be the power set of the vertex set of $G$.
The vertex-induction operation $s_{G}^{(\xi)}$ is a group isomorphism between
$(\mathcal{P}(V_{G}),\Delta)$ and $(\mathcal{S}_{G},\cdot)$, i.e.%
\begin{equation}
(\mathcal{P}(V_{G}),\Delta)\overset{s_{G}^{(\xi)}}{\sim}\left(  \mathcal{S}%
_{G},\cdot\right)  ,
\end{equation}
where $\Delta$ is the symmetric difference operation.

\begin{proof}
See Appendix \ref{sec::graph_states_proofs}.
\end{proof}
\end{proposition}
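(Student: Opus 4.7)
The plan is to verify the two defining properties of a group isomorphism separately: first that $s_G^{(\cdot)}$ is a group homomorphism from $(\mathcal{P}(V_G),\Delta)$ to $(\mathcal{S}_G,\cdot)$, then that it is bijective.

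For the homomorphism part, I would start from the definition $s_G^{(\xi)} := \prod_{i\in\xi}g_i$ and compute $s_G^{(\xi_1)}\cdot s_G^{(\xi_2)}$ directly. Two facts that are already established in the excerpt or are standard for graph state stabilizers will do the work: the group $\mathcal{S}_G$ is Abelian, and each generator satisfies $g_i^2=\mathbb{I}$ because $g_i$ is a tensor product of Pauli operators. Using commutativity, I can reorder the product so that every $g_i$ with $i\in\xi_1\cap\xi_2$ appears twice and collapses to $\mathbb{I}$, while every $g_i$ with $i\in\xi_1\Delta\xi_2$ appears once. This yields $s_G^{(\xi_1)}\cdot s_G^{(\xi_2)}=\prod_{i\in\xi_1\Delta\xi_2}g_i=s_G^{(\xi_1\Delta\xi_2)}$, which is exactly the homomorphism identity.

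For bijectivity I would argue surjectivity and injectivity in turn. Surjectivity is essentially by definition of $\mathcal{S}_G$: it is the group generated by the $g_i$, and since $\mathcal{S}_G$ is Abelian with $g_i^2=\mathbb{I}$, every element can already be written as an unordered product over some index subset $\xi\subseteq V_G$, i.e.\ as $s_G^{(\xi)}$. For injectivity it suffices to show the kernel is trivial, i.e.\ $s_G^{(\xi)}=\mathbb{I}$ forces $\xi=\emptyset$. I would examine the $\sigma_X$-pattern on each qubit in the product: since $g_i=\sigma_X^{(i)}\sigma_Z^{(N_i)}$ is the only generator carrying a $\sigma_X$ on qubit $i$ (the others contribute at most $\sigma_Z$ there), the $\sigma_X$ factor on qubit $i$ in $s_G^{(\xi)}$ survives iff $i\in\xi$. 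For the product to equal the identity, every such factor must vanish, forcing $\xi=\emptyset$.

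Combined, the homomorphism, surjectivity, and triviality of the kernel establish the isomorphism; as a consistency check this also reproduces the counting $|\mathcal{S}_G|=|\mathcal{P}(V_G)|=2^n$ quoted earlier. The main subtlety I anticipate is the injectivity argument: one has to be careful to separate the $\sigma_X$- from the $\sigma_Z$-content, because the $\sigma_Z$ parts on neighborhoods can cancel against each other in nontrivial ways, whereas the $\sigma_X$ positions cannot. Isolating the $\sigma_X$-support is the cleanest route and is the only place where the specific form $g_i=\sigma_X^{(i)}\sigma_Z^{(N_i)}$ (rather than just general stabilizer algebra) really enters the proof.
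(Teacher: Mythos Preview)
Your proof is correct and follows essentially the same route as the paper: establish the homomorphism via commutativity and $g_i^2=\mathbb{I}$, then conclude the isomorphism from triviality of the kernel. Your argument is in fact more detailed than the paper's, which simply asserts that the kernel is $\{\emptyset\}$ without giving your $\sigma_X$-support reasoning and does not spell out surjectivity explicitly.
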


The summation operation maps the stabilizer group $\mathcal{S}_{G}$ to its
stabilized space, i.e. the density matrix of the graph state $|G\rangle$ \cite{GraphStateReviews2006},
\begin{equation}
\mathcal{S}_{G}\overset{\Sigma}{\longrightarrow}|G\rangle\langle G|=\frac
{1}{2^{n}}\sum_{s\in S_{G}}%
s.\label{eq::graph_state_density_matrix_n_stabilizers}%
\end{equation}
Hence there also exists an operation mapping the group $\mathcal{P}(V_{G})$ to
graph states%
\begin{equation}
\mathcal{P}(V_{G})\overset{\Sigma\circ s_{G}^{(\xi)}}{\longrightarrow
}|G\rangle\langle G|=\frac{1}{2^{n}}\sum_{\xi\subseteq V_{G}}s_{G}^{(\xi)}
=\prod_{i=1}^{n}\frac{1+g_{i}}{2}.
\label{eq::stabilizer_projected_graph_state}
\end{equation}
%
\par
This is a well-known representation of graph states \cite{GraphStateReviews2006}.
The representation of a graph state in the computational Z-basis $|i_{Z}\rangle$ \cite{VandenNest2005-PhDthesis} is given by
\begin{equation}
\mathcal{P}(V_{G})\overset{\Sigma\circ s_{G}^{(\xi)}\text{ in }|i_{Z}\rangle
}{\longrightarrow}|G\rangle=\frac{1}{2^{n/2}}\sum_{i\in\{0,1\}^{\otimes n}%
}\left(  -1\right)  ^{\left\langle i,i\right\rangle _{A_{G}}}|i_{Z}%
\rangle.\label{eq::graph_states_in_Z-basis}%
\end{equation}
Here $\sigma_{z}^{\otimes n}|i_{Z}\rangle=\left(  -1\right)
^{\left\vert i\right\vert }|i_{Z}\rangle$, where $\left\vert i\right\vert $
is the Hamming weight of $i$.
\hiddengls{adjMatG}
$A_{G}$ is the adjacency matrix of the graph
$G$, and $\braket{i,i}_{A_{G}}=(i_{1},...,i_{n})A_{G}(i_{1},...,i_{n})^{\mathrm{T}}%
$. For all graph states with $n$ vertices, the probability amplitudes of
Z-basis states $\langle i_{Z}|G\rangle$ are homogenously distributed for all
$|i_{Z}\rangle$ up to a phase $-1$, i.e. $|\braket{i_Z|G}|=1/2^{n/2}$.
Therefore graph states with the same vertex set all
have the equivalent probability distribution of local $\sigma_Z$-measurement outcomes.
This means that the Z-basis representation conceals the inner structure of graph states.
\bigskip
\par
Different from the Z-basis, the representation of graph states in the computational
X-basis $|i_{X}\rangle$ ( i.e. $\sigma_{X}^{\otimes n}|i_{X}\rangle=\left(
-1\right)  ^{\left\vert i\right\vert }|i_{X}\rangle$) reveals the structure of graph states to a certain degree.
One aim in this paper is to find an efficient algorithm, i.e. a
mapping from $\mathcal{P}(V_{G})$ to $|G\rangle$, to represent graph states in
the computational X-basis:%
\begin{equation}
\mathcal{P}(V_{G})\overset{?}{\longrightarrow}|G\rangle \text{ in }|i_{X}\rangle.
\label{eq::aim_of_paper}%
\end{equation}
In the rest of the paper, we denote the X-basis $|i_{X}\rangle$ as $|i\rangle$;
that is, $|0\rangle=|+_{Z}\rangle=(\ket{0_Z}+\ket{1_Z})/\sqrt{2}$ and $|1\rangle=|-_{Z}\rangle
=(\ket{0_Z}-\ket{1_Z})/\sqrt{2}$.

\section{X-chains and their properties}
\label{sec::X-chains}
%
The commutativity of the measurement setting with graph state stabilizers determines
whether one can obtain information about a graph state in the laboratory.
Graph state stabilizers that commute with $\sigma_{X}$-measurements are the stabilizers consisting of solely $\sigma_{X}$ operators.
They are the key ingredient in the representation of graph states in the X-basis.
We will call the vertex sets $\xi$ inducing such configurations \emph{X-chains} of graph states.
In this section, the concept of X-chains will be introduced and their properties will be investigated.
\bigskip
\par
The number of $\sigma_Z$-operators in the graph state stabilizer $s_{G}^{(\xi)}$ depends on the neighborhoods within the vertex set $\xi$.
If a vertex $v$ has an even number of neighbors within $\xi$, then the Pauli operator $\sigma_{Z}^{(v)}$ appears an even number of times in $s_{G}^{(\xi)}$, such that the product becomes the identity.
Therefore to find the X-chain configurations of graph states, one needs to study the symmetric difference of neighborhoods within the vertex set $\xi$, which we define as the \emph{correlation index} of $\xi$ as follows.
\begin{definition}[Correlation index]
\label{def::correlation_index}
\Needspace*{6\baselineskip}%
\hiddengls{corrIndex}
Let $\xi$ be a vertex subset of a graph $G$. Its correlation index is defined as the symmetric difference of neighbourhoods within $\xi$,%
  \begin{equation}
    c_{\xi}:=N_{v_{1}}\Delta N_{v_{2}}\cdots\Delta N_{v_{k}},%
  \end{equation}
where $N_{v_i}$ is the neighbourhood of $v_i$ and $\xi=\{v_1,...,v_k\}$.
\end{definition}%
\noindent
The name ``correlation index'' will become clearer in Theorem \ref{theorem::graph_states_as_X-factorized_states} and refers to the fact that for vanishing correlation index the corresponding stabilized state is factorized.
(These states are called X-chain states in Def. \ref{def::X-chain_states_K-correlation_states}.)
Note that the set $c_\xi$ occurs as an ``index'' for the $\sigma_Z$ operator of the induced stabilizer $s_G^{(\xi)}$ (see Proposition \ref{prop::induced_stabilizer_math_formula}).
\par
Besides the correlation index, due to the anticommutativity of $\sigma_{X}$ and
$\sigma_{Z}$, the graph state stabilizers also depend on the so-called
\emph{stabilizer parity} of $\xi$.
\begin{definition}[Stabilizer parity]
\label{def::stabilizer_parity}
\Needspace*{6\baselineskip}%
\hiddengls{stabParity}%
Let $\xi$ be a vertex subset of a graph $G$.
Its stabilizer parity in $\ket{G}$ is defined as the parity of the edge number $\left\vert E_{G[\xi]}\right\vert $ of the $\xi$-induced subgraph $G[\xi]$
  \begin{equation}
    \pi_{G}\left(  \xi\right)  :=(-1)^{\left\vert E_{G[\xi]}\right\vert}.%
    \label{eq::G-parity_formula}%
  \end{equation}
\end{definition}%
\noindent
The stabilizer parity of $\xi$, $\pi_{G}\left(  \xi\right)$ is positive if the edge number $E(G[\xi])$ is even, otherwise it is negative.
The explicit form of the induced stabilizers is given in the following proposition.
\begin{proposition}
[Form of the induced stabilizer]%
\label{prop::induced_stabilizer_math_formula}
\Needspace*{6\baselineskip}%
Let $\xi$ be a vertex subset of a graph $G$. The $\xi$-induced stabilizer (see Def. \ref{def::induced_stabilizer}) of a graph
state $|G\rangle$ is given by
\begin{equation}
s_{G}^{(\xi)}=\pi_{G}\left(  \xi\right)  \sigma_{X}^{(\xi)}\sigma_{Z}%
^{(c_{\xi})},\label{eq::induced_stabilizer_math_formula}%
\end{equation}
where $c_{\xi}$ is the \emph{correlation index} of $\xi$
and $\pi_{G}\left(\xi\right)$ is the stabilizer parity of $\xi$.
\begin{proof}
See Appendix \ref{sec::graph_states_proofs}.
\end{proof}
\end{proposition}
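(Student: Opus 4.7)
The plan is to prove the formula by induction on the size $k=|\xi|$, writing $\xi=\{v_1,\ldots,v_k\}$ and expanding $s_G^{(\xi)}=\prod_{i=1}^k g_{v_i}$ with $g_{v_i}=\sigma_X^{(v_i)}\sigma_Z^{(N_{v_i})}$. The base case $\xi=\emptyset$ is immediate since $s_G^{(\emptyset)}=\id$, $\pi_G(\emptyset)=1$, and $c_\emptyset=\emptyset$. For the inductive step I would assume $s_G^{(\xi')}=\pi_G(\xi')\,\sigma_X^{(\xi')}\sigma_Z^{(c_{\xi'})}$ for $\xi'=\{v_1,\ldots,v_{k-1}\}$ and compute $s_G^{(\xi)}=s_G^{(\xi')}g_{v_k}$.

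The calculation splits into three ingredients. First, since $v_k\notin\xi'$, the Pauli-$X$ parts merge trivially via $\sigma_X^{(\xi')}\sigma_X^{(v_k)}=\sigma_X^{(\xi)}$ (no sign, as they act on disjoint qubits). Second, the Pauli-$Z$ parts combine via $\sigma_Z^{(c_{\xi'})}\sigma_Z^{(N_{v_k})}=\sigma_Z^{(c_{\xi'}\Delta N_{v_k})}=\sigma_Z^{(c_\xi)}$, using $\sigma_Z^2=\id$ to turn set union into symmetric difference, which matches the recursive definition $c_\xi=c_{\xi'}\Delta N_{v_k}$. Third, commuting $\sigma_X^{(v_k)}$ past $\sigma_Z^{(c_{\xi'})}$ to its right place produces a sign $(-1)^{[v_k\in c_{\xi'}]}$, which is the only nontrivial anticommutation since all other qubits are disjoint from $v_k$.

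The key observation that closes the induction is the identification of this sign with an edge count. By Definition \ref{def::correlation_index}, $v_k\in c_{\xi'}=N_{v_1}\Delta\cdots\Delta N_{v_{k-1}}$ iff $v_k$ appears in an odd number of the $N_{v_i}$, i.e.\ iff $v_k$ has an odd number of neighbors in $\xi'$; equivalently, $[v_k\in c_{\xi'}]\equiv|E_G(v_k:\xi')|\pmod 2$. Since every edge of $G[\xi]$ either lies in $G[\xi']$ or connects $v_k$ to $\xi'$, we obtain $|E_{G[\xi]}|=|E_{G[\xi']}|+|E_G(v_k:\xi')|$, hence
\begin{equation}
\pi_G(\xi')\cdot(-1)^{[v_k\in c_{\xi'}]}=(-1)^{|E_{G[\xi']}|+|E_G(v_k:\xi')|}=\pi_G(\xi),
\end{equation}
which matches the claimed prefactor.

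The only delicate point is book-keeping the anticommutation signs: one must be careful that no spurious phases arise from the $\sigma_X^{(\xi')}$ block when $g_{v_k}$ is appended, which is guaranteed because $v_k\notin\xi'$ makes $\sigma_X^{(v_k)}$ commute with $\sigma_X^{(\xi')}$ on disjoint qubits. Apart from this, the proof is a direct computation, so I expect no serious obstacle. An alternative non-inductive argument would pair up the $\sigma_Z$ factors by qubit and count edges inside $\xi$ directly, but the inductive form makes the role of $\pi_G$ and $c_\xi$ transparent and aligns cleanly with Definitions \ref{def::correlation_index} and \ref{def::stabilizer_parity}.
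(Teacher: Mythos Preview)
Your proof is correct and follows essentially the same approach as the paper's. The paper writes out the full product $\sigma_X^{(j_1)}\sigma_Z^{(N_{j_1})}\cdots\sigma_X^{(j_m)}\sigma_Z^{(N_{j_m})}$ and shifts each $\sigma_X^{(j_i)}$ to the left in turn, summing the resulting degree-counts to obtain $|E(G[\xi])|$; your induction on $|\xi|$ is the same telescoping argument packaged one step at a time, with your identity $|E_{G[\xi]}|=|E_{G[\xi']}|+|E_G(v_k:\xi')|$ playing exactly the role of the paper's cumulative degree sum.
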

\begin{figure*}[t!]
\subfloat[]{
\includegraphics[width=0.15\textwidth]{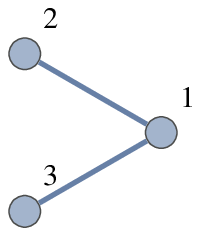}
\label{fig::S3_labeled}
}
\subfloat[]{
\includegraphics[width=0.8\textwidth]{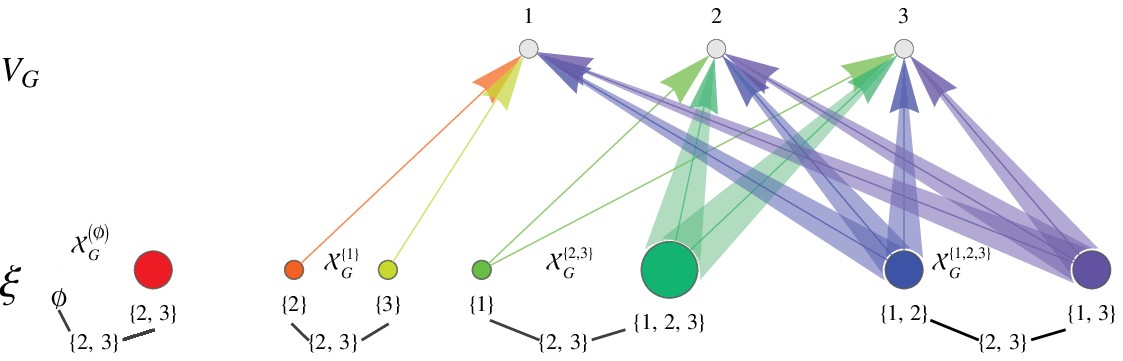}
\label{fig::correlation_index_and_x-resource_InStr}
\label{fig::S3_X-chain_illustation}
}
\newline
\subfloat[]{
\begin{tabular}[c]{|c|c|c|c|c|}
\toprule
$\xi$ & $\emptyset$ and $\{2,3\}$ & $\{2\}$ and $\{3\}$ & $\{1\}$ and $\{1,2,3\}$ & $\{1,2\}$ and
$\{1,3\}$ \\
\midrule
$c_{\xi}\in\mathcal{C}_G$ & $\emptyset$ & $\{1\}$ & $\{2,3\}$ & $\{1,2,3\}$\\
\midrule
$|E_{G[\xi]}|$ & $0$ & $0$ & $0$ and $2$ & $1$ \\
\midrule
$\pi_{G}\left(  \xi\right)  $ & $1$ & $1$ & $1$ & $-1$ \\
\midrule
$s_{G}^{(\xi)}$ & $\id$, $\sigma_{X}^{\{2,3\}}$ & $\sigma_{X}^{\{2\}}\sigma_{Z}^{\{1\}}$, $\sigma_{X}^{\{3\}}\sigma_{Z}^{\{1\}}$
& $\sigma_{X}^{(1)}\sigma_{Z}^{\{2,3\}}$, $\sigma_{X}^{\{1,2,3\}}\sigma_{Z}^{\{2,3\}}$
& $-\sigma_{x}^{\{1,2\}}\sigma_{Z}^{\{1,2,3\}}$, $-\sigma_{X}^{\{1,3\}}\sigma_{Z}^{\{1,2,3\}}$ \\
\midrule
$\xi\in\langle\mathcal{K}_G\rangle$ & $\emptyset$ & $\{2\}$ & $\{1\}$ & $\{1,2\}$ \\
\midrule
 & \multicolumn{4}{c|}{ $\Gamma_G=\{\{2,3\}\}$, $\mathcal{K}_G=\{\{1\}, \{2\}\}$ }\\
\bottomrule
\end{tabular}
\label{fig::correlation_index_and_x-resource_table}
}
\caption{\colorfig Correlation indices and X-resources:
(a) $3$-vertex star graph.
(b) The mapping from X-resources to correlation indices is illustrated in the incidence structure
\cite{Rosen1999combinatorial} of the graph $S_3$. The upper line is
the correlation index, while the lower line are the vertex subsets (including
the empty set). The arrows go from lower vertex subsets $\xi$ to the
upper vertices corresponding to the nonzero entries of their correlation
index $c_{\xi}$. For example, the vertex set $\{1,2,3\}$ points to the vertices $\{2,3\}$, indicating that the correlation index of $\{1,2,3\}$ is $c_{\{1,2,3\}}=\{2,3\}$. In particular, the vertex set $\emptyset$
and $\{2,3\}$ are X-chains (see Def. \ref{def::X-chain}), since their correlation index is $\emptyset$.
The resources in the sets $\mathcal{X}_G^{(\emptyset)}=\{\emptyset,\{2,3\}\}$, $\mathcal{X}_G^{\{1\}}=\{\{2\},\{3\}\}$, $\mathcal{X}_G^{\{2,3\}}=\{\{1\},\{1,2,3\}\}$ and $\mathcal{X}_G^{\{1,2,3\}}=\{\{1,2\},\{1,3\}\}$ are all ``connected'' by $\{2,3\}$ via the symmetric difference operation $\Delta$.
(c) Grouping of vertex subsets according to the correlation index. $\Gamma_G$ and $\mathcal{K}_G$ are the X-chain group generators and correlation group generators, respectively.
}
\label{fig::correlation_index_and_x-resource}
\end{figure*}
Let us illustrate these concepts with an example.
The star graph state $|S_{3}\rangle$ is shown in Fig. \ref{fig::S3_labeled}.
Its stabilizers can be represented in the following binary matrix:
\[
\left(
\begin{array}
[c]{ccc|ccc}%
0 & 0 & 0 & 0 & 0 & 0\\
1 & 0 & 0 & 0 & 1 & 1\\
0 & 1 & 0 & 1 & 0 & 0\\
0 & 0 & 1 & 1 & 0 & 0\\
1 & 1 & 0 & 1 & 1 & 1\\
1 & 0 & 1 & 1 & 1 & 1\\
0 & 1 & 1 & 0 & 0 & 0\\
1 & 1 & 1 & 0 & 1 & 1
\end{array}
\right)  ,
\]
in which each row represents a stabilizer.
The bit strings on the left hand side
of the divider are the possible vertex sets $\xi$ occurring as a superscript for the Pauli
$\sigma_{X}$ operators in Eq. \eqref{eq::induced_stabilizer_math_formula}, while the right hand side is their correlation
indices $c_{\xi}$ occurring as a superscript for the Pauli $\sigma_{Z}$ operators.
This is the so-called binary representation of graph states
\cite{Gottesman1997-phdQEC, Gottesman1996-QECSQHB,
CalderbankRSSloane1997-QECOrthGeo}.
We interpret this binary representation as an incidence structure
\cite{Rosen1999combinatorial} in Fig.
\ref{fig::correlation_index_and_x-resource_InStr}, in which the vertex sets
$\xi$ are depicted as the nodes in the lower row, while the upper row
interprets the correlation indices $c_{\xi}$ . In the example of
$|S_{3}\rangle$, one observes that the correlation indices $c_{\xi}$ do not
cover all possible $3$-bit binary numbers. The vertex subsets are regrouped
according to their correlation indices in Fig.
\ref{fig::correlation_index_and_x-resource_table}. The concept of regrouping
is introduced via the definition of the so-called \emph{X-resources} as follows.
\begin{definition}
[X-resources of correlation indices]%
\label{def::X-resouces_of_correlation_indices}
\Needspace*{8\baselineskip}
We denote the set of \emph{correlation indices} of a graph $G$ as%
\[
\mathcal{C}_{G}:=\left\{  c_{G}(\xi):\xi\subseteq V_{G}\right\}  .
\]
If a vertex set $\xi$ has correlation index $c$, i.e. $c_{G}(\xi)=c$, then we
call $\xi$ an \emph{(X-)resource of $c$-correlation} in $G$. The
\emph{(X-)resource set of $c$-correlation} is written as
\begin{equation}
\mathcal{X}_{G}^{\left(  c\right)
}:=\left\{  \xi\subseteq V_{G}:c_{G}(\xi)=c\right\}  .
\end{equation}
\hiddengls{corrIndexSet}\hiddengls{XResourceSet}%
\end{definition}%
\noindent
Since in the example of $\ket{S_3}$ the correlation index of $\{2,3\}$ is $\emptyset$, each correlation index $c\in\mathcal{C}_{S_3}$ has two X-resources $\xi^{(c_1)}$ and $\xi^{(c_2)}$ with $\xi^{(c_1)}=\xi^{(c_2)}\Delta\{2,3\}$.
The number of X-resources of $\ket{S_3}$ is $2^3$.
Therefore the graph state $|S_{3}\rangle$ generates $4$ correlation indices corresponding to $4$ binary numbers.
The other $4$ correlation indices are excluded due to the existence of the non-trivial $\emptyset$-correlation resource $\{2,3\}$.
This non-trivial $\emptyset$-correlation resource decreases
the correlations of the graph state in the X-basis.
Explicitly a non-trivial $\emptyset$-correlation resource induces a stabilizer consisting solely of $\sigma_{X}$ operators as follows.
\[
s_{G}^{(\xi)}=\pi_{G}\left(  \xi\right)  \sigma_{X}^{(\xi)}\text{, for all
}\xi\in\mathcal{X}_{G}^{\left(  \emptyset\right)  }.
\]
We will call such vertex sets \emph{X-chains}.
\begin{definition}
[X-chains]\label{def::X-chain}
\Needspace*{3\baselineskip}
Let $|G\rangle$ be a graph state. An \emph{X-resource of $\emptyset$-correlation} in $G$ is
called an \emph{X-chain} of $G$. The set of all X-chains is denoted as
$\mathcal{X}_{G}^{\left(  \emptyset\right)  }$.
\hiddengls{Xchain}
\end{definition}
The X-chains of a graph state $\ket{G}$ can be determined efficiently with standard linear algebra tools \cite{WuKampermannBruss2015-XChainAlgo}.
As examples, the X-chains of the graph states $|S_{3}\rangle$, $|S_{4}\rangle$ and $|C_{3}\rangle$ are given in Table \ref{table::x-chain_group_of_graph_state_eg}.
Besides, the X-chains for certain types
of graph states (i.e. linear graph states $\ket{L_n}$, cycle graph states $\ket{C_n}$, complete graph states $\ket{K_n}$ and star graph states $\ket{S_n}$) are studied in
\cite{WuKampermannBruss2015-XChainAlgo}.
A MATHEMATICA package is provided for finding X-chains in general graph states; see Supplemental Material \cite{XchainMpackage_Wu}.
\begin{table}[t]

\newlength{\Xgraphwidth}
\setlength{\Xgraphwidth}{0.18\columnwidth}
\newlength{\thirdcolwidth}
\setlength{\lastcolwidth}{0.9\columnwidth-3\Xgraphwidth-\thirdcolwidth}

\def\gwidth{0.7}
\begin{tabular}
[c]{
|C{\Xgraphwidth}
|C{2\Xgraphwidth}
|C{\lastcolwidth}
|
}%
\toprule

Graph $G$ & X-chains $\braket{\Gamma_G}$ & X-chain generators $\Gamma_G$\\
\midrule
\includegraphics[width=\linewidth]{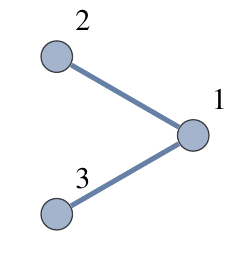}
\newline
$S_3$
&%
\includegraphics[width=0.9\linewidth]{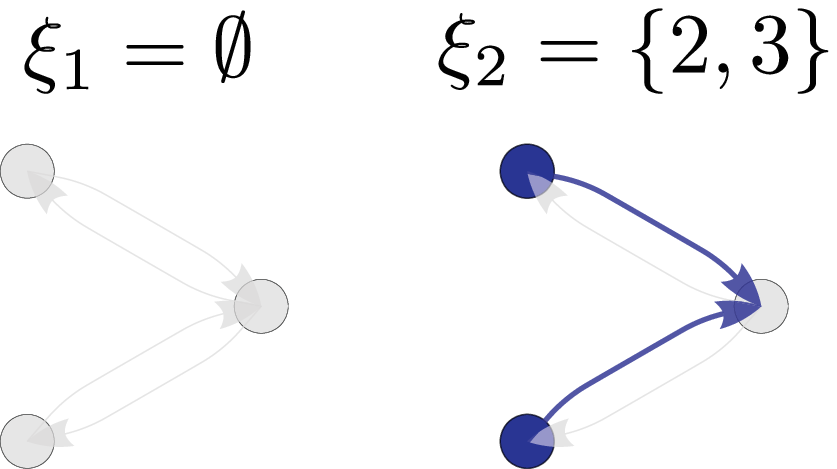}
& \{\{2,3\}\}\\

\hline
\includegraphics[width=\linewidth]{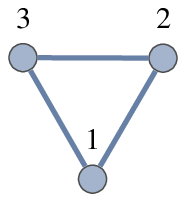}
\newline
$C_3$
&%
\includegraphics[width=0.9\linewidth]{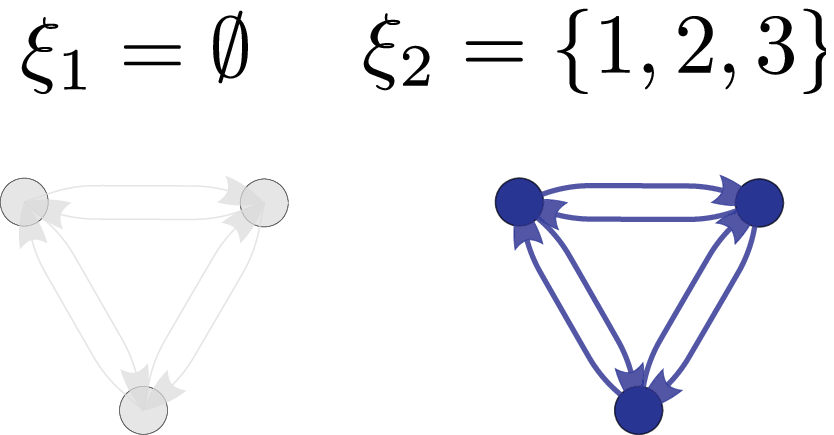}
& $\{\{1,2,3\}\}$\\

\hline
\includegraphics[width=\linewidth]{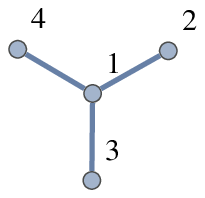}
\newline
$S_4$
&%
\includegraphics[width=0.9\linewidth]{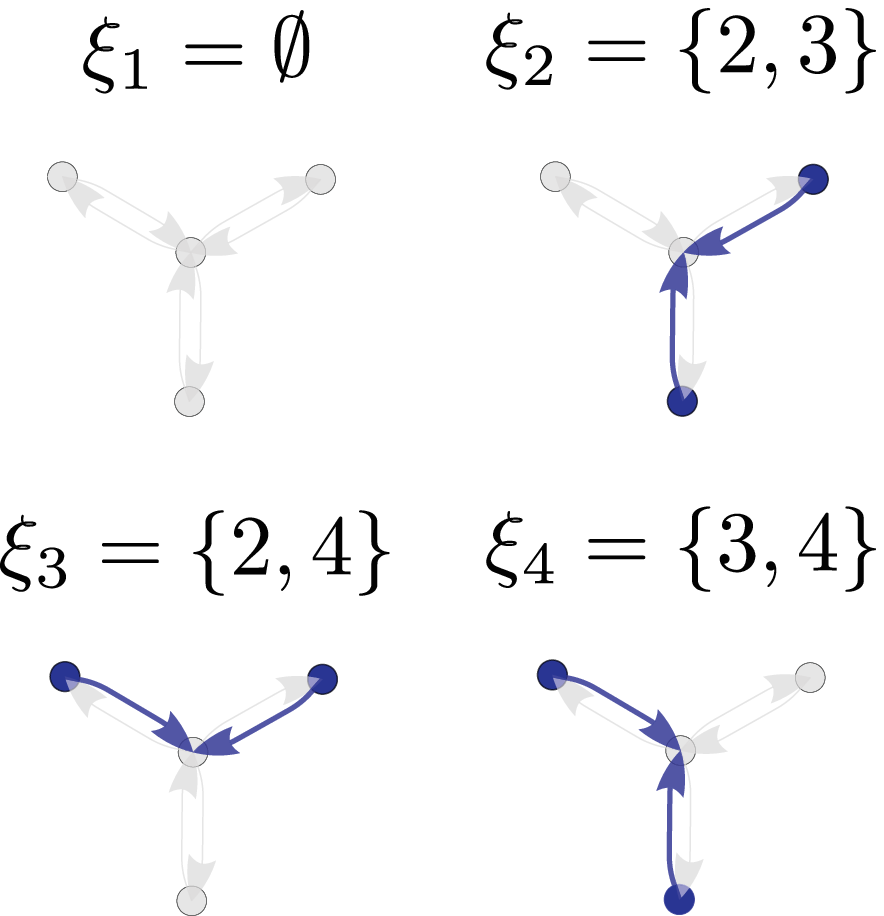}
& $\{\{2,3\},\{2,4\}\}$ \\
\bottomrule
\end{tabular} 
\caption{X-chain groups of simple graphs: The directed graphs shown under the X-chains illustrate the criterion of X-chains. Once a vertex is selected in a vertex subset, one draws arrows from it to its neighbors. A vertex subset $\xi$ is an X-chain if and only if all vertices of the graph are incident by even number of arrows.
The X-chain groups $\braket{\Gamma_G}$ are generated by their generating sets $\Gamma_G$.}%
\label{table::x-chain_group_of_graph_state_eg}%
\end{table}
\par
We point out that the X-chains form a group with the symmetric difference operation.
%
\begin{lemma}
[X-chain groups and correlation groups]%
\label{lemma::X-chain_factorized_group}
\Needspace*{15\baselineskip}
Let $|G\rangle$ be a graph state.
The set of X-chains together with the symmetric difference
$(\mathcal{X}_{G}^{\left(  \emptyset\right)  },\Delta)$,
is a normal subgroup of $\left(  \mathcal{P}\left(  V_{G}\right)  ,\Delta\right)$.
The quotient group $(\mathcal{P} \left(  V_{G}\right)  /\mathcal{X}_{G}^{\left(\emptyset\right)},\Delta)$ is identical to the set of all resource sets
\begin{equation}
\mathcal{P}\left(  V_{G}\right)  /\mathcal{X}_{G}^{\left(  \emptyset\right)
}=\left\{  \mathcal{X}_{G}^{\left(  c\right)  }:c\in\mathcal{C}_{G}\right\},
\end{equation}
which we call call the \emph{correlation group} of $\ket{G}$.
Let $\Gamma_{G}$ and $\mathcal{K}_{G}$ denote the generating sets of
$(\mathcal{X}_{G}^{\left(  \emptyset\right)  },\Delta)$ and $(\mathcal{P}(V_{G})/\mathcal{X}_{G}^{(\emptyset)}, \Delta
)$, respectively.
The stabilizer group $\left(  \mathcal{S}_{G},\cdot\right)  $ is isomorphic to the direct product of the X-chain group and the correlation group,
\begin{equation}
\left(  \mathcal{S}_{G},\cdot\right)  \sim\left(  \left\langle \Gamma
_{G}\right\rangle ,\Delta\right)  \times\left(  \left\langle \mathcal{K}%
_{G}\right\rangle ,\Delta\right)
,\label{eq::prop_X-chain_group_correlation_group}%
\end{equation}
As a result, the graph state $|G\rangle$ is the product of the X-chain group and correlation group inducing stabilizers, i.e.%
\begin{equation}
|G\rangle\langle G|=\prod_{\kappa\in\mathcal{K}_{G}}\frac{1+s_{G}^{(\kappa)}%
}{2}\prod_{\gamma\in\Gamma_{G}}\frac{1+s_{G}^{(\gamma)}}{2}%
.\label{eq::X-chain_factorization_of_graph_state_projector}%
\end{equation}
\hiddengls{XchainGroup}\hiddengls{corrGroupG}\hiddengls{corrGroup}
\begin{proof}
See Appendix \ref{sec::graph_states_proofs}.
\end{proof}
\end{lemma}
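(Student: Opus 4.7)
The plan is to first observe that the correlation index defines a group homomorphism $c:(\mathcal{P}(V_G),\Delta)\to(\mathcal{P}(V_G),\Delta)$. From Definition~\ref{def::correlation_index}, $c_{\xi_1\Delta\xi_2}$ is the symmetric difference of the neighborhoods $N_v$ over $v\in\xi_1\Delta\xi_2$, and since any $N_v$ with $v\in\xi_1\cap\xi_2$ would appear twice and cancel, the identity $c_{\xi_1\Delta\xi_2}=c_{\xi_1}\Delta c_{\xi_2}$ holds. The X-chain set $\mathcal{X}_G^{(\emptyset)}=\ker c$ is therefore a subgroup, automatically normal by the commutativity of $\Delta$. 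The first isomorphism theorem then identifies the cosets of this kernel with the fibers $\mathcal{X}_G^{(c)}$ over $c\in\mathcal{C}_G$, giving the second claim.

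For the direct product decomposition, I would exploit that $(\mathcal{P}(V_G),\Delta)$ is an $n$-dimensional vector space over $\mathbb{F}_2$, so every subspace has a linear complement. Taking $\Gamma_G$ to be a basis of $\mathcal{X}_G^{(\emptyset)}$ and completing it by $\mathcal{K}_G$ to a basis of the whole space yields $\mathcal{P}(V_G)=\langle\Gamma_G\rangle\oplus\langle\mathcal{K}_G\rangle$, with $\langle\mathcal{K}_G\rangle\cong\mathcal{P}(V_G)/\mathcal{X}_G^{(\emptyset)}$. Transporting this splitting through the $\xi$-induction isomorphism of Proposition~\ref{prop::isomorphism_of_vertex-induction_operation} gives Eq.~\eqref{eq::prop_X-chain_group_correlation_group}.

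For the projector factorization, I would start from Eq.~\eqref{eq::stabilizer_projected_graph_state}, write each $\xi\subseteq V_G$ uniquely as $\gamma\Delta\kappa$ with $\gamma\in\langle\Gamma_G\rangle$ and $\kappa\in\langle\mathcal{K}_G\rangle$, and use the multiplicativity $s_G^{(\gamma\Delta\kappa)}=s_G^{(\gamma)}s_G^{(\kappa)}$ to factor $\sum_\xi s_G^{(\xi)}=\bigl(\sum_\gamma s_G^{(\gamma)}\bigr)\bigl(\sum_\kappa s_G^{(\kappa)}\bigr)$. Because $\Gamma_G$ and $\mathcal{K}_G$ are independent generating sets of Abelian $2$-groups, each inner sum collapses to $\prod_{\gamma\in\Gamma_G}(1+s_G^{(\gamma)})$, respectively $\prod_{\kappa\in\mathcal{K}_G}(1+s_G^{(\kappa)})$. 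Since $|\Gamma_G|+|\mathcal{K}_G|=n$, the $2^{-n}$ prefactor distributes precisely to produce Eq.~\eqref{eq::X-chain_factorization_of_graph_state_projector}.

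I expect the main obstacle to be mostly bookkeeping: the one subtle point is insisting that $\mathcal{K}_G$ be fixed as a basis of a linear complement of $\mathcal{X}_G^{(\emptyset)}$, rather than an arbitrary transversal of cosets, so that the direct product in Eq.~\eqref{eq::prop_X-chain_group_correlation_group} is genuine and the two factors in Eq.~\eqref{eq::X-chain_factorization_of_graph_state_projector} have trivial intersection. Everything else reduces to $\mathbb{F}_2$-linear algebra combined with the already-established isomorphism of Proposition~\ref{prop::isomorphism_of_vertex-induction_operation} and the explicit form of induced stabilizers from Proposition~\ref{prop::induced_stabilizer_math_formula}.
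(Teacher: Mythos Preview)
Your proposal is correct and follows essentially the same route as the paper's proof: establish that $c_G$ is a group homomorphism on $(\mathcal{P}(V_G),\Delta)$, identify $\mathcal{X}_G^{(\emptyset)}$ as its kernel (hence normal in the Abelian group), read off the quotient as the family of resource sets, split $\mathcal{P}(V_G)$ as $\langle\Gamma_G\rangle\times\langle\mathcal{K}_G\rangle$, transport via Proposition~\ref{prop::isomorphism_of_vertex-induction_operation}, and factor the projector using $s_G^{(\gamma\Delta\kappa)}=s_G^{(\gamma)}s_G^{(\kappa)}$. Your only addition is making the $\mathbb{F}_2$-vector-space complement argument explicit where the paper simply invokes ``a result of group theory''; this is a presentational refinement rather than a different approach.
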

Note that the brackets $\braket{\Gamma_G}$ and $\braket{\mathcal{K}_G}$ denote the group generated by $\Gamma_G$ and $\mathcal{K}_G$, respectively
The correlation group represents the partition of the powerset of vertex set $\mathcal{P}(  V_{G})$ regarding the correlation index of the vertex subsets $\xi\in\mathcal{P}(  V_{G})$.
The members in the correlation group $\xi\in\braket{\mathcal{K}_G}$ possess distinct correlation indices.
All of the members in the $c$-correlation resource set
$\xi\in\mathcal{X}^{(c)}$ are connected by X-chains. Let $\xi_{1}^{(c)}%
\in\mathcal{X}^{(c)}$ and $\xi_{2}^{(c)}\in\mathcal{X}^{(c)}$ be two
X-resources for the same correlation index $c$. Then there must exist an
X-chain $\gamma\in\Gamma_{G}$, such that
\begin{equation}
\xi_{2}^{(c)}=\xi_{1}^{(c)}\Delta\gamma.
\end{equation}
For instance, in the example of $|S_{3}\rangle$ (Fig.
\ref{fig::S3_X-chain_illustation}), the resources of correlation $i^{(c)}=111$
(i.e. $c=\left\{  1,2,3\right\}  $) are connected by the X-chain $\{2,3\}$,
i.e. $\{1,3\}=\{1,2\}\Delta\{2,3\}$.
Therefore, one can choose one member in
$\mathcal{X}^{(c)}$ to represent the whole resource set $\mathcal{X}^{(c)}$.
Hence, after the X-chain factorization the group $\left(  \mathcal{P}\left(
V_{G}\right)  ,\Delta\right)  $ for $S_{3}$ becomes $\left\langle
\mathcal{K}_{G}\right\rangle =\left\{  \emptyset,\left\{  1\right\}  ,\left\{
2\right\}  ,\left\{  1,2\right\}  \right\}  $ with $\mathcal{K}_{G}=\left\{
\left\{  1\right\}  ,\left\{  2\right\}  \right\}  $
\par
In Eq. (\ref{eq::X-chain_factorization_of_graph_state_projector}), the Hilbert
space $\mathbb{H}_G$ of the graph state $\ket{G}$ is first projected onto the subspace stabilized by the stabilizers
$s_{G}^{(\gamma)}$ with $\gamma\in\Gamma_{G}$. It is the subspace,
$\mathrm{span}\left(  \Psi_{\emptyset}\right)  $, spanned by the stabilized states
$|\psi_{\emptyset}\rangle$ with%
\begin{equation}
\Psi_{\emptyset}:=\left\{  |\psi_{\emptyset}\rangle:s_{G}^{(\gamma)}%
|\psi_{\emptyset}\rangle=|\psi_{\emptyset}\rangle\text{, for all }\gamma
\in\Gamma_{G}\right\}  .
\end{equation}
In this projection, $\ket{\psi_{\emptyset}}$ are all product states, since every X-chain
stabilizer $s_{G}^{(\gamma)}$ commutes with the $\sigma^{\otimes n}_{X}$ operator. After the
first projection, the graph state is then obtained via projecting the subspace
$\mathrm{span}\left(  \Psi_{\emptyset}\right)  $ into the state that is stabilized by
the stabilizers $s_{G}^{(\kappa)}$ induced by the correlation group;
that is,%
\begin{equation}
\mathbb{H}_{G}\overset{\Gamma_{G}}{\longrightarrow}\Psi_{\emptyset}%
\overset{\mathcal{K}_{G}}{\longrightarrow}|G\rangle.
\end{equation}
This approach will be employed in the next section to derive the
representation of graph states in the X-basis.

\section{X-chain factorization of graph states}
\label{sec::X-chain_factorization_of_graph_states}
We express $|G\rangle$ in the X-basis as $|G\rangle=\sum_{i=1}^{2^n}\alpha_{i} |i_X\rangle$, with $\sum_i|\alpha_{i}|^2=1$.
Since the X-chain stabilizers $s_{G}^{(\gamma)}$ stabilize $|G\rangle$, it holds that
\begin{equation}
\sum\alpha_{i}|i_{X}\rangle=\sum \alpha_{i} s_{G}^{(\gamma)}|i_{X}\rangle.
\label{eq::no_name_1}
\end{equation}
Since $s_{G}^{(\gamma)}$ solely contains $\sigma_{X}$-operators, $s_{G}%
^{(\gamma)}|i_{X}\rangle=\pm|i_{X}\rangle$.
In order to fulfill Eq. \eqref{eq::no_name_1}, however, it follows that only the plus sign is possible, i.e.
\begin{equation}
s_{G}^{(\gamma)}|i_{X}\rangle=|i_{X}\rangle\text{ for all }\alpha_{i}\not =0.
\end{equation}
That means that the possible X-measurement outcomes are solely those X-basis states $|i_{G}\rangle$,
which are stabilized by all X-chain stabilizers $s_{G}^{(\gamma)}$.
A graph state $|G\rangle$ is hence a superposition of such particular X-basis states.
\par
For example, the star graph state $|S_{3}\rangle$ in Fig. \ref{fig::correlation_index_and_x-resource} is stabilized by the X-chain stabilizer $s_{S_{3}}^{\{2,3\}}=\sigma_{X}^{\{2,3\}}$. Therefore $\ket{S_3}$ belongs to the space spanned by the states stabilized by $s_{S_{3}}^{\{2,3\}}$.
From the table in Fig. \ref{fig::correlation_index_and_x-resource},
one observes that the X-basis $|i^{(c)}\rangle$, with $c\in\mathcal{C}_{S_{3}}$ (see Def. \ref{def::X-resouces_of_correlation_indices}) corresponding to the correlation indices of $|S_{3}\rangle$, are stabilized by $s_{S_{3}}^{\{2,3\}}$,
i.e. $\sigma_{X}^{(\{2,3\})}\ket{i^{(c)}}=\ket{i^{(c)}}$ for all $c\in\mathcal{C}_{S_{3}}$.
That means $|S_{3}\rangle$ belongs to the subspace, $\mathrm{span}(\Psi)$, spanned by $\Psi=\left\{  |i^{(c)}\rangle,c\in\mathcal{C}_{S_{3}}\right\}=\left\{  |000\rangle,|100\rangle,|011\rangle,|111\rangle\right\}  $.
Thus $|S_{3}\rangle$ can be represented solely in $4$ X-basis states instead of $8$ $Z$-basis states.
\bigskip
\par
In this section, we will derive a general mapping from the X-chain group and
correlation group to graph states in the X-basis.
This is the question we raised in section \ref{sec::representation_of_graph_states}.
We first introduce X-chain states and $\mathcal{K}%
$-correlation states (Definition
\ref{def::X-chain_states_K-correlation_states}), which span the subspace
stabilized by X-chain stabilizers and $\mathcal{K}$-correlation stabilizers,
respectively. Given the explicit form of the X-chain states and correlation states
in the X-basis (Proposition \ref{prop::X-chain_states_in_X-basis} and
\ref{prop::correlation_states_explicit_form}), one arrives at the X-chain
factorization representation of graph states in Theorem
\ref{theorem::graph_states_as_X-factorized_states}.

\begin{definition}
[X-chain states and correlation states]%
\label{def::X-chain_states_K-correlation_states}
\Needspace*{8\baselineskip}%
Let $|G\rangle$ be a graph
state with the X-chain group $\left\langle \Gamma_{G}\right\rangle $ and the
correlation group $\left\langle \mathcal{K}_{G}\right\rangle $.
We define the X-basis state $|i^{(x_{\Gamma_G})}\rangle$ (in short form $|i^{(x_{\Gamma})}\rangle$) as the state stabilized by the Pauli $\sigma_{X}$ operators such that
\begin{enumerate}
\item \label{item::def_X-chain_state_condition_1} $\pi_{G}\left(  \gamma\right)
\sigma_{X}^{(\gamma)}|i^{(x_{\Gamma})}\rangle=|i^{(x_{\Gamma})}\rangle$, for
all $\gamma\in\Gamma_{G}$,
\item \label{item::def_X-chain_state_condition_2} $\sigma_{X}^{(\kappa
)}|i^{(x_{\Gamma})}\rangle=|i^{(x_{\Gamma})}\rangle$, for all $\kappa
\in\mathcal{K}_{G}$.
\end{enumerate}
The local unitary transformed states
\begin{equation}
|\psi_{\emptyset}(\xi)\rangle=s_{G}^{(\xi)}\left\vert i^{(x_{\Gamma}%
)}\right\rangle ,\xi\in\braket{\mathcal{K}_G}\label{eq::def_X-chain_state}%
\end{equation}
are called \emph{X-chain states}.
Let $\braket{\mathcal{K}}\subseteq\braket{\mathcal{K}_G}$ be a correlation subgroup,
then a $\mathcal{K}$-\emph{correlation state} of graph state $|G\rangle$,
$|\psi_{\mathcal{K}}\left(  \xi\right)  \rangle$, is defined as%
\begin{equation}
|\psi_{\mathcal{K}}\left(  \xi\right)  \rangle=s_{G}^{(\xi)}\prod_{\kappa
\in\mathcal{K}}\frac{1+s_{G}^{(\kappa)}}{\sqrt{2}}|i^{(x_{\Gamma})}%
\rangle\label{eq::def_correlation_states}%
\end{equation}
with $\xi\in\braket{\mathcal{K}_G}/\braket{\mathcal{K}}$. Let $\left\langle
\mathcal{K}\right\rangle \mathcal{\mathcal{\subseteq}\left\langle
\mathcal{K}^{\prime}\right\rangle \mathcal{\subseteq}}\left\langle
\mathcal{K}_{G}\right\rangle $, where a set of $\mathcal{K}$-correlation states is
denoted as%
\begin{equation}
\Psi_{\mathcal{K}^{\prime}}^{(\mathcal{K)}}=\left\{  |\psi_{\mathcal{K}%
}\left(  \xi\right)  \rangle:\xi\in
\braket{\mathcal{K'}}/\braket{\mathcal{K}}\right\}
\label{eq::def_K_set_of_K-correlation_states}%
\end{equation}
\hiddengls{XchainState}
\hiddengls{XchainStateBasic}
\hiddengls{corrState}
\hiddengls{corrStateSet}
\end{definition}

In this notation, the set of X-chain states is then written as $\Psi
_{\mathcal{K}_{G}}^{(\emptyset)}$, or in short form as $\Psi^{(\emptyset)}$, while the
set of all $\mathcal{K}$-correlation states is denoted by $\Psi_{\mathcal{K}%
_{G}}^{(\mathcal{K)}}$, or in short form as $\Psi^{(\mathcal{K)}}$. Note that
$|\psi_{\mathcal{\emptyset}}\left(  \mathcal{\emptyset}\right)  \rangle
=|i^{(x_{\Gamma})}\rangle$, and X-chain states $|\psi_{\mathcal{\emptyset}%
}\left(  \xi\right)  \rangle$ are $\mathcal{\emptyset}$-correlation states.
The X-basis $|i^{(x_{\Gamma})}\rangle$ is the fundamental state from which the
non-vanishing X-basis components in graph states can be derived.
According to its definition, $|i^{(x_{\Gamma})}\rangle$ depends on the
generating set of the X-chain group and the correlation group of a given graph state.
One can employ the following approach to obtain the fundamental X-chain state $|i^{(x_{\Gamma})}\rangle$.
\begin{proposition}
[X-chain states in X-basis]\label{prop::X-chain_states_in_X-basis}
\Needspace*{8\baselineskip}
Let $|G\rangle$ be a graph state with the X-chain group $\left\langle \Gamma
_{G}\right\rangle $ and the correlation group $\left\langle \mathcal{K}%
_{G}\right\rangle $. Let $\Gamma_{G}=\left\{  \gamma_{1},\gamma_{2}%
,...\right\}  $, and $\gamma_{i}=\left\{  v_{i_{1}},v_{i_{2}},\cdots\right\}
$. The generating set $\Gamma_{G}$ and $\mathcal{K}_{G}$ can be chosen as

\begin{enumerate}
\item $\Gamma_G=\{\gamma_1,...,\gamma_k\}$ such that $\gamma_i\not\subseteq\gamma_j$ for all $\gamma_i, \gamma_j\in\Gamma_G$,
\item $\mathcal{K}_{G}=\left\{  \left\{  v\right\}  :v\in V_{G}\backslash
\bigcup_{i=1}^{k}\left\{  v_{i_{1}}\right\}  \right\}  $.
\end{enumerate}
Here, the first element of $\gamma_i=\{v_{i_1},v_{i_2},...\}$ is selected in a way such that $v_{i_{1}}\neq v_{j_{1}}$ for all $i\not =j$.
Then the X-chain state $\ket{\psi_{\emptyset}(\emptyset)}$ of $|G\rangle$ is an X-basis state, $|i^{(x_{\Gamma})}\rangle$, with%
\begin{equation}
x_{\Gamma}=\left\{  v_{i_{1}}:\pi_{G}\left(  \gamma_{i}\right)  =-1\right\}  .
\end{equation}

\begin{proof}
See Appendix \ref{sec::graph_states_proofs}.
\end{proof}
\end{proposition}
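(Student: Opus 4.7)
The plan is to verify by direct eigenvalue computation that the X-basis state $|i^{(x_\Gamma)}\rangle$ with $x_\Gamma=\{v_{i_1}:\pi_G(\gamma_i)=-1\}$ fulfills the two defining conditions of the fundamental X-chain state $|\psi_{\emptyset}(\emptyset)\rangle$ in Definition \ref{def::X-chain_states_K-correlation_states}, after first unpacking the combinatorial structure implied by the hypotheses on $\Gamma_G$ and $\mathcal{K}_G$.

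The key structural observation is that the conditions $\gamma_i\not\subseteq\gamma_j$ together with the distinctness requirement $v_{i_1}\neq v_{j_1}$ should be read as the data of a reduced row echelon form of the X-chain group, viewed as an $\mathbb{F}_2$-subspace of $\mathbb{F}_2^{V_G}$ with symmetric difference as addition. In particular, I would argue that each distinguished vertex $v_{i_1}$ can be taken to act as a \emph{pivot} that lies in $\gamma_i$ but in no other generator $\gamma_j$ with $j\neq i$; standard Gaussian elimination over $\mathbb{F}_2$ guarantees such a reduced generating set exists. Using this pivot exclusivity, I would then confirm that $\mathcal{K}_G$ is a legitimate generating set of the correlation group from Lemma \ref{lemma::X-chain_factorized_group}: any nonempty X-chain must contain at least one pivot vertex, so the $|V_G|-|\Gamma_G|$ singletons $\{v\}$ with $v$ non-pivot represent distinct nontrivial cosets of $\mathcal{P}(V_G)/\langle\Gamma_G\rangle$, and a dimension count matches the rank of the correlation group.

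With this structure in place, condition 2 of Definition \ref{def::X-chain_states_K-correlation_states} is immediate: for every $\kappa=\{v\}\in\mathcal{K}_G$ the vertex $v$ lies outside $x_\Gamma$, hence the $v$-th tensor factor of $|i^{(x_\Gamma)}\rangle$ is $|+_Z\rangle$, which $\sigma_X$ fixes. For condition 1 I would compute $\sigma_X^{(\gamma_i)}|i^{(x_\Gamma)}\rangle=(-1)^{|\gamma_i\cap x_\Gamma|}|i^{(x_\Gamma)}\rangle$. By pivot exclusivity $\gamma_i\cap x_\Gamma=\{v_{i_1}\}$ exactly when $\pi_G(\gamma_i)=-1$, and is empty otherwise, so the phase collapses to $\pi_G(\gamma_i)$; multiplying by the prefactor $\pi_G(\gamma_i)$ in the stabilizer yields the required eigenvalue $+1$.

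The main obstacle is making precise the pivot-exclusivity property from the rather compact hypotheses of the proposition and confirming that the singleton choices of $\mathcal{K}_G$ behave correctly in the quotient. Once that structural step is settled, the verification reduces to the two one-line $\sigma_X$-eigenvalue computations above, and the identification $|\psi_\emptyset(\emptyset)\rangle=|i^{(x_\Gamma)}\rangle$ follows from uniqueness of the joint eigenstate of the mutually commuting set of $\sigma_X$-tensor operators indexed by $\Gamma_G\cup\mathcal{K}_G$, whose combined linear span over $\mathbb{F}_2$ covers all of $V_G$.
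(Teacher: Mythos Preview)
Your proposal is correct and follows essentially the same route as the paper's proof: both verify the two defining conditions of Definition~\ref{def::X-chain_states_K-correlation_states} via the eigenvalue computation $\sigma_X^{(\gamma_i)}|i^{(x_\Gamma)}\rangle=(-1)^{|\gamma_i\cap x_\Gamma|}|i^{(x_\Gamma)}\rangle$ and the pivot-exclusivity fact $v_{i_1}\in\gamma_i\setminus\bigcup_{j\neq i}\gamma_j$. Your write-up is somewhat more thorough than the paper's, which simply asserts the exclusivity of $v_{i_1}$ and does not spell out the Gaussian-elimination justification or the dimension count for $\mathcal{K}_G$; the paper relegates that discussion to the informal paragraph following the proposition rather than to the proof itself.
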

\noindent
The vertices $v_{i_{1}}$ are the key for the determination of $\ket{x_{\Gamma}}$.
First of all, we choose the X-chain generators $\Gamma_{G}$, such that
$\gamma_{i}\not \subseteq \gamma_{j}$, for all $\gamma_{i},\gamma_{j}\in
\Gamma_{G}$.
That means each X-chain generator possesses at least a vertex
$v_{i_{1}}$ as its own exclusive vertex, i.e. $v_{i_{1}}\in\gamma
_{i}\backslash\left(  \cup_{j\not =i}\gamma_{j}\right)  $.
In other words, the vertex $v_{i_{1}}$ uniquely represents the X-chain generator $\gamma_{i}$.
The correlation group generators are then chosen as the single vertex sets $\{v\}$ with $v\in V_G\setminus \bigcup_{i}\left\{  v_{i_{1}}\right\}$.
At the end, the corresponding vertex set $x_{\Gamma}$ of the fundamental X-chain
state $|i^{(x_{\Gamma})}\rangle$ is the set of $v_{i_{1}}$, whose X-chain
generator $\gamma_{i}$ possesses a negative stabilizer-parity. Note that in
general the choice of the X-chain generators $\Gamma_{G}$ is not unique,
and therefore neither are the fundamental X-chain states $|i^{(x_{\Gamma})}\rangle$.
However, the above mentioned approach still arrives to the same set $\Psi^{(\emptyset)}$ of X-chain states, since the X-chain group is unique.
\begin{figure*}[t!]
  \centering
  \subfloat[]{
  \includegraphics[width=0.2\textwidth]{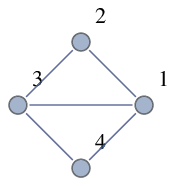}
  \label{fig::determination_of_X-chain_state_graph}
  }
  \subfloat[]{
  \Xchaindiagram[
    GState=|K_{4}^{\neg 1}\rangle,
    GammaG={\{\{1,2,3\},\{2,4\}\}},
    GammaGNum=2,
    xG=1000,
    KappaG={\{\{2\},\{3\}\}},
    KappaGNum=2,
    heightScale=2,
    widthScale=2,
    scale=0.5
  ]
  \label{fig::determination_of_X-chain_state_XchainDiagram}
  }
  \\
  \subfloat[]{
\begin{tabular}{C{0.4\textwidth}C{0.55\textwidth}}
\includegraphics[width=0.95\linewidth]{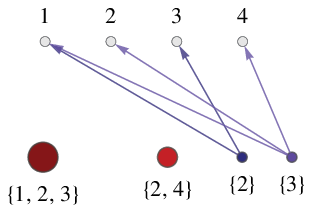} &
\begin{tabular}
[c]{|c|c|c|c|c|c|c|c|}
\toprule
$\gamma\in\Gamma_{G}$ & $\left\{  1,2,3\right\}  $ & $\left\{  2,4\right\}  $
& $\mathcal{K}_{G}$ & \multicolumn{4}{c|}{$\left\{  \left\{  2\right\}
,\left\{  3\right\}  \right\}  $}\\
\midrule
$v_{i_{1}}$ & $1$ & $2$ & $\xi\in\left\langle \mathcal{K}_{G}\right\rangle $ &
$\emptyset$ & $\left\{  2\right\}  $ & $\left\{  3\right\}  $ & $\left\{
2,3\right\}  $\\
\midrule
$\pi_{G}\left(  \gamma\right)  $ & $-1$ & $1$ & $\pi_{G}\left(  \xi\right)  $
& $1$ & $1$ & $1$ & $-1$\\
\midrule
$x_{\Gamma}$ & \multicolumn{2}{c|}{$\left\{  1\right\}  $} &
$i^{(c_{\xi})}$ & $0$ & $1010$ & $1101$ & $0111$\\
\midrule
$|i^{(x_{\Gamma})}\rangle$ & \multicolumn{2}{c|}{$|1000\rangle$} & $\ket{\psi_{\emptyset}\left(
\xi\right)}  $ & $|1000\rangle$ & $|0010\rangle$ & $|0101\rangle$ &
$-|1111\rangle$\\
\bottomrule
\end{tabular}
\\
\end{tabular}
\label{fig::determination_of_X-chain_state_GammaG_choice_1}
  }
  \\
  \subfloat[]{
\begin{tabular}{C{0.4\textwidth}C{0.6\textwidth}}
\includegraphics[width=0.95\linewidth]{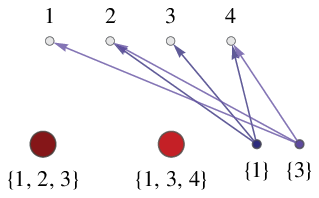} &
  \begin{tabular}
[c]{|c|c|c|c|c|c|c|c|}
\toprule
$\gamma\in\Gamma_{G}$ & $\left\{  2,1,3\right\}  $ & $\left\{  4,1,3\right\}
$ & $\mathcal{K}_{G}$ & \multicolumn{4}{c|}{$\left\{  \left\{  1\right\}
,\left\{  3\right\}  \right\}  $}\\
\midrule
$v_{i_{1}}$ & $2$ & $4$ & $\xi\in\left\langle \mathcal{K}_{G}\right\rangle $ &
$\emptyset$ & $\left\{  1\right\}  $ & $\left\{  3\right\}  $ & $\left\{
1,3\right\}  $\\
\midrule
$\pi_{G}\left(  \gamma\right)  $ & $-1$ & $-1$ & $\pi_{G}\left(  \xi\right)  $
& $1$ & $1$ & $1$ & $-1$\\
\midrule
$x_{\Gamma}$ & \multicolumn{2}{c|}{$\left\{  2,4\right\}  $} &
$i^{(c_{\xi})}$ & $0$ & $0111$ & $1101$ & $1010$\\
\midrule
$|i^{(x_{\Gamma})}\rangle$ & \multicolumn{2}{c|}{$|0101\rangle$} & $\ket{\psi_{\emptyset}\left(
\xi\right)}  $ & $|0101\rangle$ & $|0010\rangle$ & $|1000\rangle$ &
$-|1111\rangle$\\
\bottomrule
\end{tabular}
  \\
\end{tabular}
 \label{fig::determination_of_X-chain_state_GammaG_choice_2}
  }
  \\
  \caption{\colorfig Example for the determination of X-chain states (see section \ref{sec::X-chain_factorization_of_graph_states} for details).
  (a) The graph state $|K_{4}^{\neg 1}\rangle$.
  (b) The factorization diagram of $|K_{4}^{\neg 1}\rangle$ (for an explanation, see Algorithm \ref{algo::factorization_diagram_graph_states}).
  (c),(d) The incidence structure of the X-chain generators and correlation group generators of $|K_{4}^{\neg 1}\rangle$. The choices of these generators are not unique, and lead to different fundamental X-chain states $\ket{i^{(x_{\Gamma})}}$. However, they arrive at the identical set of X-chain states $\{\ket{\psi_{\emptyset}(\xi)}, \xi\in\braket{\mathcal{K}_G}\}$.
  }
  \label{fig::determination_of_X-chain_states}
\end{figure*}
\par
Let us illustrate these concepts by an example, the graph state $|K_{4}^{\neg 1}\rangle$ (Fig. \ref{fig::determination_of_X-chain_state_graph}),
which corresponds to the graph with one edge missing from the complete graph $K_{4}$.
Its X-chain generators can be chosen as $\Gamma_{G}=\left\{  \gamma_{1},\gamma_{2}\right\}  =\left\{  \left\{  1,2,3\right\}  ,\left\{  2,4\right\}\right\}  $ (see Fig. \ref{fig::determination_of_X-chain_state_GammaG_choice_1}).
The exclusive vertex $v_{1}$ for $\gamma_{1}$ can be chosen as $1$, while $v_{2}$ for
$\gamma_{2}$ is $4$. Only $\gamma_{1}$ has negative parity, and therefore
$x_{\Gamma}=\left\{  1\right\}  $ and the fundamental X-chain state is
$\ket{i^{(x_{\Gamma})}}=|1000\rangle$.
\par
From the fundamental X-chain state $\ket{i^{(x_{\Gamma})}}$ one can derive all the X-chain states and correlation states with the following proposition.
\begin{proposition}
[Form of X-chain states, $\mathcal{K}$-correlation states]%
\label{prop::correlation_states_explicit_form}
\Needspace*{8\baselineskip}
Let $\xi\in\left\langle
\mathcal{K}_{G}\right\rangle $ be an X-resource and $\left\langle
\mathcal{K}\right\rangle \subseteq\left\langle \mathcal{K}_{G}\right\rangle $.
An X-chain state is given as
\begin{equation}
|\psi_{\emptyset}\left(  \xi\right)  \rangle=\pi_{G}\left(  \xi\right)
\left\vert i^{(x_{\Gamma})}\oplus i^{(c_{\xi})}\right\rangle,
\end{equation}
where $\pi_{G}\left(  \xi\right)  $ is the stabilizer parity of $\xi$ (see Eq. \eqref{eq::G-parity_formula}),
and $c_{\xi}$ is the correlation index of $\xi$.

A $\mathcal{K}$-correlation state is the superposition of \emph{X-chain
states},\emph{ }
\begin{equation}
|\psi_{\mathcal{K}}\left(  \xi\right)  \rangle=\frac{1}{2^{\left\vert
\mathcal{K}\right\vert /2}}\sum_{\xi^{\prime}\in\left\langle \mathcal{K}%
\right\rangle }|\psi_{\emptyset}\left(  \xi\Delta\xi^{\prime}\right)  \rangle.
\label{eq::correlation_state_explicit_form}
\end{equation}

\begin{proof}
See Appendix \ref{sec::graph_states_proofs}.
\end{proof}
\end{proposition}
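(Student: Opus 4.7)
The plan is to establish both identities by direct computation, relying on the explicit form of the induced stabilizer (Proposition \ref{prop::induced_stabilizer_math_formula}) and on the two defining conditions of the fundamental X-chain state $\ket{i^{(x_\Gamma)}}$ from Definition \ref{def::X-chain_states_K-correlation_states}.

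For the X-chain state formula, I would start from $\ket{\psi_\emptyset(\xi)}=s_G^{(\xi)}\ket{i^{(x_\Gamma)}}$ and substitute $s_G^{(\xi)}=\pi_G(\xi)\sigma_X^{(\xi)}\sigma_Z^{(c_\xi)}$. In the X-basis one has $\sigma_Z^{(\eta)}\ket{i^{(\mu)}}=\ket{i^{(\mu\Delta\eta)}}$ and $\sigma_X^{(\zeta)}\ket{i^{(\mu)}}=(-1)^{|\zeta\cap\mu|}\ket{i^{(\mu)}}$, so a one-line computation gives
\begin{equation}
\ket{\psi_\emptyset(\xi)}=\pi_G(\xi)(-1)^{|\xi\cap(x_\Gamma\Delta c_\xi)|}\ket{i^{(x_\Gamma)}\oplus i^{(c_\xi)}}.
\end{equation}
The identity therefore reduces to showing that the residual sign equals $+1$ for every $\xi\in\braket{\mathcal{K}_G}$. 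Using the elementary congruence $|A\cap(B\Delta C)|\equiv|A\cap B|+|A\cap C|\pmod 2$, the exponent splits as $|\xi\cap x_\Gamma|+|\xi\cap c_\xi|$, and I would kill the two pieces independently.

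The first piece vanishes by condition 2 of Definition \ref{def::X-chain_states_K-correlation_states}: $\sigma_X^{(\kappa)}\ket{i^{(x_\Gamma)}}=\ket{i^{(x_\Gamma)}}$ translates to $|\kappa\cap x_\Gamma|$ being even for every $\kappa\in\mathcal{K}_G$, and writing $\xi$ as a symmetric difference of such generators forces $|\xi\cap x_\Gamma|$ to be even as well. The second piece is the main obstacle: by Definition \ref{def::correlation_index} a vertex $w\in\xi$ lies in $c_\xi$ iff $w$ has an odd number of $\xi$-neighbours, i.e.\ iff $w$ has odd degree in the induced subgraph $G[\xi]$, so $|\xi\cap c_\xi|$ counts the odd-degree vertices of $G[\xi]$ and is therefore even by the handshake lemma.

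For the correlation-state formula, the strategy is to expand the projector product. Stabilizers of $\ket{G}$ commute and satisfy $s_G^{(\kappa_1)}s_G^{(\kappa_2)}=s_G^{(\kappa_1\Delta\kappa_2)}$ by Proposition \ref{prop::isomorphism_of_vertex-induction_operation}, so collecting equal stabilizers yields
\begin{equation}
\prod_{\kappa\in\mathcal{K}}\frac{1+s_G^{(\kappa)}}{\sqrt{2}}=\frac{1}{2^{|\mathcal{K}|/2}}\sum_{\xi'\in\braket{\mathcal{K}}}s_G^{(\xi')}.
\end{equation}
Multiplying on the left by $s_G^{(\xi)}$, applying the isomorphism once more to merge $s_G^{(\xi)}s_G^{(\xi')}=s_G^{(\xi\Delta\xi')}$, and recognising each summand $s_G^{(\xi\Delta\xi')}\ket{i^{(x_\Gamma)}}=\ket{\psi_\emptyset(\xi\Delta\xi')}$ via the first part then gives the claimed superposition.
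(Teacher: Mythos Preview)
Your proof is correct and follows the same overall strategy as the paper's: expand the projector product into a sum over $\braket{\mathcal{K}}$ via the isomorphism $s_G^{(\xi)}s_G^{(\xi')}=s_G^{(\xi\Delta\xi')}$, then evaluate each stabilizer on $\ket{i^{(x_\Gamma)}}$ using Proposition~\ref{prop::induced_stabilizer_math_formula} and condition~2 of Definition~\ref{def::X-chain_states_K-correlation_states}.

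There is one point where your argument is more careful than the paper's. The paper silently writes the stabilizer in the order $\sigma_Z^{(c_{\xi})}\sigma_X^{(\xi)}$ so that $\sigma_X$ hits $\ket{i^{(x_\Gamma)}}$ first and disappears by condition~2; it never comments on the sign incurred by swapping $\sigma_X^{(\xi)}\sigma_Z^{(c_\xi)}\to\sigma_Z^{(c_\xi)}\sigma_X^{(\xi)}$, which is exactly $(-1)^{|\xi\cap c_\xi|}$. You instead keep the order from Proposition~\ref{prop::induced_stabilizer_math_formula}, compute the resulting sign $(-1)^{|\xi\cap x_\Gamma|+|\xi\cap c_\xi|}$, and kill both pieces separately. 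Your handshake-lemma observation that $|\xi\cap c_\xi|$ counts the odd-degree vertices of $G[\xi]$ and is therefore even is precisely what legitimises the paper's silent swap, so your version fills a small gap rather than taking a different route.
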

\noindent
According to this proposition, the X-chain states of $\ket{K_4^{\lnot1}}$ derived from $\ket{i^{(x_\gamma)}}=\ket{1000}$ are given in the table in Fig.
\ref{fig::determination_of_X-chain_state_GammaG_choice_1}.
Alternatively, one can also choose the X-chain generators $\Gamma_{G}=\left\{  \gamma_{1}%
,\gamma_{2}\right\}  =\left\{  \left\{  2,1,3\right\}  ,\left\{
4,1,3\right\}  \right\}  $ (see Fig.
\ref{fig::determination_of_X-chain_state_GammaG_choice_2}). In this case
$v_{1}=2$ and $v_{2}=4$. The parities of $\gamma_{1}$ and $\gamma_{2}$ are
both negative, hence $|i^{(x_{\Gamma})}\rangle=|0101\rangle$. However, the sets
of obtained X-chain states $\Psi^{(\emptyset)}$ are identical in both cases.
\par
The correlation states $|\psi_{\mathcal{K}}\left(  \xi\right)  \rangle$ are then the superposition of their corresponding X-chain states. For example, in Fig. \ref{fig::determination_of_X-chain_state_GammaG_choice_1} the correlation state $|\psi_{\{\{2,3\}\}}\left(  \emptyset\right)  \rangle=(\ket{1000}-\ket{1111})/\sqrt{2}$. The correlation states have the following properties.
\begin{corollary}
[Properties of $\mathcal{K}$-correlation states]%
\label{coro::properties_of_correlation_states}
\Needspace*{8\baselineskip}
Let $\braket{\mathcal{K}}\subseteq
\braket{\mathcal{K}_G}$ be a correlation index subgroup, then

\begin{enumerate}
\item \label{item::correlation_states_property_1} $|\psi_{\mathcal{K}}\left(
\xi\right)  \rangle$ is stabilized by all stabilizers $s_{G}^{(\kappa)}$ with
$\kappa\in\left\langle \Gamma_{G}\right\rangle \times\left\langle
\mathcal{K}\right\rangle $%
\begin{equation}
s_{G}^{(\kappa)}|\psi_{\mathcal{K}}\left(  \xi\right)  \rangle=|\psi
_{\mathcal{K}}\left(  \xi\right)  \rangle.
\end{equation}
Therefore the space $\mathrm{span}(\Psi^{(\mathcal{K})})$, see Eq. \eqref{eq::def_K_set_of_K-correlation_states}, is also stabilized by $s_{G}^{(\kappa)}$ with $\kappa\in\left\langle \Gamma_{G}\right\rangle \times\left\langle
\mathcal{K}\right\rangle $.

\item \label{item::correlation_states_property_2} For $\xi_{1}\in\left\langle
\mathcal{K}_{G}\right\rangle $ and $\xi_{1}\not \in \left\langle
\mathcal{K}\right\rangle $, it holds that%
\begin{equation}
s_{G}^{(\xi_{1})}|\psi_{\mathcal{K}}\left(  \xi_{2}\right)  \rangle
=|\psi_{\mathcal{K}}\left(  \xi_{1}\Delta\xi_{2}\right)  \rangle.
\end{equation}

\item \label{item::correlation_states_property_3} For $\kappa\in
\mathcal{K}_{G}$ and $\kappa\not \in \mathcal{K}$, the $\mathcal{K
}\cup\left\{  \mathcal{\kappa}\right\}  $-correlation state can be obtained by%
\begin{equation}
|\psi_{\mathcal{K}\cup\{\kappa\}}\left(  \xi\right)  \rangle=\frac
{1+s_{G}^{(\kappa)}}{\sqrt{2}}|\psi_{\mathcal{K}}\left(  \xi\right)  \rangle.
\end{equation}

\end{enumerate}

\begin{proof}
See Appendix \ref{sec::graph_states_proofs}.
\end{proof}
\end{corollary}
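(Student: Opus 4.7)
The plan is to unwind definition \eqref{eq::def_correlation_states} of $\ket{\psi_{\mathcal{K}}(\xi)}$ and to invoke three elementary structural facts: (a) the stabilizer group $(\mathcal{S}_G,\cdot)$ is Abelian (Prop.~\ref{prop::isomorphism_of_vertex-induction_operation}), so every $s_G^{(\xi')}$ commutes with every other stabilizer; (b) each stabilizer is an involution, so $s_G^{(\kappa)}(1+s_G^{(\kappa)})=1+s_G^{(\kappa)}$, meaning any $s_G^{(\kappa)}$ is absorbed by its own projector; and (c) the isomorphism $s_G^{(\xi_1)}s_G^{(\xi_2)}=s_G^{(\xi_1\Delta\xi_2)}$ from the same proposition. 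One additional input is that $\ket{i^{(x_\Gamma)}}$ is stabilized by every X-chain stabilizer, $s_G^{(\gamma)}\ket{i^{(x_\Gamma)}}=\ket{i^{(x_\Gamma)}}$ for $\gamma\in\langle\Gamma_G\rangle$, which follows by combining condition~1 of Def.~\ref{def::X-chain_states_K-correlation_states} with Prop.~\ref{prop::induced_stabilizer_math_formula} (noting $c_\gamma=\emptyset$ because $\gamma$ is an X-chain) and extending across $\langle\Gamma_G\rangle$ via (c).

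For item 1, I use the direct-product decomposition of Lemma~\ref{lemma::X-chain_factorized_group} to split $\kappa$ as $\kappa=\gamma\,\Delta\,\kappa'$ with $\gamma\in\langle\Gamma_G\rangle$ and $\kappa'\in\langle\mathcal{K}\rangle$, so $s_G^{(\kappa)}=s_G^{(\gamma)}s_G^{(\kappa')}$. Using (a), I commute both factors past $s_G^{(\xi)}$ and past the projector block $\prod_{\kappa''\in\mathcal{K}}(1+s_G^{(\kappa'')})/\sqrt{2}$; then (b) — iterated over a decomposition of $\kappa'$ into generators drawn from $\mathcal{K}$ — absorbs $s_G^{(\kappa')}$ into its corresponding projectors; finally $s_G^{(\gamma)}$ acts as the identity on $\ket{i^{(x_\Gamma)}}$, and the resulting expression is precisely $\ket{\psi_{\mathcal{K}}(\xi)}$ again.

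Item 2 follows immediately from (c): moving $s_G^{(\xi_1)}$ through the projector block via (a) and collapsing $s_G^{(\xi_1)}s_G^{(\xi_2)}$ to $s_G^{(\xi_1\Delta\xi_2)}$ reassembles the definition of $\ket{\psi_{\mathcal{K}}(\xi_1\Delta\xi_2)}$. The hypothesis $\xi_1\notin\langle\mathcal{K}\rangle$ is only needed to make the conclusion non-redundant with item~1, by ensuring $\xi_1\Delta\xi_2$ lies in a coset of $\langle\mathcal{K}_G\rangle/\langle\mathcal{K}\rangle$ different from that of $\xi_2$. Item~3 is obtained by commuting the extra projector $(1+s_G^{(\kappa)})/\sqrt{2}$ past $s_G^{(\xi)}$ inside the definition of $\ket{\psi_{\mathcal{K}\cup\{\kappa\}}(\xi)}$ (again by (a)) and reading off $\ket{\psi_{\mathcal{K}}(\xi)}$ on the right.

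The main — but quite modest — technical point to watch is well-definedness on the quotient: replacing the representative $\xi$ by $\xi\Delta\kappa'$ with $\kappa'\in\langle\mathcal{K}\rangle$ must leave $\ket{\psi_{\mathcal{K}}(\xi)}$ invariant. This reduces to the same involution identity (b) used in item~1 applied to the projectors associated with $\kappa'$, so it is essentially a by-product of the proof of item~1 rather than a separate obstacle.
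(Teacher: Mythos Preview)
Your proof is correct and follows essentially the same route as the paper: for item~1 you decompose $\kappa=\gamma\Delta\kappa'$, commute past the projector block, absorb $s_G^{(\kappa')}$ into the projectors, and let $s_G^{(\gamma)}$ act trivially on $\ket{i^{(x_\Gamma)}}$ --- exactly as the paper does (the paper phrases the absorption as a permutation of the expanded sum $\sum_{\kappa'\in\langle\mathcal{K}\rangle}s_G^{(\kappa')}$, which is equivalent to your iterated involution identity). For items~2 and~3 the paper simply cites Proposition~\ref{prop::correlation_states_explicit_form}, whereas you argue directly from the definition via commutativity and the isomorphism $s_G^{(\xi_1)}s_G^{(\xi_2)}=s_G^{(\xi_1\Delta\xi_2)}$; this is, if anything, slightly more transparent, but not a different method.
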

With these properties one can derive the representation of graph states in the X-basis.
\begin{theorem}
[X-chain state representation of graph states]%
\label{theorem::graph_states_as_X-factorized_states}
\Needspace*{8\baselineskip}
Let $|G\rangle$ be a graph
state. Then $|G\rangle$ is a $\mathcal{K}_{G}$-correlation state, which is a
superposition of X-chain states $|\psi_{\emptyset}\left(  \xi\right)  \rangle
$, i.e.%
\begin{equation}
|G\rangle=|\psi_{\mathcal{K}_{G}}\rangle=\frac{1}{2^{\left\vert \mathcal{K}%
_{G}\right\vert /2}}\sum_{\xi\in\left\langle \mathcal{K}_{G}\right\rangle
}|\psi_{\emptyset}\left(  \xi\right)  \rangle.
\label{eq::graph_state_in_X_basis}
\end{equation}
\begin{proof}
According to property \ref{item::correlation_states_property_1} in Corollary
\ref{coro::properties_of_correlation_states}, one can infer that
$\ket{\psi_{\mathcal{K}_{G}}}$ is stabilized by all graph state stabilizers
$s_{G}^{(\xi)}$ with $\xi\in\left\langle \Gamma_{G}\right\rangle
\times\left\langle \mathcal{K}_{G}\right\rangle $. As a result of Lemma
\ref{lemma::X-chain_factorized_group}, $\ket{\psi_{\mathcal{K}_{G}}}$ is stabilized
by the whole graph state stabilizer group $\mathcal{S}_{G}$. According to the definition
of graph states in the stabilizer formalism, one can infer that $|G\rangle
=\ket{\psi_{\mathcal{K}_{G}}}$. The explicit form of $\ket{\psi_{\mathcal{K}_{G}}}$ in Eq. \eqref{eq::graph_state_in_X_basis} is obtained by Proposition \ref{prop::correlation_states_explicit_form}.
\end{proof}
\end{theorem}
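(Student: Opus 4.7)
The plan is to characterize $|G\rangle$ as the unique normalized state stabilized by the full stabilizer group $\mathcal{S}_G$, and then show that the right-hand side of Eq. \eqref{eq::graph_state_in_X_basis} has exactly that property. Since Lemma \ref{lemma::X-chain_factorized_group} gives the factorization $\mathcal{S}_G \sim \braket{\Gamma_G}\times\braket{\mathcal{K}_G}$, it suffices to verify that the candidate state is stabilized by every $s_G^{(\gamma)}$ with $\gamma\in\braket{\Gamma_G}$ and every $s_G^{(\kappa)}$ with $\kappa\in\braket{\mathcal{K}_G}$.

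First I would rewrite $|\psi_{\mathcal{K}_G}\rangle$ using Definition \ref{def::X-chain_states_K-correlation_states} (applied with $\mathcal{K}=\mathcal{K}_G$ and $\xi=\emptyset$) as $\prod_{\kappa\in\mathcal{K}_G}\tfrac{1+s_G^{(\kappa)}}{\sqrt{2}}|i^{(x_\Gamma)}\rangle$. Invoking property~\ref{item::correlation_states_property_1} of Corollary \ref{coro::properties_of_correlation_states} with $\mathcal{K}=\mathcal{K}_G$ immediately yields stabilization of $|\psi_{\mathcal{K}_G}\rangle$ by every element of $\braket{\Gamma_G}\times\braket{\mathcal{K}_G}$. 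Combined with the group isomorphism in Eq. \eqref{eq::prop_X-chain_group_correlation_group}, this shows $s\,|\psi_{\mathcal{K}_G}\rangle = |\psi_{\mathcal{K}_G}\rangle$ for every $s\in\mathcal{S}_G$. Together with the fact that $|\psi_{\mathcal{K}_G}\rangle$ is normalized (which follows from the product form since each factor $(1+s_G^{(\kappa)})/\sqrt{2}$ acts as an orthogonal projector up to the $\sqrt{2}$ scaling on the relevant $+1$-eigenspace), uniqueness of the stabilized state forces $|G\rangle = |\psi_{\mathcal{K}_G}\rangle$.

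To produce the explicit superposition form on the right-hand side of Eq. \eqref{eq::graph_state_in_X_basis}, I would apply Proposition \ref{prop::correlation_states_explicit_form} (the second identity, Eq. \eqref{eq::correlation_state_explicit_form}) with $\mathcal{K}=\mathcal{K}_G$ and $\xi=\emptyset$, which expands $|\psi_{\mathcal{K}_G}(\emptyset)\rangle$ as $2^{-|\mathcal{K}_G|/2}\sum_{\xi\in\braket{\mathcal{K}_G}}|\psi_\emptyset(\xi)\rangle$. This matches the claimed formula exactly.

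The only step requiring care is ensuring that the chosen $|i^{(x_\Gamma)}\rangle$ is actually a simultaneous $+1$-eigenstate of all $\pi_G(\gamma)\sigma_X^{(\gamma)}$ (so that the $\braket{\Gamma_G}$ stabilization is automatic before projecting), which is guaranteed by Proposition \ref{prop::X-chain_states_in_X-basis}. I anticipate the main subtlety to be bookkeeping rather than mathematical depth: one must verify that applying the product of projectors $\prod_{\kappa\in\mathcal{K}_G}(1+s_G^{(\kappa)})/\sqrt{2}$ to $|i^{(x_\Gamma)}\rangle$ yields a nonzero, properly normalized state. Normalization follows because the $s_G^{(\kappa)}$ generate an abelian group of $2^{|\mathcal{K}_G|}$ distinct Pauli operators acting nontrivially on the orbit of $|i^{(x_\Gamma)}\rangle$, so the orbit has full size $2^{|\mathcal{K}_G|}$ and the $\sqrt{2}^{|\mathcal{K}_G|}$ denominator restores unit norm. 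With this in place, all three pieces—stabilization, normalization, and explicit expansion—combine to give the theorem.
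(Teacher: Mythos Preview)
Your proposal is correct and follows essentially the same route as the paper: invoke property~\ref{item::correlation_states_property_1} of Corollary~\ref{coro::properties_of_correlation_states} to get stabilization by $\langle\Gamma_G\rangle\times\langle\mathcal{K}_G\rangle$, use Lemma~\ref{lemma::X-chain_factorized_group} to upgrade this to all of $\mathcal{S}_G$, conclude $|G\rangle=|\psi_{\mathcal{K}_G}\rangle$ by uniqueness of the stabilized state, and read off the explicit expansion from Proposition~\ref{prop::correlation_states_explicit_form}. Your added care about normalization and nonvanishing is a welcome elaboration that the paper's proof leaves implicit (indeed the paper notes separately that a global phase of $-1$ may appear), but the logical skeleton is identical.
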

\par
Note that the graph state obtained by this theorem may differ from the real one by a global phase $-1$, i.e. $\ket{G}=-\ket{\psi_{\mathcal{K}_G}}$ \footnote{This global phase can be corrected by the sign of the sum of the parities of all X-resources in the correlation group $\braket{\mathcal{K}_G}$, $\alpha=Sign(\sum_{\xi\in\braket{\mathcal{K}_G}}(\pi_G(\xi)))$, i.e. $\ket{G}=\alpha\ket{\psi_{\mathcal{K}_G}}$.}.
We summarize the approach of X-chain factorization of a graph state representation in a so-called factorization diagram.
\begin{algorithm}
[Factorization diagram]\label{algo::factorization_diagram_graph_states}
\Needspace*{8\baselineskip}
The X-chain factorization of graph states can be described in the \emph{factorization diagram} shown in Fig. \ref{fig::factorization_diagram_graph_states}.
\begin{enumerate}
  \item One decomposes the group $\mathcal{P}(V_G)$ into the direct product of the X-chain group $\braket{\Gamma_G}$ and the correlation group $\braket{\mathcal{K}_G}$ (Lemma \ref{lemma::X-chain_factorized_group}).
  \item From the X-chain group $\braket{\Gamma_G}$, one obtains the set of X-chain states $\Psi^{\emptyset}_{\mathcal{K}_G}$ (Proposition \ref{prop::X-chain_states_in_X-basis}).
  \item From the correlation group $\braket{\mathcal{K}_G}$, one obtains graph states via the superposition of the X-chain states in $\Psi^{\emptyset}_{\mathcal{K}_G}$ (Theorem \ref{theorem::graph_states_as_X-factorized_states}).
\end{enumerate}
\end{algorithm}
\begin{figure}[ht!]
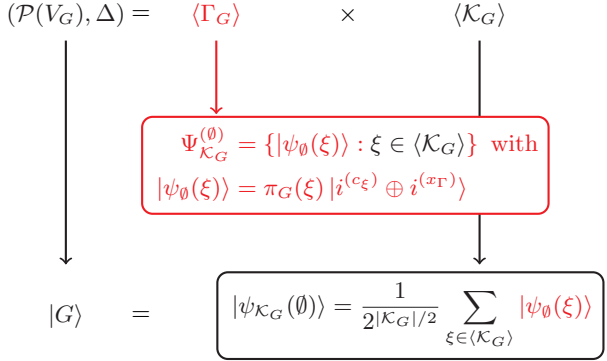

  \centering
  \Xchaindiagram
  \caption{\colorfig X-chain factorization diagram of graph states: A graphical summary of Proposition \ref{prop::X-chain_states_in_X-basis}, \ref{prop::correlation_states_explicit_form} and Theorem \ref{theorem::graph_states_as_X-factorized_states}. This diagram illustrates the algorithm for representing a graph state in the X-basis.}
  \label{fig::factorization_diagram_graph_states}
\end{figure}
\noindent
The arrows in the factorization diagram can be interpreted as a mapping from the sets of
X-resources to their corresponding stabilized Hilbert subspaces.
As we already discussed at the end of the section
\ref{sec::representation_of_graph_states}, a graph state is mapped from the
powerset of vertices by stabilizer induction, which is depicted in the left
hand side of the equality in the diagram. The equation in the first row is the
X-chain factorization of the group $(\mathcal{P}(V_{G}),\Delta)$ (Lemma
\ref{lemma::X-chain_factorized_group}).
The arrow from the X-chain group $\Gamma_{G}$ to the X-chain states $\Psi^{(\emptyset)}$ interprets the
mapping from the X-chain group to the stabilized subspace spanned by
$\Psi^{(\emptyset)}$ (Definition
\ref{def::X-chain_states_K-correlation_states} and Proposition
\ref{prop::X-chain_states_in_X-basis}).
The arrow from the correlation group $\langle\mathcal{K}_{G}\rangle$ through the X-chain states
$\Psi^{(\emptyset)}$ to the $\mathcal{K}_{G}$-correlation state is a mapping
from the subspace $\mathrm{span}(\Psi^{(\emptyset)})$ to the $\mathcal{K}_{G}%
$-correlation state $|\psi_{\mathcal{K}_{G}}\rangle$, which is stabilized by
the $\mathcal{K}_{G}$-stabilizers. This arrow-represented mapping is the
summation (superposition) of the X-chain states over the correlation group
$\braket{\mathcal{K}_G}$ (Proposition
\ref{prop::correlation_states_explicit_form}). Since the graph state $\ket{G}$
is the only stabilized state of the stabilizers induced by the group
$\braket{\Gamma_G}\times\braket{\mathcal{K}_G}$, it is identical to
the $\mathcal{K}_{G}$-correlation state $|\psi_{\mathcal{K}_{G}}\rangle$
(Theorem \ref{theorem::graph_states_as_X-factorized_states}), which is
represented by the equality of the last line in the factorization diagram.
With the help of the factorization diagram in Fig.
\ref{fig::determination_of_X-chain_state_XchainDiagram}, the graph state
$|K_{4}^{\lnot1}\rangle$ is given by%
\begin{equation}
|K_{4}^{\lnot1}\rangle=\frac{1}{2}\left(  \left\vert 1000\right\rangle
+\left\vert 0010\right\rangle +\left\vert 0101\right\rangle -\left\vert
1111\right\rangle \right)  .
\end{equation}
\bigskip
\par
Since the edge number $|E_{G[\xi]}|$ is identical to the product $\langle i_{Z}^{(\xi)},i_{Z}^{(\xi)}\rangle_{A_{G}}$ in Eq. \eqref{eq::graph_states_in_Z-basis},
according to the definition of the stabilizer-parity (Def. \ref{def::stabilizer_parity}),
\begin{equation}
  \pi_{G}\left(  \xi\right) = (-1)^{\langle i_{Z}^{(\xi)},i_{Z}^{(\xi)}\rangle_{A_{G}}}.
\end{equation}
Hence the representation of graph states in the Z-basis in Eq.
\eqref{eq::graph_states_in_Z-basis} can be reformulated as%
\begin{equation}
|G\rangle=\frac{1}{2^{n/2}}\sum_{\xi\subseteq V_{G}}\pi_{G}\left(  \xi\right)
|i_{Z}^{(\xi)}\rangle.
\label{eq::graph_states_in_Z-basis_with_pi}
\end{equation}
Comparing this Z-representation with the representation of a graph state in the X-basis given
in Eq. \eqref{eq::graph_state_in_X_basis}, the number of terms in the
representation is reduced from $2^{|V_{G}|}$ to $2^{|\mathcal{K}_{G}|}$.
The correlation group $\braket{\mathcal{K}_{G}}$ can be directly obtained if one knows the X-chain group.
The X-chain group can be searched by a criterion that the cardinality of the intersection of the vertex neighborhood with the X-chain $|N_v\cap \xi|$ should be even for all $v\in V_G$ \cite{WuKampermannBruss2015-XChainAlgo}.
The search of the X-chains of a graph state $\ket{G}$ is equivalent to finding the 2-modulus-kernel of the adjacency matrix of the graph $G$.
As this is efficient, the representation of graph states in the X-basis is feasible.
The larger the X-chain group that a graph state possesses, the smaller is its correlation group and hence
the more efficient is its X-chain factorization.
%
%
%
\par
Note that not every graph state has non-trivial X-chains (non-trivial means not the empty set).
For graph states without non-trivial X-chains, their X-chain factorization contains all X-basis states, and thus has the same difficulty as their Z-representation.
\par
Besides, the X-chain factorization of graph states in Theorem \ref{theorem::graph_states_as_X-factorized_states} implies that the possible outcomes of X-measurements are only the X-chain states,
$|\psi_{\emptyset}\left(  \xi\right)  \rangle$.
Consequently two graph states with different X-chains can have different X-chain states, and hence are distinguishable via the X-measurement outcomes.
In Table \ref{table::X-chain_states_of_3-vertex_graph_states}, we list the X-chain
generators and X-chain states of graph states with $3$ vertices.
Since the X-chain states of these graph states are different from each other,
one can therefore distinguish these $8$ graph states via local X-measurements with non-zero probability of success.
\def\threeVGSWidth{0.2\columnwidth}
\begin{table}[ht!]
\begin{tabular}
[c]{|C{\threeVGSWidth}|C{0.25\columnwidth}|C{0.5\columnwidth}|}%
\toprule

$\ket{G}$ & $\Gamma_{G}$ & $\Psi^{(\emptyset)}_{\mathcal{K}_{G}}=\{\ket{\psi_{\emptyset}(\xi)}:\xi\in\braket{\mathcal{K}_G}\}$\\
\midrule

\includegraphics[width=\threeVGSWidth]{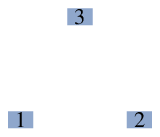}
& $\left\{  \{1\},\{2\},\{3\}\right\}  $ & $\left\{  \left\vert
000\right\rangle \right\}  $\\
\midrule

\includegraphics[width=\threeVGSWidth]{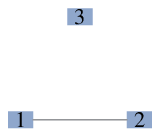}
&
$\left\{  \{3\} \right\}  $ &
$\left\{\ket{000},\ket{010},\ket{100},-\ket{110}\right\}$
\\
\midrule
\includegraphics[width=\threeVGSWidth]{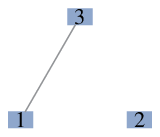}
&
$\left\{  \{2\} \right\}  $ &
$\left\{\ket{000},\ket{001},\ket{100},-\ket{101}\right\}$
\\
\midrule
\includegraphics[width=\threeVGSWidth]{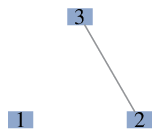}
&
$\left\{  \{1\} \right\}  $ &
$\left\{\ket{000},\ket{001},\ket{010},-\ket{011}\right\}$
\\
\midrule

\includegraphics[width=\threeVGSWidth]{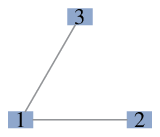}
&
$\left\{  \{2,3\} \right\}  $ &
$\left\{\ket{000},\ket{100},\ket{011},-\ket{111}\right\}$
\\
\midrule
\includegraphics[width=\threeVGSWidth]{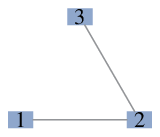}
&
$\left\{  \{1,3\} \right\}  $ &
$\left\{\ket{000},\ket{100},\ket{101},-\ket{111}\right\}$
\\
\midrule
\includegraphics[width=\threeVGSWidth]{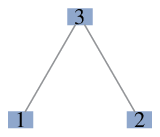}
&
$\left\{  \{1,2\} \right\}  $ &
$\left\{\ket{000},\ket{001},\ket{110},-\ket{111}\right\}$
\\
\midrule

\includegraphics[width=\threeVGSWidth]{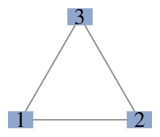}
&
$\left\{  \{1,2,3\} \right\}  $ &
$\left\{  \left\vert 100\right\rangle
,\left\vert 010\right\rangle ,\left\vert 001\right\rangle ,-\left\vert
111\right\rangle \right\}  $
\\

\bottomrule
\end{tabular}
\caption{X-chain states of 3-vertex graph states}
\label{table::X-chain_states_of_3-vertex_graph_states}

\end{table}

\section{Application of the X-chain factorization}
\label{sec::application_of_X-chains}
The representation of graph states in the X-chain factorization reveals certain substructures of graph states.
In this section, we discuss its usefulness for the calculation of graph state overlaps, the Schmidt decomposition and unilateral projections in bipartite systems.

\subsection{Graph state overlaps}
\label{sec::graph_state_overlap}
In \cite{WuRKSKMBruss2014-05}, the overlaps of graph states are the basis for genuine multipartite entanglement detection of randomized graph states with projector-based witnesses $W_{G}=\id/2-\projector{G}$, see \cite{AcinBLSanpera2001-07,GuhneHBELMSanpera2002-12}, where $G$ is a connected graph.
An expectation value $\mathrm{tr}(|H\rangle\langle H|G\rangle\langle G|)>1/2$ indicates the presence of genuine multipartite entanglement of the graph state $\ket{H}$.
\par
In general, a graph state $|G\rangle=\prod_{e\in E_{G}}U_{Z}^{(e)}|0_{X}\rangle$ is created by control-Z operators $U_{Z}^{(e)}$, where
\begin{equation}
U_{Z}^{\{v_a,v_b\}}:=\projector{0}^{(a)}\otimes\id^{(b)}+\projector{1}^{(a)}\otimes\sigma_Z^{(b)}.
\end{equation}
Since the operators $U_{Z}^{(e)}$ commute for different edges $e$ and are unitary and Hermitian, the overlap $\braket{G|H}$ is calculated by%
\begin{equation}
\label{eq::graph_state_overlap_1}
\left\langle G|H\right\rangle =\langle0^{\otimes n}_{X}|\prod_{e\in E_{G}\Delta E_{H}%
}U_{Z}^{(e)}|0^{\otimes n}_{X}\rangle=\langle0^{\otimes n}_{X}|G\Delta H\rangle.
\end{equation}
According to Eq. \eqref{eq::graph_states_in_Z-basis},
\begin{equation}
\label{eq::graph_state_overlap_2}
\langle G|H\rangle=\frac{1}{2^{n/2}}\sum_{i=0}^{2^{n}-1} \left(
-1\right)  ^{\left\langle i_{Z},i_{Z}\right\rangle _{A_{G\Delta H}}},%
\end{equation}
where $G\Delta H$ is the symmetric difference of the graphs $G$ and $H$.
$G\Delta H$ is the graph $(V_{G\Delta H},E_{G\Delta H})$, whose vertices and edges are $V_{G\Delta H}=V_{G}=V_{H}$ and $E_{G\Delta H}=E_{G}\cup E_{H}\setminus E_{G}\cap E_{H}$, respectively.
However, the complexity of this calculation increases exponentially with the size of the system.
%
\par
The quantity obtained from Eq. \eqref{eq::graph_states_in_Z-basis},
\begin{equation}
\left\langle 0^{\otimes n}_{X}|G\right\rangle =\frac{1}{2^{n/2}}\sum_{i=0}^{2^{n}-1}\left(-1\right)  ^{\left\langle i_{Z},i_{Z}\right\rangle _{A}},%
\end{equation}
corresponds to the difference of the positive and negative amplitudes of $\ket{G}$ in the Z-basis.
We can define for each graph state $\ket{G}$ a Boolean function
$f_{G}:=\braket{i_Z,i_Z}_{A} \pmod 2$ with $A$ being the adjacency matrix.
The function $f_{G}$ is balanced, if and only if $\braket{0^{\otimes n}_X|G}=0$, otherwise it is biased.
We introduce the bias degree of a graph state and define its Z-balance as follows.

\begin{definition}
[Bias degree and Z-balanced graph states]%
\label{def::bias_degree_balanced_graph_states}
\Needspace*{8\baselineskip}
The \emph{(Z-)bias degree} $\beta$ of
a graph state $|G\rangle$ with $n$ vertices is defined as the overlap
\begin{equation}
\beta(|G\rangle):=\langle0_{X}^{\otimes n}|G\rangle,
\label{eq::def_bias_degree_of_graphs}%
\end{equation}
where $|0_{X}\rangle=\left(  |0_{Z}\rangle+|1_{Z}\rangle\right)  /\sqrt{2}$.
A graph state with zero bias degree is called \emph{Z-balanced}.
\hiddengls{biasDegree}
\end{definition}
\par
The bias degree is related to the weight of a graph
state, $\omega^{-}\left(  G\right)  :=\left\vert \left\{  i_{Z}:\langle
i_{Z}|G\rangle/\left\vert \langle i_{Z}|G\rangle\right\vert =-1\right\}
\right\vert $, which is equal to the number of minus amplitudes in $|G\rangle$
in the Z-basis \cite{CosentinoSimone2009-Weight}.
The probability of finding a negative amplitude in the Z-basis is $1/2-\beta(|G\rangle)/2$, which is equal to
$\omega^{-}\left(  G\right)  /2^{n}$.
Note that as a result of Eq. \eqref{eq::graph_states_in_Z-basis_with_pi}, the bias degree of a graph state is equal to the sum of its stabilizer parities.
\begin{equation}
  \beta(\ket{G}) = \sum_{\xi\subseteq V_G}\pi_G(\xi).
\end{equation}
\bigskip
\par
As a result of Theorem \ref{theorem::graph_states_as_X-factorized_states}, the bias degree $\Braket{0_x|G}$, depends only on the number of X-chain generators and the parity of their corresponding X-resources.
\begin{corollary}
[Graph state overlaps and bias degrees]\label{coro::balanced_and_orthogonal_graph_states}
\Needspace*{10\baselineskip}
The overlap of two graph states $\ket{G}$ and $\ket{H}$ is equal to the bias degree of the graph state $\ket{G \Delta H}$, i.e.
\begin{equation}\label{eq::graph_state_overlap_coro}
\braket{G|H} = \beta(\ket{G\Delta H}).
\end{equation}
The bias degree of a graph state $|G\rangle$ is equal to%
\begin{equation}
|\beta(|G\rangle)|=\frac{1}{2^{\left(  n-\left\vert \Gamma_{G}\right\vert
\right)  /2}}\prod_{\gamma\in\Gamma_{G}}\delta_{\pi_{G}(\gamma)}^{1},
\end{equation}
where $\Gamma_{G}$ is the X-chain generating set of $|G\rangle$, $\delta$ is the Kronecker-delta and $\pi
_{G}(\gamma)$ is the stabilizer-parity of X-chain generators $\gamma$.

\begin{proof}
First we prove that there does not exist $\xi$ such that $c_{\xi}=x_{\Gamma}$.
Assume $c_{\xi}=x_{\Gamma}$, then $|c_{\xi}\cap\gamma|\overset
{\operatorname{mod}2}{=}|\xi\cap c_{\gamma}|=0$.
However, according to the definition of $x_{\Gamma}$ (Def. \ref{prop::X-chain_states_in_X-basis}),
$|c_{\xi}\cap\gamma|=|x_{\Gamma}\cap\gamma|=1$ which contradicts $|c_{\xi}%
\cap\gamma|=0\operatorname{mod}2$.
Then the only possible zero X-chain state is $\ket{i^{x_{(\Gamma)}}}$.
Therefore Theorem \ref{theorem::graph_states_as_X-factorized_states} leads to%
\begin{equation}
|\beta(|G\rangle)|=\frac{1}{2^{\left(  n-\left\vert \Gamma_{G}\right\vert
\right)  /2}}\langle 0_{X}|i^{(x_{\Gamma})}\rangle .
\end{equation}
According to the definition of the X-chain basis, $x_{\Gamma}=\emptyset$ if and
only if $\pi_{G}\left(  \gamma\right)  =1$ for all X-chain generators
$\gamma\in\Gamma_{G}$, that means $\left\langle 0_{X}|i^{(x_{\Gamma}%
)}\right\rangle =\prod_{\gamma\in\Gamma_{G}}\delta_{\pi_{G}(\gamma)}^{1}$.
\end{proof}
\end{corollary}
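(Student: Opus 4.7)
The plan is to prove the two statements separately, treating the first as essentially a bookkeeping identity and the second as the main calculation. For the overlap formula $\braket{G|H} = \beta(\ket{G\Delta H})$, I would simply combine the identity $\braket{G|H} = \braket{0_X^{\otimes n} | G\Delta H}$ already derived in Eq.~(\ref{eq::graph_state_overlap_1}) (from the commutativity and Hermiticity of the control-$Z$ operators) with Definition~\ref{def::bias_degree_balanced_graph_states} of the bias degree. This part requires no new work beyond recognizing the right-hand side.

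For the formula for $|\beta(\ket{G})|$, my plan is to substitute the X-chain state representation of Theorem~\ref{theorem::graph_states_as_X-factorized_states} directly into the definition:
\begin{equation}
\beta(\ket{G}) = \braket{0_X^{\otimes n} | G} = \frac{1}{2^{|\mathcal{K}_G|/2}}\sum_{\xi\in \braket{\mathcal{K}_G}} \braket{0_X^{\otimes n} | \psi_{\emptyset}(\xi)}.
\end{equation}
By Proposition~\ref{prop::correlation_states_explicit_form}, each X-chain state equals $\pi_G(\xi)\ket{i^{(x_\Gamma)\oplus i^{(c_\xi)}}}$, so only those terms with $c_\xi = x_\Gamma$ contribute. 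Since the members of $\braket{\mathcal{K}_G}$ carry pairwise distinct correlation indices (by Lemma~\ref{lemma::X-chain_factorized_group}), at most one $\xi$ survives, and the prefactor $2^{-|\mathcal{K}_G|/2}$ rewrites as $2^{-(n-|\Gamma_G|)/2}$ via the dimension count from Proposition~\ref{prop::X-chain_states_in_X-basis}.

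The main obstacle is showing that the condition $c_\xi = x_\Gamma$ is only achievable when $x_\Gamma = \emptyset$, i.e.\ when every X-chain generator has positive parity. The key tool is the symmetric identity
\begin{equation}
|c_\xi \cap \gamma| \;\equiv\; |\xi \cap c_\gamma| \pmod{2},
\end{equation}
which follows because both sides count the parity of edges in $E_G(\xi:\gamma)$. Applied to any X-chain generator $\gamma\in\Gamma_G$, the right-hand side vanishes (since $c_\gamma=\emptyset$), so $|c_\xi\cap\gamma|$ is even. On the other hand, the explicit construction of $x_\Gamma$ in Proposition~\ref{prop::X-chain_states_in_X-basis} gives $|x_\Gamma\cap\gamma_i|=1$ precisely when $\pi_G(\gamma_i)=-1$, using the exclusive vertex $v_{i_1}$. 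Hence $c_\xi = x_\Gamma$ forces $\pi_G(\gamma)=+1$ for all $\gamma\in\Gamma_G$, and in that case the unique contributing $\xi$ is $\emptyset$ with $\pi_G(\emptyset)=+1$. Packaging the resulting dichotomy into the Kronecker symbol $\prod_{\gamma\in\Gamma_G}\delta^{1}_{\pi_G(\gamma)}$ delivers the stated formula.
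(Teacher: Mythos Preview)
Your proposal is correct and follows essentially the same route as the paper's proof: both insert the X-chain representation of Theorem~\ref{theorem::graph_states_as_X-factorized_states}, reduce to the single overlap $\langle 0_X^{\otimes n}|i^{(x_\Gamma)}\rangle$ by excluding $c_\xi=x_\Gamma$ for nontrivial $\xi$ via the parity identity $|c_\xi\cap\gamma|\equiv|\xi\cap c_\gamma|\pmod 2$ with $c_\gamma=\emptyset$, and then read off the Kronecker product from the construction of $x_\Gamma$ in Proposition~\ref{prop::X-chain_states_in_X-basis}. Your write-up is in fact a bit more explicit than the paper's (you spell out why only one $\xi$ can survive and why $\pi_G(\emptyset)=1$, and you handle the first identity $\braket{G|H}=\beta(\ket{G\Delta H})$ separately), but the underlying argument is identical.
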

\noindent
In \cite{CosentinoSimone2009-Weight}, the authors relate the weight
$\omega^{-}\left(  G\right)  $ to the binary rank of the adjacency matrix of graphs.
Our Corollary \ref{coro::balanced_and_orthogonal_graph_states} is a similar result showing that the bias degree depends on the binary rank of the adjacency matrix, which is equal to $n-\left\vert \Gamma_{G}\right\vert $.
%
%
\bigskip
\par
Here, we focus on the bias degree and Z-balance of graph states.
Since the X-chain group of a graph state can be efficiently determined, instead of Eq. \eqref{eq::graph_state_overlap_2}, Corollary \ref{coro::balanced_and_orthogonal_graph_states} provides an efficient method to calculate the graph state overlap.
As a result of Corollary \ref{coro::balanced_and_orthogonal_graph_states}, we arrive at the following corollary.
\begin{corollary}[Z-balanced graph states]
  A graph state is Z-balanced, if and only if it has at least one X-chain generator $\gamma^{-}$ with negative stabilizer-parity, i.e. $\left\vert E\left(  G\left[  \gamma^{-}\right]\right)  \right\vert $ is odd.
  Two graph states are orthogonal, if and only if $|G\Delta H\rangle$ is Z-balanced.
\end{corollary}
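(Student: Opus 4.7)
The plan is to derive both statements directly from Corollary \ref{coro::balanced_and_orthogonal_graph_states}, which has already been established. The main content is unpacking what the explicit formula for $|\beta(|G\rangle)|$ says and combining it with the identity $\braket{G|H} = \beta(\ket{G\Delta H})$.

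First I would handle the characterization of Z-balance. By Definition \ref{def::bias_degree_balanced_graph_states}, $|G\rangle$ is Z-balanced iff $\beta(|G\rangle) = 0$. From the formula
\begin{equation}
|\beta(|G\rangle)| = \frac{1}{2^{(n-|\Gamma_G|)/2}} \prod_{\gamma \in \Gamma_G} \delta_{\pi_G(\gamma)}^{1},
\end{equation}
the prefactor is strictly positive, so the bias degree vanishes iff the product of Kronecker-deltas vanishes. Since each factor is either $0$ or $1$, the product is zero iff at least one factor $\delta_{\pi_G(\gamma)}^{1}$ equals zero, that is, iff there exists some $\gamma^{-} \in \Gamma_G$ with $\pi_G(\gamma^{-}) = -1$. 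By Definition \ref{def::stabilizer_parity}, $\pi_G(\gamma^{-}) = -1$ is equivalent to $|E_{G[\gamma^{-}]}|$ being odd, giving the first claim.

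Next I would obtain the orthogonality statement as an immediate consequence. By Eq.~\eqref{eq::graph_state_overlap_coro},
\begin{equation}
\braket{G|H} = \beta(\ket{G\Delta H}),
\end{equation}
so $\braket{G|H} = 0$ iff $\beta(\ket{G\Delta H}) = 0$, which by the definition of Z-balance is iff $\ket{G\Delta H}$ is Z-balanced. Combining this with the first part yields the equivalent combinatorial criterion that $\ket{G}$ and $\ket{H}$ are orthogonal iff the X-chain generating set $\Gamma_{G\Delta H}$ contains a generator inducing an odd number of edges in $G\Delta H$.

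Since both parts are direct corollaries of the preceding result, there is no real obstacle in the argument itself. The only thing to double-check is that the first claim is a genuine equivalence rather than a one-way implication, which follows because each $\delta_{\pi_G(\gamma)}^{1}$ is binary-valued, so the product is nonzero precisely when every factor is $1$, i.e., when every X-chain generator has positive stabilizer-parity. Contrapositively, $\beta(|G\rangle) = 0$ holds iff at least one generator has negative parity, as required.
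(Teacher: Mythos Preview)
Your proposal is correct and follows exactly the route the paper intends: the paper states this corollary without proof, simply as ``a result of Corollary \ref{coro::balanced_and_orthogonal_graph_states}'', and your argument spells out precisely that deduction. There is nothing to add.
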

%
%
%
%
\par
Knowing all the Z-balanced graph states with vertex number $n$ allows one to identify
all pairs of orthogonal graph states with $n$ vertices.
Note that relabeling a graph state (graph isomorphism) does not change its bias degree,
since the structure of the X-chain group does not change under graph isomorphism.
%
\par
In Fig. \ref{fig::balanced_graph_states_up_to_5v}, the Z-balanced graph states
up to five vertices are listed.
Every graph in the figure represents an isomorphic class.
From these balanced graph states one can obtain orthogonal
graph states via the graph symmetric difference. Examples of orthogonal graph
states derived from the Z-balanced graph states $|C_{3}\rangle$ and
$|C_{5}\rangle$ are shown in Figs. \ref{fig::orthogonal_graph_states_C3} and
\ref{fig::orthogonal_graph_states_C5} , respectively, ($C_{3}$ and $C_{5}$ are
the first and fifth graph in Fig. \ref{fig::balanced_graph_states_up_to_5v} ).
\begin{figure*}[ht!]

  \centering
  \begin{tabular}{|C{0.8\textwidth}|}\centering
  \includegraphics[width=0.95\linewidth]{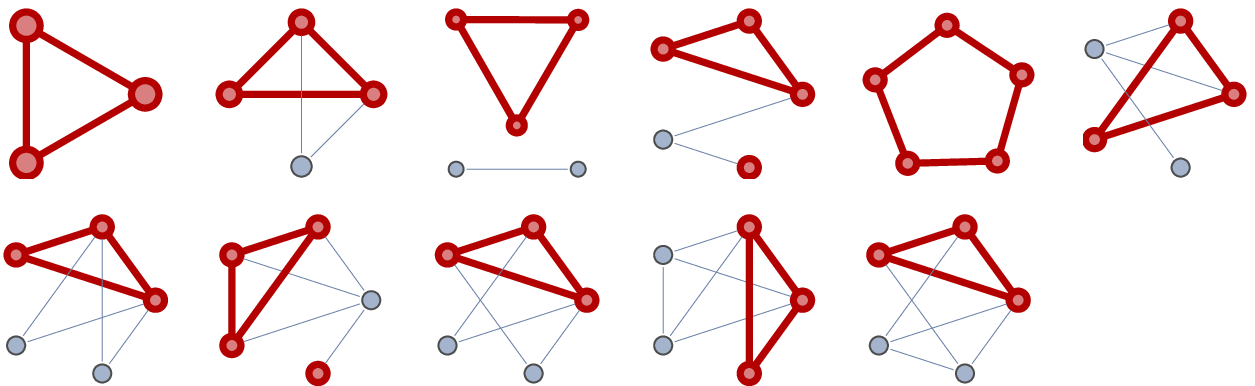}\\
  \end{tabular}

  \caption{\colorfig Z-balanced graph states (see Def. \ref{def::bias_degree_balanced_graph_states}) up to $5$ vertices: Each graph represents a graph isomorphic class. Each balanced graph state has at least one X-chain $\gamma^-$ with negative parity. In each graph, the $\gamma^-$-induced subgraph $G[\gamma^-]$ is highlighted in red with bold edges. Every highlighted $\gamma^-$-induced subgraph has an odd edge number.}
  \label{fig::balanced_graph_states_up_to_5v}

\end{figure*}%
\begin{figure*}[ht]
  \centering
\def\gwidth{0.1\textwidth}
\def\cellwidth{0.12\textwidth}
\begin{tabular}{
|>{\centering\arraybackslash}m{\cellwidth}>{\centering\arraybackslash}m{\cellwidth}
|>{\centering\arraybackslash}m{\cellwidth}>{\centering\arraybackslash}m{\cellwidth}
|>{\centering\arraybackslash}m{\cellwidth}>{\centering\arraybackslash}m{\cellwidth}
|}
\hline \noalign{\smallskip}
%
%
\includegraphics[width=\gwidth]{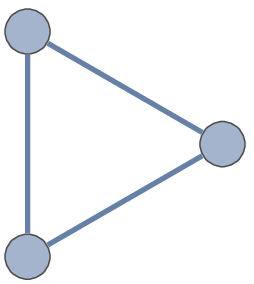} & \includegraphics[width=\gwidth]{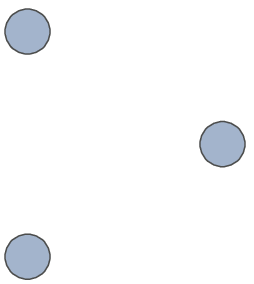} &

\includegraphics[width=\gwidth]{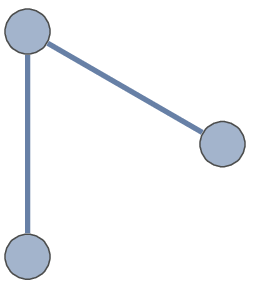} & \includegraphics[width=\gwidth]{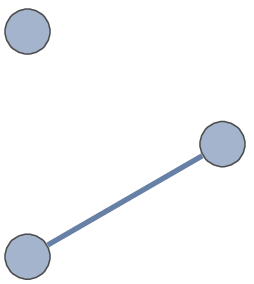}&

\includegraphics[width=\gwidth]{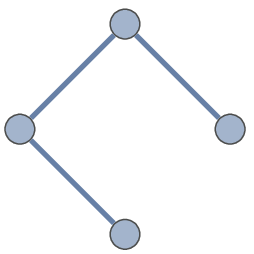} & \includegraphics[width=\gwidth]{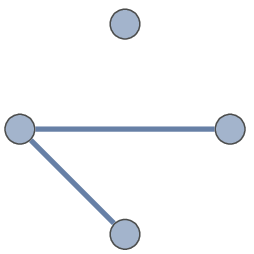}%
\\ \noalign{\smallskip} \hline
\end{tabular}

  \caption{Orthogonal graph states derived from the Z-balanced graph state $|C_3\rangle$: The graph states in each cell are orthogonal to each other. Their symmetric difference is identical to the cycle graph $C_3$, where $C_3$ is the first graph in Fig. \ref{fig::balanced_graph_states_up_to_5v}.}
  \label{fig::orthogonal_graph_states_C3}
\end{figure*}%
\begin{figure*}[ht]
  \centering
\def\gwidth{0.12\textwidth}
\def\cellwidth{0.14\textwidth}
\begin{tabular}{
|>{\centering\arraybackslash}m{\cellwidth}>{\centering\arraybackslash}m{\cellwidth}
|>{\centering\arraybackslash}m{\cellwidth}>{\centering\arraybackslash}m{\cellwidth}
|>{\centering\arraybackslash}m{\cellwidth}>{\centering\arraybackslash}m{\cellwidth}
|}
\hline\noalign{\smallskip}
%
%
%
\includegraphics[width=\gwidth]{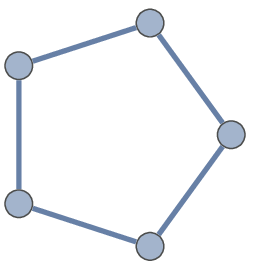} & \includegraphics[width=\gwidth]{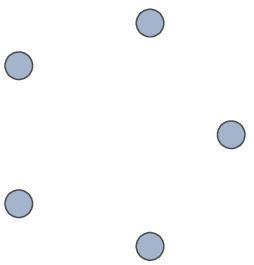} &

\includegraphics[width=\gwidth]{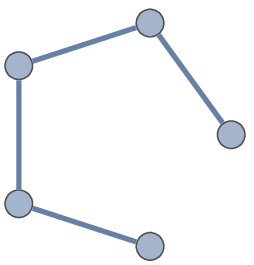} & \includegraphics[width=\gwidth]{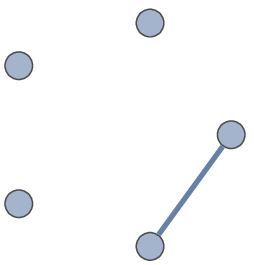} &

\includegraphics[width=\gwidth]{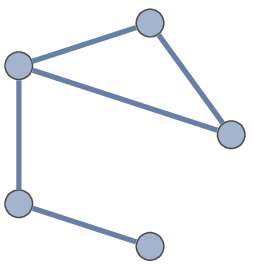} & \includegraphics[width=\gwidth]{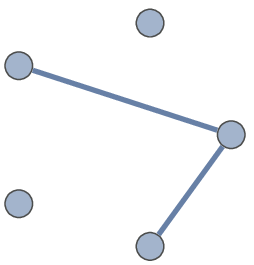} %
\\
\noalign{\smallskip}
\hline
\noalign{\smallskip}

\includegraphics[width=\gwidth]{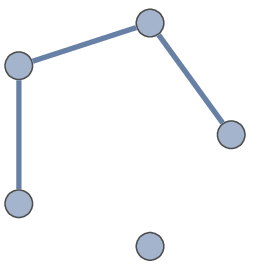} & \includegraphics[width=\gwidth]{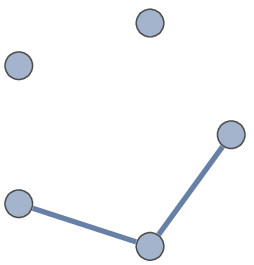} &

\includegraphics[width=\gwidth]{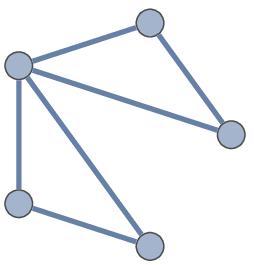} & \includegraphics[width=\gwidth]{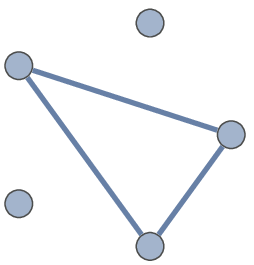} &

\includegraphics[width=\gwidth]{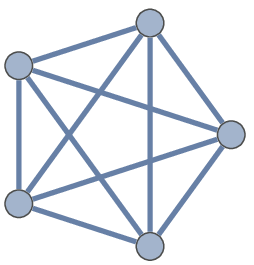} & \includegraphics[width=\gwidth]{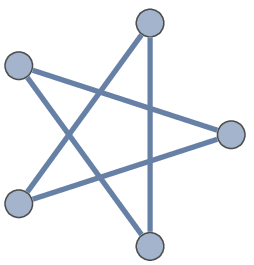}%
\\ \noalign{\smallskip} \hline
\end{tabular}

  \caption{Orthogonal graph states derived from the Z-balanced graph state $|C_5\rangle$: The graph states in each cell are orthogonal. Their symmetric difference is identical to the cycle graph $C_5$, where $C_5$ is the fifth graph in Fig. \ref{fig::balanced_graph_states_up_to_5v}.}
  \label{fig::orthogonal_graph_states_C5}
\end{figure*}%

\subsection{Schmidt decomposition}
\label{sec::schmidt_decomposition}
In this section, we discuss the Schmidt decomposition of graph states represented in
the X-basis, which is derived via the X-chain factorization.
The Schmidt decomposition of a graph state for an $A|B$-bipartition reads%
\begin{equation}
|G\rangle=\frac{1}{2^{r_{S}/2}}\sum_{i=1}^{r_S}\ket{\phi_{i}^{(A)}}\ket{\psi_{i}^{(B)}},
\label{eq::Schmidt_decomposition_of_graph_states_general}%
\end{equation}
where $\braket{\phi_i^{(A)}|\phi_j^{(A)}}=\delta_{ij}$ and $\braket{\psi_i^{(B)}|\psi_j^{(B)}}=\delta_{ij}$.
\hiddengls{SchmidtRank}
Here $r_{S}$ is the Schmidt rank of the graph state $|G\rangle$ with respect to the partition $A$ versus $B$.
Its value%
\begin{equation}
r_{S}=\left\vert S_{A}\right\vert :=\left\vert \left\{  s_{G}^{(\xi)}\in
S_{G}:\text{supp}(s_{G}^{(\xi)})\subseteq A\right\}  \right\vert
\end{equation}
is studied in the section III.B of Ref. \cite{HeinEisertBriegel2004-06} via the
Schmidt decomposition of graph states in the Z-basis, where $\mathrm{supp}%
(s_{G}^{(\xi)})$ is the support of the stabilizer $s_{G}^{(\xi)}$. The
$\mathrm{supp}(s_{G}^{(\xi)})$ is equal to the projection on the Hilbert space spanned by qubits corresponding to the vertices $\xi\cup c_{\xi}$, which is the set of vertices on which the stabilizer $s_{G}^{(\xi)}$ acts non-trivially (i.e. not equal to the identity).
\bigskip
\par
We derive the Schmidt decomposition of graph states in the X-basis in the following steps. First, we generalize the X-chain factorization of graph states (Theorem \ref{theorem::graph_states_as_X-factorized_states}) to
the X-chain factorization of arbitrary correlation states (Theorem
\ref{theorem::X-factorization_of_correlation_states}). Second, we introduce
three correlation subgroups, whose correlation states are $A|B$-biseparable
(Lemma \ref{lemma::biseparable_AB-correlation_states}). Third, we prove the
orthonormality of these correlation states (Lemma
\ref{lemma::orthonormality_of_AB-correlation_states}). At the end, we arrive
at the Schmidt decomposition in Theorem
\ref{theorem::Schmidt_decomposition_in_AB-correlation_states}.

The X-chain factorization of graph states in Theorem
\ref{theorem::graph_states_as_X-factorized_states} can be generalized to
correlation states (introduced in Eqs. \eqref{eq::def_correlation_states} and \eqref{eq::correlation_state_explicit_form}) as follows.
\begin{theorem}
[X-chain factorization of $\mathcal{K}$-correlation states]%
\label{theorem::X-factorization_of_correlation_states}
\Needspace*{10\baselineskip}
Let $\left\langle\mathcal{K}_{1}\right\rangle ,\left\langle \mathcal{K}_{2}\right\rangle
\subseteq\left\langle \mathcal{K}_{G}\right\rangle $
be two disjoint correlation subgroups of a graph state $|G\rangle$, and $\mathcal{K}=\mathcal{K}_1\cup\mathcal{K}_2$.
Then the $\mathcal{K}$-correlation state is a superposition of
$\mathcal{K}_{1}$-correlation states,%
\begin{equation}
|\psi_{\mathcal{K}}\left(  \xi\right)  \rangle
=\frac{1}{2^{\left\vert \mathcal{K}_{2}\right\vert /2}}\sum_{\xi^{\prime}%
\in\left\langle \mathcal{K}_{2}\right\rangle }|\psi_{\mathcal{K}_{1}}\left(
\xi\Delta\xi^{\prime}\right)  \rangle
\end{equation}
with $\xi\in\braket{\mathcal{K}_{G}}/\braket{\mathcal{K}}  $ being an element in their quotient group.
Theorem \ref{theorem::graph_states_as_X-factorized_states} is a special case of this
theorem related by $\braket{\mathcal{K}}=\left\langle \mathcal{K}_{1}\right\rangle \times\left\langle
\mathcal{K}_{2}\right\rangle =\mathcal{\emptyset}\times\left\langle
\mathcal{K}_{G}\right\rangle $.

\begin{proof}
According to the definition in Eq. (\ref{eq::def_correlation_states}) it
holds
\begin{equation}
|\psi_{\mathcal{K}_{1}\cup\mathcal{K}_{2}}\left(  \xi\right)  \rangle
=s_{G}^{(\xi)}\prod_{\kappa\in\mathcal{K}_{2}}\frac{1+s_{G}^{(\kappa)}}%
{\sqrt{2}}\prod_{\kappa\in\mathcal{K}_{1}}\frac{1+s_{G}^{(\kappa)}}{\sqrt{2}%
}\left\vert i^{(x_{\Gamma})}\right\rangle .
\end{equation}
Due to the commutativity of the graph state stabilizers it follows%
\begin{equation}
|\psi_{\mathcal{K}}\left(  \xi\right)  \rangle=|\psi_{\mathcal{K}_{1}%
\cup\mathcal{K}_{2}}\left(  \xi\right)  \rangle=\prod_{\kappa\in
\mathcal{K}_{2}}\frac{1+s_{G}^{(\kappa)}}{\sqrt{2}}|\psi_{\mathcal{K}_{1}%
}\left(  \xi\right)  \rangle.
\end{equation}
According to Proposition \ref{prop::isomorphism_of_vertex-induction_operation}%
, $s_{G}^{(\kappa_{1}\Delta\kappa_{2})}=s_{G}^{(\kappa_{1})}s_{G}^{(\kappa
_{2})}$, the product of $(1+s_{G}^{(\kappa)})$ with $\kappa\in\mathcal{K}_{2}$
becomes the sum of the stabilizers $s_{G}^{(\xi^{\prime})}$ with $\xi^{\prime
}\in\left\langle \mathcal{K}_{2}\right\rangle $.
\begin{align}
|\psi_{\mathcal{K}}\left(  \xi\right)  \rangle &  =\frac{1}{2^{\left\vert
\mathcal{K}_{2}\right\vert /2}}\sum_{\xi^{\prime}\in\left\langle
\mathcal{K}_{2}\right\rangle }s_{G}^{(\xi^{\prime})}|\psi_{\mathcal{K}_{1}%
}\left(  \xi\right)  \rangle\nonumber\\
&  =\frac{1}{2^{\left\vert \mathcal{K}_{2}\right\vert /2}}\sum_{\xi^{\prime
}\in\left\langle \mathcal{K}_{2}\right\rangle }|\psi_{\mathcal{K}_{1}}\left(
\xi\Delta\xi^{\prime}\right)  \rangle,
\end{align}
where the second equality is a result of property
\ref{item::correlation_states_property_2} in Corollary
\ref{coro::properties_of_correlation_states}.
\end{proof}%
\end{theorem}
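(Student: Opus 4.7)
The plan is to unfold the definition of $\ket{\psi_{\mathcal{K}}(\xi)}$ in Eq.~\eqref{eq::def_correlation_states}, split the product over $\mathcal{K}=\mathcal{K}_1\cup\mathcal{K}_2$, and then reduce the $\mathcal{K}_2$-part to a shifted sum of $\mathcal{K}_1$-correlation states via the tools already at hand (Proposition~\ref{prop::isomorphism_of_vertex-induction_operation} and Corollary~\ref{coro::properties_of_correlation_states}).

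First I would write
\begin{equation*}
\ket{\psi_{\mathcal{K}}(\xi)}
= s_G^{(\xi)} \prod_{\kappa\in\mathcal{K}_2}\frac{1+s_G^{(\kappa)}}{\sqrt{2}}\prod_{\kappa\in\mathcal{K}_1}\frac{1+s_G^{(\kappa)}}{\sqrt{2}}\,\ket{i^{(x_\Gamma)}},
\end{equation*}
using that $\mathcal{K}_1$ and $\mathcal{K}_2$ are disjoint so the product really factorises with no repeated factors. Because the stabilizer group $\mathcal{S}_G$ is Abelian, the $\mathcal{K}_2$-block commutes with both $s_G^{(\xi)}$ and the $\mathcal{K}_1$-block, so I can pull it to the left and recognise the remaining operator acting on $\ket{i^{(x_\Gamma)}}$ as exactly $\ket{\psi_{\mathcal{K}_1}(\xi)}$.

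Next I would expand the $\mathcal{K}_2$-projector. By the $\xi$-induction isomorphism of Proposition~\ref{prop::isomorphism_of_vertex-induction_operation}, $s_G^{(\kappa_1)}s_G^{(\kappa_2)}=s_G^{(\kappa_1\Delta\kappa_2)}$, and distributing the product gives
\begin{equation*}
\prod_{\kappa\in\mathcal{K}_2}\bigl(1+s_G^{(\kappa)}\bigr) \;=\; \sum_{\xi'\in\langle\mathcal{K}_2\rangle} s_G^{(\xi')},
\end{equation*}
so that
\begin{equation*}
\ket{\psi_{\mathcal{K}}(\xi)} = \frac{1}{2^{|\mathcal{K}_2|/2}}\sum_{\xi'\in\langle\mathcal{K}_2\rangle} s_G^{(\xi')}\ket{\psi_{\mathcal{K}_1}(\xi)}.
\end{equation*}
To conclude, I would invoke property~\ref{item::correlation_states_property_2} of Corollary~\ref{coro::properties_of_correlation_states}: for any $\xi'\in\langle\mathcal{K}_G\rangle\setminus\langle\mathcal{K}_1\rangle$ one has $s_G^{(\xi')}\ket{\psi_{\mathcal{K}_1}(\xi)}=\ket{\psi_{\mathcal{K}_1}(\xi\Delta\xi')}$, which yields the stated formula.

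The only genuine obstacle is justifying that property~\ref{item::correlation_states_property_2} applies to every $\xi'\in\langle\mathcal{K}_2\rangle$. For $\xi'=\emptyset$ the identity is trivial. For $\xi'\neq\emptyset$ I must argue $\xi'\notin\langle\mathcal{K}_1\rangle$: this is exactly the disjointness hypothesis $\langle\mathcal{K}_1\rangle\cap\langle\mathcal{K}_2\rangle=\{\emptyset\}$ on the two correlation subgroups, so it holds by assumption. Everything else is routine manipulation within the Abelian stabilizer group and a rewriting using the already-established isomorphism $(\mathcal{P}(V_G),\Delta)\sim(\mathcal{S}_G,\cdot)$. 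Finally, the specialisation claim—that Theorem~\ref{theorem::graph_states_as_X-factorized_states} is the case $\mathcal{K}_1=\emptyset$, $\mathcal{K}_2=\mathcal{K}_G$—follows by direct substitution, since $\ket{\psi_\emptyset(\xi)}$ coincides with the X-chain states of Definition~\ref{def::X-chain_states_K-correlation_states}.
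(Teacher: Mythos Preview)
Your proof is correct and follows essentially the same route as the paper's own argument: split the product in the definition along $\mathcal{K}_1\cup\mathcal{K}_2$, use commutativity to recognise $\ket{\psi_{\mathcal{K}_1}(\xi)}$, expand the $\mathcal{K}_2$-block via the $\xi$-induction isomorphism, and finish with property~\ref{item::correlation_states_property_2} of Corollary~\ref{coro::properties_of_correlation_states}. Your extra justification that disjointness of $\langle\mathcal{K}_1\rangle$ and $\langle\mathcal{K}_2\rangle$ is exactly what guarantees the hypothesis of that property for each nontrivial $\xi'$ is a welcome clarification the paper leaves implicit.
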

\begin{algorithm}[Factorization diagram of correlation states]
\label{algo::factorization_diagram_of_correlation_states}
\Needspace*{8\baselineskip}
Theorem \ref{theorem::X-factorization_of_correlation_states} can be interpreted by the
factorization diagram in Fig. \ref{fig::factorization_diagram_correlation_states}.
\begin{enumerate}
  \item One decomposes the group $\mathcal{P}(V_G)$ into the direct product of the X-chain group $\braket{\Gamma_G}$ and the correlation group $\braket{\mathcal{K}_G}$.
  \item From the X-chain group $\braket{\Gamma_G}$, one obtains the set of X-chain states $\Psi^{\emptyset}_{\mathcal{K}_G}$.
  \item From the correlation group $\braket{\mathcal{K}_1}$, one obtains graph states via the superposition of the X-chain states in $\Psi^{\emptyset}_{\mathcal{K}_G}$ within $\braket{\mathcal{K}_1}$.
  \item At the end the correlation state $\ket{\psi_{\mathcal{K}_1\cup \mathcal{K}_2}(\xi)}$ is the superposition of the $\mathcal{K}_1$-correlation states $\ket{\psi_{\mathcal{K}_1}(\xi\Delta\xi')}\in\Psi^{(\mathcal{K}_1)}_{\mathcal{K}_G}$ inside the correlation group $\xi'\in\braket{\mathcal{K}_2}$ (Theorem \ref{theorem::X-factorization_of_correlation_states}).
\end{enumerate}
\end{algorithm}%
\begin{figure}[ht!]
  \centering
  \corrstatediagram[heightScale=0.7]
  \caption{\colorfig The X-chain factorization diagram of correlation states: A graphical summary of Theorem \ref{theorem::X-factorization_of_correlation_states}. The $\xi$ in $|\psi_{\mathcal{K}_{1}\cup\mathcal{K}_{2}}\left(  \xi\right)  \rangle$ are elements in the quotient group, $\xi\in \braket{\mathcal{K}_G}/\braket{\mathcal{K}_1\cup \mathcal{K}_2}$. }
  \label{fig::factorization_diagram_correlation_states}
\end{figure}
\noindent
The subspace of X-chain states $\mathrm{span}(\Psi_{K_{G}}^{(\emptyset)})$ are
projected via $\braket{\mathcal{K}_1}$-stabilizers to the space spanned by the
$\mathcal{K}_{1}$-correlation states $|\psi_{\mathcal{K}_{1}}\left(
\xi\right)  \rangle$. Further, the subspace $\mathrm{span}(\Psi_{K_{G}%
}^{(\mathcal{K}_{1})})$ are then projected via $\braket{\mathcal{K}_2}$%
-stabilizers to the $\mathcal{K}_{1}\cup\mathcal{K}_{2}$-correlation states
$|\psi_{\mathcal{K}_{1}\cup\mathcal{K}_{2}}\left(  \xi\right)  \rangle$.
With this theorem, one can obtain the Schmidt decomposition of graph states,
by appropriate selection of the correlation subgroup $\mathcal{K}_{1}$,
such that its corresponding $\mathcal{K}_{1}$-correlation states are $A|B$-separable and mutually orthonormal.
\bigskip
\par
Let $|G\rangle$ be a graph state with the correlation group $\left\langle
\mathcal{K}_{G}\right\rangle $ and $A|B$ be a bipartition of its vertices. In
order to find the Schmidt decomposition, we select $\braket{K_1}$ as the
union of three disjoint correlation subgroups specified as follows.
\begin{enumerate}
\item The correlation subgroup, whose elements possess a correlation index only
in $B$:%
\begin{equation}
\langle\mathcal{K}^{(B)}\rangle:=\left\{  \xi\in\left\langle \mathcal{K}%
_{G}\right\rangle :c_{\xi}\subseteq B\right\}
.\label{eq::def_KappaB_in_AB_bipartition}%
\end{equation}
\hiddengls{corrGroupToB}

\item The correlation subgroup, whose elements possess a correlation index only
in $A$ and only consists of vertices in $A$:
\begin{equation}
\langle\mathcal{K}_{A}^{(A)}\rangle:=\left\{  \xi\in\left\langle
\mathcal{K}_{G}\right\rangle :c_{\xi}\subseteq A,\xi\subseteq A\right\}
.\label{eq::def_KappaAA_in_AB_bipartition}%
\end{equation}
\hiddengls{corrGroupAToA}

\item The correlation subgroup, whose elements possess a correlation index only
in $A$, consists of vertices in $B$ and has an even number of edges between all $\beta
\in\left\langle \mathcal{K}^{(B)}\right\rangle $:%
\begin{align}
\langle\mathcal{K}_{\sim B}^{(A)}\rangle &  :=\left\{  \xi
\in\left\langle \mathcal{K}_{G}\right\rangle :c_{\xi}\subseteq A,\xi
\not \subseteq A\right\}  \cap\{\xi\in\left\langle \mathcal{K}_{G}%
\right\rangle :\nonumber\\
&  \left\vert E_{G}\left(  \xi:\beta\right)  \right\vert \overset
{\operatorname{mod}2}{=}0\text{, for all }\beta\in\left\langle \mathcal{K}%
^{(B)}\right\rangle \}.\label{eq::def_KappaA_B_in_AB_bipartition}%
\end{align}
\hiddengls{corrGroupSimBToA}
\end{enumerate}
These three groups form a special group
\begin{equation}
\langle\mathcal{K}^{A\rfloor B}\rangle:=\langle\mathcal{K}_{A}^{(A)}%
\cup\mathcal{K}_{\sim B}^{(A)}\rangle\times\langle\mathcal{K}_{{}}%
^{(B)}\rangle
\end{equation}
called \emph{$A\rfloor B$-correlation group}.
\hiddengls{corrGroupASepB}\hiddengls{corrASepBState}
(The notation ``$A\rfloor B$'' is used, as the group is not symmetric with respect to exchanging $A$ and $B$.)
We will show in Lemma \ref{lemma::biseparable_AB-correlation_states} that all $A\rfloor B$-correlation states $|\psi_{\mathcal{K}^{A\rfloor B}}(\xi)\rangle$ with $\xi\in\left\langle \mathcal{K}_{G}\right\rangle /\langle\mathcal{K}^{A\rfloor B}\rangle$, are $A|B$-separable.
The corresponding quotient group is denoted as%
\begin{equation}
\langle\mathcal{K}^{A\rightharpoonup B}\rangle:=\left\langle \mathcal{K}%
_{G}\right\rangle /\langle\mathcal{K}^{A\rfloor B}\rangle
\end{equation}
and called \emph{$\left(  A\rightharpoonup B\right)$-correlation group}.
\hiddengls{corrGroupAandB}
(The notation $A\rightharpoonup B$ is introduced, as there is again no symmetry under exchange of $A$ and $B$, as the correlation index $c_{\xi}$ of $\xi\in\braket{\mathcal{K}^{A\rightharpoonup B}}$ is always inside $A$.)
We will show in Theorem \ref{theorem::Schmidt_decomposition_in_AB-correlation_states} that the Schmidt rank of $\ket{G}$ is equal to the cardinality $|\braket{\mathcal{K}^{A\rightharpoonup B}}|$.
That means that the correlation subgroup $\mathcal{K}^{A\rightharpoonup B}$ generates the $A|B$ correlation in the graph state $\ket{G}$.
Note that we investigated many graphs and found their correlation subgroups $\langle\mathcal{K}_{\sim B}^{(A)}\rangle$ all to be empty.
That means the group $\langle\mathcal{K}_{\sim B}^{(A)}\rangle$ may not exist for any graph state.
However, this is still an open question.
\def\GammaG{\{\{1,2,3\}\}}
\def\KappaAA{\{\{2,3\}\}}
\def\KappaB{\{\{4,5\},\{2,3,4\}\}}
\def\KappaAsimB{\emptyset}
\def\KappaAfromB{\{\{2\}\}}
\par
\begin{figure*}[ht!]
  \centering
  \subfloat[]{
    \includegraphics[width=0.2\textwidth]{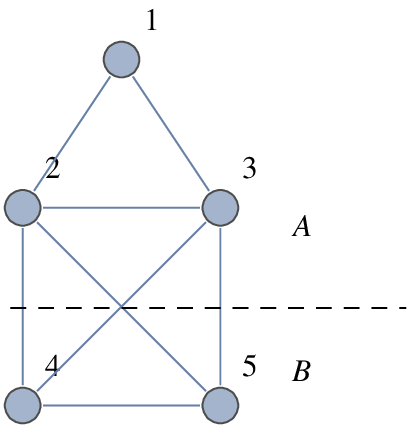}
    \label{fig::house_graph}
  }
  \subfloat[]{
    \includegraphics[width=0.8\textwidth]{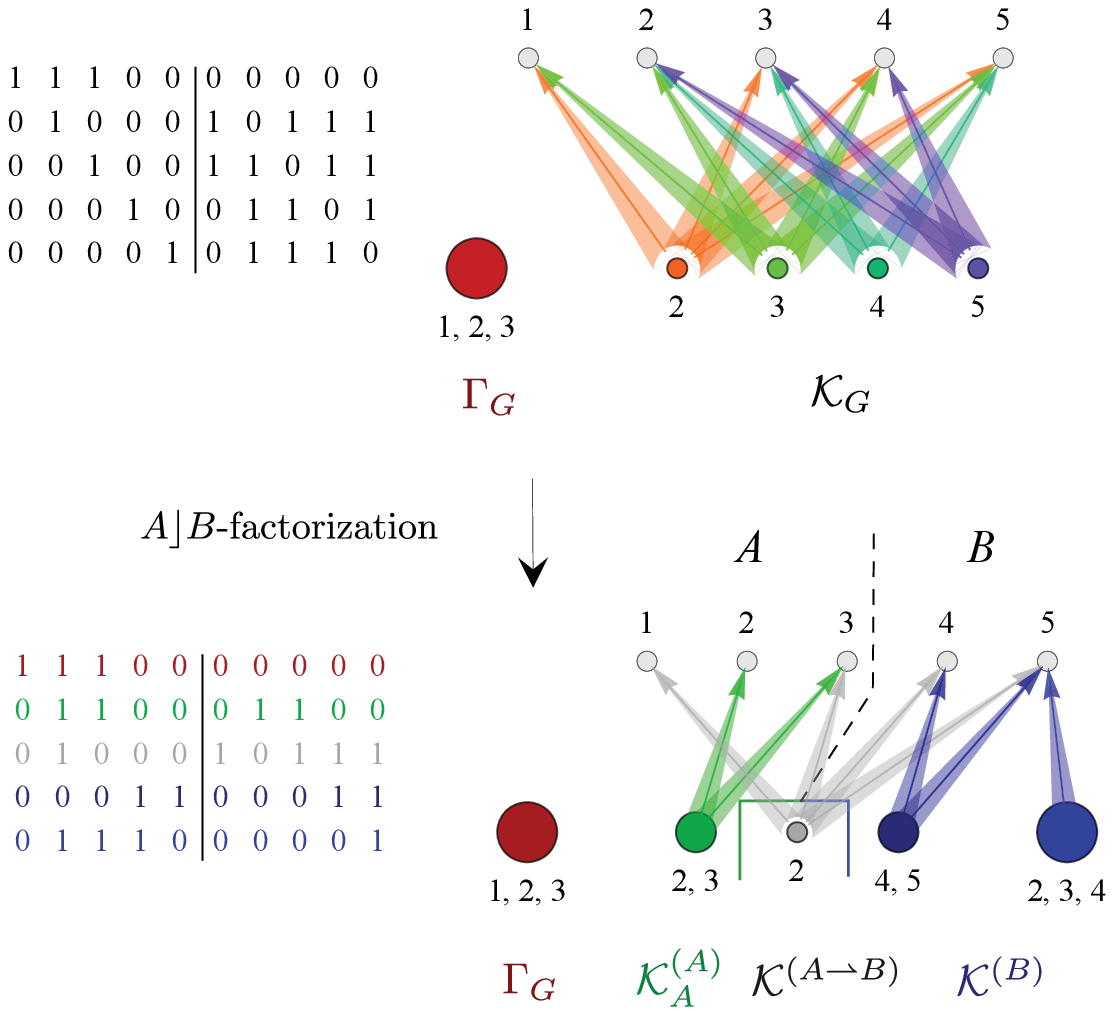}
    \label{fig::house_graph_binary_representation}
  }
  \\
  \subfloat[]{
  \ABfactorizationdiagram[
    GammaG={\{\{1,2,3\}\}},
    KappaAA={\{\{2,3\}\}},
    KappaB={\{\{4,5\},\{2,3,4\}\}},
    KappaAfromB={\{\{2\}\}},
    xG=10000,
    KappaAfromBNum=1,
    widthScale=1.1
    ]
  \label{fig::house_graph_factorization_diagram}
  }
  \\
  \caption{\colorfig $A\rfloor B$-factorization of graph states: (a) The graph state $\ket{G_{\mathrm{House}}}$ corresponding to a ``St. Nicholas's house'' is divided in two subsystems $A=\{1,2,3\}$ and $B=\{4,5\}$.
  (b) The binary representation of the X-chain factorization (the upper row) and $A\rfloor B$-factorization (the lower row). (c) The $A\rfloor B$-factorization diagram (see Algorithm \ref{algo::factorization_diagram_AB_correlation}) of the ``St. Nicholas's house'' graph state $\ket{G_{\mathrm{House}}}$.}
  \label{fig::house_graph_A-B_factorization}
\end{figure*}
\par
In this $A\rfloor B$-factorization, the correlation group $\mathcal{K}_{G}$ is
divided into four subgroups.
Let us take the graph of ``St. Nicholas's house'' in Fig. \ref{fig::house_graph} as an example.
This ``house'' state $|G_{\mathrm{House}}\rangle$
is divided into the bipartition $A=\{1,2,3\}$ versus $B=\{4,5\}$.
The correlation group factorization is shown in Fig.
\ref{fig::house_graph_binary_representation}. The X-chain group of
$\ket{G_{\mathrm{House}}}$ is $\GammaG$. The X-resources are factorized by the
X-chain group, $\mathcal{P}(V_{G})=\braket{\Gamma_G}\times
\braket{\mathcal{K}_G}$,
see the upper row in Fig. \ref{fig::house_graph_binary_representation}.
The array is the binary representation of the stabilizers induced by the X-chain generators
$\Gamma=\{\{1,2,3\}\}$ and correlation group generators $\mathcal{K}_G=\{\{2\},\{3\},\{4\},\{5\}\}$,
it corresponds to the incidence structure on its right hand side.
In the second row of Fig. \ref{fig::house_graph_binary_representation},
the X-resources, whose correlation indices lie in the
system $B$, are first grouped together into
$\braket{\mathcal{K}^{(B)}}=\braket{\KappaB}$. Second, the
X-resources $\xi$, whose correlation indices $c_{\xi}$ and $\xi$ itself are both contained by $V_{A}$,
are grouped into $\braket{\mathcal{K}_{A}^{(A)}}=\braket{\KappaAA}$. Third,
the group $\mathcal{K}_{\sim B}^{(A)}$ is empty. At the end, the $\left(
A\rightharpoonup B\right)  $-correlation group is then
$\braket{\mathcal{K}^{A\rightharpoonup B}}=\braket{\KappaAfromB}$.
\par
These three special correlation subgroups, $\braket{\mathcal{K}_{A}^{(A)}},$
$\braket{\mathcal{K}_{\sim B}^{(A)}}$ and $\braket{\mathcal{K}^{(B)}}$,
project the space spanned by the X-chain states into a subspace spanned by their
correlation states $|\psi_{A\rfloor B}(\xi)\rangle$.
These states are $A|B$-separable states, which is stated in the following lemma.
\begin{lemma}
[$A|B$-Separability of $A\rfloor B$-correlation states]%
\label{lemma::biseparable_AB-correlation_states}
\Needspace*{8\baselineskip}
For $\xi\in\langle
\mathcal{K}_{G}^{A\rightharpoonup B}\rangle$, the $\left(  A\rightharpoonup
B\right)  $-correlation states%
\begin{equation}
|\psi_{A\rfloor B}(\xi)\rangle=\pi_{G}\left(  \xi\right)  |\phi_{A\rfloor
B}^{(A)}(\xi)\rangle|\phi_{A\rfloor B}^{(B)}(\xi)\rangle
\label{eq::def_AB-correlation_states}%
\end{equation}
are $A|B$-separable with $|\phi_{A\rfloor B}^{(A)}(\xi)\rangle:=|\psi
_{\mathcal{K}_{A}^{(A)}\cup\mathcal{K}_{\sim B}^{(A)}}^{(A)}(\xi)\rangle$
and $|\phi_{A\rfloor B}^{(B)}(\xi)\rangle:=|\psi_{\mathcal{K}^{(B)}}^{(B)}(\xi)\rangle$ being the $(  \mathcal{K}_{A}^{(A)}\cup\mathcal{K}%
_{\sim B}^{(A)})$- and $\mathcal{K}^{(B)}$-correlation states
projected into the subspaces of $A$ and $B$, respectively.
\hiddengls{corrAsepBStateOnA}\hiddengls{corrAsepBStateOnB}
\begin{proof}
See Appendix \ref{sec::graph_states_proofs}.
\end{proof}
\end{lemma}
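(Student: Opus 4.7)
The plan is to apply Theorem \ref{theorem::X-factorization_of_correlation_states} twice to expand $|\psi_{A\rfloor B}(\xi)\rangle$ as a double sum of X-chain states over $\xi_A\in\langle\mathcal{K}_A^{(A)}\cup\mathcal{K}_{\sim B}^{(A)}\rangle$ and $\xi_B\in\langle\mathcal{K}^{(B)}\rangle$, then show that each X-basis ket in the sum factorizes across the $A|B$ partition and that the accompanying stabilizer-parity prefactor decouples into an $A$-dependent and a $B$-dependent term.

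First, Theorem \ref{theorem::X-factorization_of_correlation_states} with $\mathcal{K}_1=\mathcal{K}_A^{(A)}\cup\mathcal{K}_{\sim B}^{(A)}$ and $\mathcal{K}_2=\mathcal{K}^{(B)}$, followed by Proposition \ref{prop::correlation_states_explicit_form}, gives
\begin{equation}
|\psi_{A\rfloor B}(\xi)\rangle=\frac{1}{\sqrt{|\langle\mathcal{K}^{A\rfloor B}\rangle|}}\sum_{\xi_A,\xi_B}\pi_G(\xi\Delta\xi_A\Delta\xi_B)\,|i^{(x_\Gamma)}\oplus i^{(c_\xi\Delta c_{\xi_A}\Delta c_{\xi_B})}\rangle.
\end{equation}
Because $c_\xi,c_{\xi_A}\subseteq A$ and $c_{\xi_B}\subseteq B$ by the defining constraints of the three subgroups, the bit-string inside the ket splits cleanly: its $A$-component depends only on $(\xi,\xi_A)$ while its $B$-component depends only on $\xi_B$, so the ket is already a tensor product across the bipartition.

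The main obstacle is the parity factorization. Using the standard identity $\pi_G(\zeta_1\Delta\zeta_2)=\pi_G(\zeta_1)\pi_G(\zeta_2)(-1)^{|E_G(\zeta_1\setminus\zeta_2:\zeta_2\setminus\zeta_1)|}$ iteratively, I would reduce the triple parity to $\pi_G(\xi)\pi_G(\xi_A)\pi_G(\xi_B)$ times a product of three cross-edge signs. The $\xi$--$\xi_A$ sign depends only on $\xi_A$ once $\xi$ is fixed and so rides along with the $\xi_A$-sum, and likewise the $\xi$--$\xi_B$ sign rides along with the $\xi_B$-sum, so neither couples the $A$-sum to the $B$-sum. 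The dangerous coupling is the $\xi_A$--$\xi_B$ sign. For $\kappa\in\mathcal{K}_A^{(A)}$ with vertices in $A$, the identity $|E_G(\kappa:\xi_B)|=\sum_{v\in\kappa}|N_v\cap\xi_B|$ together with the equivalence $v\in c_{\xi_B}\Leftrightarrow|N_v\cap\xi_B|\equiv 1\pmod{2}$ forces evenness since $v\in A$ and $c_{\xi_B}\subseteq B$. For $\eta\in\mathcal{K}_{\sim B}^{(A)}$, evenness of $|E_G(\eta:\xi_B)|$ is built directly into the definition \eqref{eq::def_KappaA_B_in_AB_bipartition}. Hence the dangerous cross-phase is always trivial.

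With the parity decoupled, the double sum separates as $\pi_G(\xi)$ times a sum over $\xi_A$ of X-basis kets on $A$ tensored with a sum over $\xi_B$ of X-basis kets on $B$, and I would identify the two factors with $|\phi_{A\rfloor B}^{(A)}(\xi)\rangle$ and $|\phi_{A\rfloor B}^{(B)}(\xi)\rangle$ respectively by recognizing each as an $(\mathcal{K}_A^{(A)}\cup\mathcal{K}_{\sim B}^{(A)})$- or $\mathcal{K}^{(B)}$-correlation state projected onto the corresponding subsystem via Proposition \ref{prop::correlation_states_explicit_form}. This establishes Eq. \eqref{eq::def_AB-correlation_states}. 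The most delicate technical step is tracking the stabilizer parity under symmetric differences with overlapping sets, which requires careful bookkeeping of within-subset versus cross-subset edges; the edge-parity condition in the definition of $\mathcal{K}_{\sim B}^{(A)}$ is precisely what makes the whole decoupling go through.
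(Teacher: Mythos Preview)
Your proposal is correct and follows essentially the same route as the paper's proof: expand $|\psi_{A\rfloor B}(\xi)\rangle$ as a sum of X-chain states indexed by $\alpha\in\langle\mathcal{K}_A^{(A)}\cup\mathcal{K}_{\sim B}^{(A)}\rangle$ and $\beta\in\langle\mathcal{K}^{(B)}\rangle$, observe that the kets factorize because the correlation indices live in disjoint halves, and then decouple the stabilizer parity using the multiplication identity $\pi_G(\zeta_1)\pi_G(\zeta_2)=(-1)^{|E_G(\zeta_1:\zeta_2)|}\pi_G(\zeta_1\Delta\zeta_2)$ (which the paper states separately as a lemma) together with the two evenness arguments you give for the $\xi_A$--$\xi_B$ cross term. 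The only point the paper handles more explicitly than you do is the final recombination: rather than leaving the $\xi$--$\xi_A$ and $\xi$--$\xi_B$ cross signs as loose factors in each sum, it reabsorbs them via $\pi_G(\xi)(-1)^{|E_G(\xi:\alpha)|}\pi_G(\alpha)=\pi_G(\alpha\Delta\xi)$ (and likewise for $\beta$), so that the $A$- and $B$-factors match \emph{exactly} the form of $|\psi_{\mathcal{K}_A^{(A)}\cup\mathcal{K}_{\sim B}^{(A)}}(\xi)\rangle$ and $|\psi_{\mathcal{K}^{(B)}}(\xi)\rangle$ given by Proposition~\ref{prop::correlation_states_explicit_form}, with the leftover $\pi_G(\xi)$ appearing as the global phase in Eq.~\eqref{eq::def_AB-correlation_states}. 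You should make that recombination step explicit; otherwise the argument is the same.
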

\noindent
Note that $|\psi_{A\rfloor B}(\xi)\rangle$ will be shown to be the Schmidt basis in Theorem \ref{theorem::Schmidt_decomposition_in_AB-correlation_states}.
There, one will also see that the global phase $\pi_G(\xi)$ ensures positive Schmidt coefficients.
\bigskip
\par
Let us continue to consider the ``St. Nicholas's house'' state as an example.
According to Proposition \ref{prop::X-chain_states_in_X-basis}, the
fundamental X-chain state of $|G_{\text{House}}\rangle$ is $|i^{x_{\Gamma}}%
\rangle=|10000\rangle$.
Then from the $\mathcal{K}_{A}^{(A)}$-correlation states,
\def\phiAempty{\frac{|100\rangle -|111\rangle }{\sqrt{2}}}%
\def\phiAone{\frac{|001\rangle +|010\rangle }{\sqrt{2}}}%
\def\phiBempty{\frac{|00\rangle -|01\rangle -|10\rangle -|11\rangle}{2}}%
\def\phiBone{\frac{-|00\rangle-|01\rangle-|10\rangle+|11\rangle}{2}}%
\def\xGammaA{\ket{100}}%
\def\xGammaB{\ket{00}}%
\begin{align}
|\psi_{\mathcal{K}_{A}^{(A)}}\left(  \emptyset\right)  \rangle &
=|\phi_{A\rfloor B}^{(A)}(\emptyset)\rangle \otimes\xGammaB\text{ and}\\
|\psi_{\mathcal{K}_{A}^{(A)}}\left(  \{2\}\right)  \rangle &
=|\phi_{A\rfloor B}^{(A)}\left(  \{2\}\right)\rangle\otimes\xGammaB\text{,}%
\end{align}
one can read off%
\begin{align}
|\phi_{A\rfloor B}^{(A)}(\emptyset)\rangle &  =\phiAempty \text{ and }\\
|\phi_{A\rfloor B}^{(A)}\left(  \{2\}\right)  \rangle &  =\phiAone.
\end{align}
From the $\mathcal{K}^{(B)}$-correlation states,
\begin{align}
|\psi_{\mathcal{K}^{(B)}}\left(  \emptyset\right)  \rangle &  =\xGammaA\otimes
|\phi_{A\rfloor B}^{(B)}(\emptyset)\rangle \text{ and}\\
|\psi_{\mathcal{K}^{(B)}}\left(  \{2\}\right)  \rangle &  =\xGammaA\otimes
|\phi_{A\rfloor B}^{(B)}\left(  \{2\}\right)  \rangle ,%
\end{align}
one can read off%
\begin{align}
|\phi_{A\rfloor B}^{(B)}(\emptyset)\rangle &  =\phiBempty\text{ and }\\
|\phi_{A\rfloor B}^{(B)}\left(  \{2\}\right)  \rangle &  =\phiBone.
\end{align}
According to Lemma \ref{lemma::biseparable_AB-correlation_states}, $A\rfloor
B$-correlation states are
\begin{align}
& |\psi_{A\rfloor B}(\emptyset)\rangle\nonumber\\
& =\left( \phiAempty \right)  \left(  \phiBempty \right)
\label{eq::house_state_A_B_correlation_state_1}%
\end{align}
and since $\pi_{G}(\{2\})=1$,%
\begin{align}
&  |\psi_{A\rfloor B}(\{2\})\rangle\nonumber\\
&  =\left( \phiAone \right)  \left(  \phiBone \right).
\label{eq::house_state_A_B_correlation_state_2}%
\end{align}
\bigskip
\par
Orthonormality of the states within the subspaces still needs to be verified.
This holds for the explicit example $\ket{G_{\mathrm{House}}}$ in Eqs.
\eqref{eq::house_state_A_B_correlation_state_1} and
\eqref{eq::house_state_A_B_correlation_state_2}.
In the general case, the orthonormality is shown in the following lemma.
\begin{lemma}[Orthonormality of $(A\rightharpoonup B)$-correlation states]
\label{lemma::orthonormality_of_AB-correlation_states}
\Needspace*{8\baselineskip}
The components of $A\rfloor B$-correlation states on subspace $A$ and $B$, $|\phi_{A\rfloor B}^{(A)}%
(\xi)\rangle$ and $|\phi_{A\rfloor B}^{(B)}(\xi)\rangle$, are orthonormal with
respect to $\xi\in\langle\mathcal{K}^{A\rightharpoonup B}\rangle$ within the subspaces
$A$ and $B$, respectively, i.e.,
\begin{equation}
\langle\phi_{A\rfloor B}^{(A)}(\xi_{1})|\phi_{A\rfloor B}^{(A)}(\xi
_{2})\rangle=0
\end{equation}
and%
\[
\langle\phi_{A\rfloor B}^{(B)}(\xi_{1})|\phi_{A\rfloor B}^{(B)}(\xi
_{2})\rangle=0
\]
for all $\xi_{1},\xi_{2}\in\langle\mathcal{K}^{A\rightharpoonup B}\rangle$ and
$\xi_{1}\not =\xi_{2}$.
%
%
\begin{proof}
See Appendix \ref{sec::graph_states_proofs}.
\end{proof}
\end{lemma}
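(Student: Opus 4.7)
The plan is to reduce both orthogonality statements to expectation values in the $\xi=\emptyset$ state, and then invoke two different mechanisms: disjoint X-basis supports on $A$, and character-sum cancellation on $B$. First, I would use Property~\ref{item::correlation_states_property_2} of Corollary~\ref{coro::properties_of_correlation_states} to write $\ket{\psi_{A\rfloor B}(\xi)}=s_G^{(\xi)}\ket{\psi_{A\rfloor B}(\emptyset)}$ for $\xi\in\braket{\mathcal{K}^{A\rightharpoonup B}}$. Since $c_\xi\subseteq A$ for such $\xi$, Proposition~\ref{prop::induced_stabilizer_math_formula} lets me factor the induced stabilizer as $\pi_G(\xi)\,(\sigma_X^{(\xi\cap A)}\sigma_Z^{(c_\xi)})\otimes\sigma_X^{(\xi\cap B)}$, and combined with the product form of Lemma~\ref{lemma::biseparable_AB-correlation_states} this gives
\begin{equation*}
\ket{\phi_{A\rfloor B}^{(A)}(\xi)}=\sigma_X^{(\xi\cap A)}\sigma_Z^{(c_\xi)}\ket{\phi_{A\rfloor B}^{(A)}(\emptyset)},\quad \ket{\phi_{A\rfloor B}^{(B)}(\xi)}=\sigma_X^{(\xi\cap B)}\ket{\phi_{A\rfloor B}^{(B)}(\emptyset)}
\end{equation*}
for a convenient choice of phases.

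For the $A$-orthogonality, the overlap $\langle\phi^{(A)}(\xi_1)|\phi^{(A)}(\xi_2)\rangle$ reduces to $\langle\phi^{(A)}(\emptyset)|\sigma_Z^{(c_{\xi_1}\Delta c_{\xi_2})}\sigma_X^{((\xi_1\Delta\xi_2)\cap A)}|\phi^{(A)}(\emptyset)\rangle$. By Proposition~\ref{prop::correlation_states_explicit_form}, $\ket{\phi^{(A)}(\emptyset)}$ is supported on X-basis states indexed by $(x_\Gamma)_A\oplus c_\eta$ with $\eta\in\braket{\mathcal{K}_A^{(A)}\cup\mathcal{K}_{\sim B}^{(A)}}$, and the $\sigma_Z$ operator above merely shifts this support by $c_{\xi_1\Delta\xi_2}$. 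The bijectivity of the correlation index on $\braket{\mathcal{K}_G}$ (a consequence of the factorization in Lemma~\ref{lemma::X-chain_factorized_group}) implies that the shifted and original supports can intersect only when $\xi_1\Delta\xi_2\in\braket{\mathcal{K}_A^{(A)}\cup\mathcal{K}_{\sim B}^{(A)}}\subseteq\braket{\mathcal{K}^{A\rfloor B}}$, which would contradict $\xi_1$ and $\xi_2$ lying in distinct cosets of $\mathcal{K}^{A\rfloor B}$; hence the overlap vanishes.

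The $B$-orthogonality is the more subtle step, because $\sigma_X^{(\xi\cap B)}$ preserves the X-basis support and the argument must rely on sign cancellations. Setting $\gamma=(\xi_1\Delta\xi_2)\cap B$, the overlap becomes a character-like sum proportional to $\sum_{\beta\in\braket{\mathcal{K}^{(B)}}}(-1)^{c_\beta\cdot\gamma}$. Using the identity
\begin{equation*}
|\gamma\cap c_\beta|\equiv\sum_{v\in\gamma}|N_v\cap\beta|\equiv|E_G(\gamma:\beta)|\pmod 2,
\end{equation*}
and observing that for $v\in A$ one has $|N_v\cap\beta|$ even (since $c_\beta\subseteq B$ forces $v\notin c_\beta$), I conclude that $|E_G(\gamma:\beta)|\equiv|E_G(\xi_1\Delta\xi_2:\beta)|\pmod 2$. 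If this parity were zero for every $\beta\in\mathcal{K}^{(B)}$, then $\xi_1\Delta\xi_2$ would meet the even-edge defining condition of $\mathcal{K}_{\sim B}^{(A)}$ and thus lie in $\braket{\mathcal{K}^{A\rfloor B}}$, contradicting the distinctness of cosets; so some generator $\beta$ pairs oddly with $\gamma$ and the character sum vanishes.

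The principal obstacle is precisely this last translation, turning the abstract edge-parity condition defining $\mathcal{K}_{\sim B}^{(A)}$ into the algebraic statement that $\gamma$ pairs non-trivially with some $c_\beta$. One also has to separately handle the degenerate case $\gamma=\emptyset$, where $\xi_1\Delta\xi_2\subseteq A$ together with $c_{\xi_1\Delta\xi_2}\subseteq A$ forces $\xi_1\Delta\xi_2\in\braket{\mathcal{K}_A^{(A)}}\subseteq\braket{\mathcal{K}^{A\rfloor B}}$, again contradicting distinctness in the quotient.
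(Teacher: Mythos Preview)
Your overall strategy mirrors the paper's: reduce to $\langle\phi(\emptyset)|\phi(\xi)\rangle$ with $\xi=\xi_1\Delta\xi_2$, argue disjoint $X$-basis supports on $A$, and use a parity-sum cancellation on $B$. The $A$-side argument and your character-sum formulation of the $B$-side are essentially repackagings of what the paper does (the paper writes the same sum as a product $\prod_{\beta\in\mathcal{K}^{(B)}}(1+(-1)^{|E_G(\beta:\xi')|})$).

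There is, however, a genuine gap. Your factorization $s_G^{(\xi)}=\pi_G(\xi)\,(\sigma_X^{(\xi\cap A)}\sigma_Z^{(c_\xi)})\otimes\sigma_X^{(\xi\cap B)}$ and everything downstream rests on the claim $c_\xi\subseteq A$ for representatives $\xi\in\langle\mathcal{K}^{A\rightharpoonup B}\rangle$. Despite the paper's own parenthetical remark, this is \emph{not} guaranteed, and one cannot always choose representatives making it true: for $G=S_3$ with center $1$, $A=\{2\}$, $B=\{1,3\}$, the nontrivial coset is $\{\{1\},\{1,2\}\}$ and both representatives have correlation index meeting $B$. When $c_\xi\cap B\neq\emptyset$, the $B$-factor of $s_G^{(\xi)}$ acquires an extra $\sigma_Z^{(c_\xi\cap B)}$, which \emph{shifts} the $X$-basis support of $|\phi^{(B)}(\emptyset)\rangle$ rather than merely introducing signs, so your overlap is no longer the character sum $\sum_\beta(-1)^{c_\beta\cdot\gamma}$. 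The paper handles exactly this by a case split you are missing: if $c_\xi^{(B)}$ is not of the form $c_\lambda$ for any $\lambda\in\langle\mathcal{K}^{(B)}\rangle$, the supports are disjoint and the overlap vanishes outright; otherwise one substitutes $\xi\mapsto\xi\Delta\lambda$ to force $c_{\xi'}^{(B)}=\emptyset$, and only then does the parity-sum argument apply. Your degenerate-case treatment ($\gamma=\emptyset$) is likewise incomplete, since it again invokes $c_{\xi_1\Delta\xi_2}\subseteq A$ to place $\xi_1\Delta\xi_2$ in $\langle\mathcal{K}_A^{(A)}\rangle$.
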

\bigskip
\par
We can now construct the Schmidt decomposition of graph states with $A\rfloor B$-correlation states as follows.
\begin{theorem}
[Schmidt decomposition in $A\rfloor B$-correlation states]%
\label{theorem::Schmidt_decomposition_in_AB-correlation_states}
\Needspace*{12\baselineskip}%
The Schmidt decomposition of a graph state $|G\rangle$ is the superposition of its
$A\rfloor B$-correlation states,
\begin{equation}
|G\rangle=\frac{1}{2^{\left\vert \mathcal{K}^{A\rightharpoonup B}\right\vert
/2}}\sum_{\xi\in\left\langle \mathcal{K}^{A\rightharpoonup B}\right\rangle }\pi
_{G}\left(  \xi\right)  |\phi_{A\rfloor B}^{(A)}(\xi)\rangle|\phi_{A\rfloor
B}^{(B)}(\xi)\rangle.\label{eq::theorem_schmidt_decomp_in_AB_states}%
\end{equation}
The Schmidt rank $r_S$ and geometric measure of the $A|B$-bipartite entanglement \cite{Shimony1995-BiGeoM, BarnumLinden2001-GeoMeasure} can be expressed by%
\begin{equation}
\log_2(r_{S})=\mathcal{E}_{g}^{A|B}=\left\vert \mathcal{K}^{A\rightharpoonup B}\right\vert
\label{eq::theorem_schmidt_rank_of_graph_states}%
\end{equation}
with $\mathcal{E}_{g}^{(A|B)}\left(  |G\rangle\right)  :=-2\log_{2}\left(  \min_{\psi
}\left\vert \left\langle \psi_{A}\psi_{B}|G\right\rangle \right\vert \right)
$.
\hiddengls{BiEntGeoMeas}
\begin{proof}
Employing Theorem \ref{theorem::graph_states_as_X-factorized_states} and
\ref{theorem::X-factorization_of_correlation_states} together with Lemma
\ref{lemma::biseparable_AB-correlation_states} one can prove that the graph
state $|G\rangle$ is equal to the superposition of all biseparable $A\rfloor
B$-correlation states $|\psi_{A\rfloor B}(\xi)\rangle=\pi_{G}(\xi
)|\phi_{A\rfloor B}^{(A)}(\xi)\rangle|\phi_{A\rfloor B}^{(B)}(\xi)\rangle$. As
a result of the orthonormality of $|\phi_{A\rfloor B}^{(A)}(\xi)\rangle$ and
$|\phi_{A\rfloor B}^{(B)}(\xi)\rangle$ (Lemma
\ref{lemma::orthonormality_of_AB-correlation_states}), Eq.
(\ref{eq::theorem_schmidt_decomp_in_AB_states}) is a Schmidt decomposition.
The bipartite geometric measure of entanglement is equal to the maximum
singular value $s_{\max}$ of the matrix $M_{ij}:=\left\{  \langle i_{A}%
j_{B}|G\rangle\right\}  _{i,j}$ with $i=0,...,2^{\left\vert V_{A}\right\vert
}-1$ and $i=0,...,2^{\left\vert V_{B}\right\vert }-1$ \cite{Shimony1995-BiGeoM}. For the bipartite case
the singular value decomposition is equivalent to the Schmidt decomposition. Since
the Schmidt coefficients are all $2^{-\left\vert \mathcal{K}^{A\rightharpoonup
B}\right\vert /2}$, it follows that the geometric measure of bipartite entanglement of a
graph state, $\mathcal{E}^{A|B}_{g}:=-2\log_{2}\left(  s_{\max}\right)  $, is equal to the log of the
Schmidt rank, i.e. $\log_2(r_{S})=\left\vert \mathcal{K}^{A\rightharpoonup B}\right\vert $.
As a result, the $A\rfloor B$-correlation states $\pi_{G}(\xi)|\phi_{A\rfloor B}^{(A)}%
(\xi)\rangle|\phi_{A\rfloor B}^{(B)}(\xi)\rangle$ are the $A|B$-separable states, which are closest to $\ket{G}$.
\end{proof}
\end{theorem}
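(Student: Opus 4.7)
The plan is to assemble the statement from the three building blocks already established: the X-chain representation of $|G\rangle$ (Theorem \ref{theorem::graph_states_as_X-factorized_states}), the nested factorization of correlation states (Theorem \ref{theorem::X-factorization_of_correlation_states}), and the two lemmas on $A|B$-separability and orthonormality (Lemmas \ref{lemma::biseparable_AB-correlation_states}, \ref{lemma::orthonormality_of_AB-correlation_states}). First I would invoke Theorem \ref{theorem::graph_states_as_X-factorized_states} to identify $|G\rangle$ with the $\mathcal{K}_G$-correlation state $|\psi_{\mathcal{K}_G}\rangle$. Since $\langle\mathcal{K}_G\rangle = \langle\mathcal{K}^{A\rfloor B}\rangle\times\langle\mathcal{K}^{A\rightharpoonup B}\rangle$ by definition of the quotient, I would apply Theorem \ref{theorem::X-factorization_of_correlation_states} with $\mathcal{K}_1=\mathcal{K}^{A\rfloor B}$ and $\mathcal{K}_2=\mathcal{K}^{A\rightharpoonup B}$, obtaining
\begin{equation}
|G\rangle = \frac{1}{2^{|\mathcal{K}^{A\rightharpoonup B}|/2}}\sum_{\xi\in\langle\mathcal{K}^{A\rightharpoonup B}\rangle}|\psi_{\mathcal{K}^{A\rfloor B}}(\xi)\rangle.
\end{equation}

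Next I would substitute the separable form of each $A\rfloor B$-correlation state provided by Lemma \ref{lemma::biseparable_AB-correlation_states}, namely $|\psi_{A\rfloor B}(\xi)\rangle = \pi_G(\xi)|\phi_{A\rfloor B}^{(A)}(\xi)\rangle|\phi_{A\rfloor B}^{(B)}(\xi)\rangle$, which directly yields Eq.~\eqref{eq::theorem_schmidt_decomp_in_AB_states}. To promote this expansion to a genuine Schmidt decomposition I would appeal to Lemma \ref{lemma::orthonormality_of_AB-correlation_states}: since $\{|\phi_{A\rfloor B}^{(A)}(\xi)\rangle\}_\xi$ and $\{|\phi_{A\rfloor B}^{(B)}(\xi)\rangle\}_\xi$ are each orthonormal families indexed by $\xi\in\langle\mathcal{K}^{A\rightharpoonup B}\rangle$, and the global phase $\pi_G(\xi)$ can be absorbed into one of the factors to make all coefficients equal and positive, the expression has the canonical Schmidt form with common coefficient $2^{-|\mathcal{K}^{A\rightharpoonup B}|/2}$.

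The Schmidt rank is then immediate from counting terms, $r_S = |\langle\mathcal{K}^{A\rightharpoonup B}\rangle| = 2^{|\mathcal{K}^{A\rightharpoonup B}|}$, so $\log_2 r_S = |\mathcal{K}^{A\rightharpoonup B}|$. For the geometric measure, I would recall that in the bipartite case the minimum overlap with a product state equals the largest Schmidt coefficient, so $\min_{\psi}|\langle\psi_A\psi_B|G\rangle| = 2^{-|\mathcal{K}^{A\rightharpoonup B}|/2}$ and therefore $\mathcal{E}_g^{A|B} = -2\log_2(2^{-|\mathcal{K}^{A\rightharpoonup B}|/2}) = |\mathcal{K}^{A\rightharpoonup B}|$, which matches Eq.~\eqref{eq::theorem_schmidt_rank_of_graph_states}.

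I expect no serious obstacle: all the analytical work has been front-loaded into the preceding lemmas and theorems. The one place requiring care is verifying that the quotient decomposition $\langle\mathcal{K}_G\rangle = \langle\mathcal{K}^{A\rfloor B}\rangle\times\langle\mathcal{K}^{A\rightharpoonup B}\rangle$ is indeed a direct product of groups (so that Theorem \ref{theorem::X-factorization_of_correlation_states} applies verbatim) and that $\xi$ in the resulting sum ranges over a set of coset representatives, which is guaranteed by the definition of $\langle\mathcal{K}^{A\rightharpoonup B}\rangle$ as the quotient. The remaining step of identifying the geometric measure with the maximal Schmidt coefficient is a standard fact for bipartite pure states, which I would cite rather than reprove.
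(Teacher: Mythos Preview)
Your proposal is correct and follows essentially the same route as the paper's own proof: invoke Theorems \ref{theorem::graph_states_as_X-factorized_states} and \ref{theorem::X-factorization_of_correlation_states} with the splitting $\mathcal{K}_1=\mathcal{K}^{A\rfloor B}$, $\mathcal{K}_2=\mathcal{K}^{A\rightharpoonup B}$, insert the separable form from Lemma \ref{lemma::biseparable_AB-correlation_states}, and use Lemma \ref{lemma::orthonormality_of_AB-correlation_states} to certify the Schmidt form; the Schmidt rank and geometric measure then follow by reading off the common coefficient. Your version is in fact slightly more explicit than the paper's in spelling out the direct-product hypothesis needed for Theorem \ref{theorem::X-factorization_of_correlation_states} and in noting that $\pi_G(\xi)$ is absorbed to make the coefficients positive; the only wording to tighten is that the overlap appearing in the geometric measure is the \emph{maximum} over product states (equal to the largest Schmidt coefficient), not the minimum.
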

\par
According to \cite{GraphStateReviews2006}, the Schmidt rank is given by
$\log_{2}\left\vert \left\{  \sigma\in S_{G}:\text{supp}\left(  \sigma\right)
\subseteq V_{A}\right\}  \right\vert $ with $\left\vert V_{A}\right\vert
\leq\left\vert V_{B}\right\vert $, which is $\left\vert V_{A}\right\vert
-\left\vert \mathcal{K}_{A}^{(A)}\right\vert -\left\vert \Gamma_{G}\cap
\mathcal{P}\left(  V_{A}\right)  \right\vert $ in the language of the X-chain factorization.
The Schmidt rank is also equal to the cardinality of the matching \footnote{Note that the matching between two parties is not unique, but its cardinality is fixed.} between $A$ and $B$ \cite{HajdusekMurao2013-Direct}.
The matching is the set of edges between $A$ and $B$, which do not mutually share any common vertex \cite{Diestel_GraphTheory}.
Hence the cardinality $\left\vert
\mathcal{K}^{A\rightharpoonup B}\right\vert $ should be equal to the matching.
However the proof of this equality is still an open question.
\bigskip
\par
The result of this section can be summarized in an $A\rfloor B$-factorization diagram.
%
\begin{algorithm}[Factorization diagram: Schmidt decomposition of graph states]
\label{algo::factorization_diagram_AB_correlation}
\Needspace*{8\baselineskip}
The Schmidt decomposition of graph states in Theorem \ref{theorem::Schmidt_decomposition_in_AB-correlation_states} can be summarized in the factorization diagram of Fig. \ref{fig::factorization_diagram_AB_correlation}.
\begin{enumerate}
  \item The group $\mathcal{P}(V_G)$ is decomposed into the direct product of $\braket{\Gamma_G}$, $\braket{\mathcal{K}^{A\rfloor B}}=\braket{\mathcal{K}_A^{(A)}\cup\mathcal{K}_{\sim B}^{(A)}}\times\braket{\mathcal{K}^{(B)}}$ and $\braket{\mathcal{K}^{A\rightharpoonup B}}$.
  \item Via the X-chain group $\braket{\Gamma_G}$, one obtains the set of X-chain states $\Psi^{\emptyset}$.
  \item The Schmidt basis states $\ket{\phi_{A\rfloor B}^{(A)}(\xi})$ are constructed from the superposition of states in $\Psi^{\emptyset}$ inside the correlation group $\braket{\mathcal{K}_A^{(A)}\cup\mathcal{K}_{\sim B}^{(A)}}$ (Lemma \ref{lemma::biseparable_AB-correlation_states}).
  \item Similar to the previous step, one obtains the states $\ket{\phi_{A\rfloor B}^{(B)}(\xi)}$ via the correlation group $\braket{\mathcal{K}^{(B)}}$ (Lemma \ref{lemma::biseparable_AB-correlation_states}).
  \item Together with the stabilizer-parities $\pi_G(\xi)$, the set of $A\rfloor B$-correlation states $\Psi^{(A\rfloor B)}$ (Lemma \ref{lemma::biseparable_AB-correlation_states}) is constructed.
  \item Via the $(A\rightharpoonup B)$-correlation group $\braket{\mathcal{K}^{A\rightharpoonup B}}$, one obtains the Schmidt decomposition from the superposition of states in $\mathrm{span}(\Psi^{(A\rfloor B)})$ (Lemma \ref{lemma::orthonormality_of_AB-correlation_states} and Theorem \ref{theorem::Schmidt_decomposition_in_AB-correlation_states}).
\end{enumerate}
\end{algorithm}
\begin{figure*}[ht!]
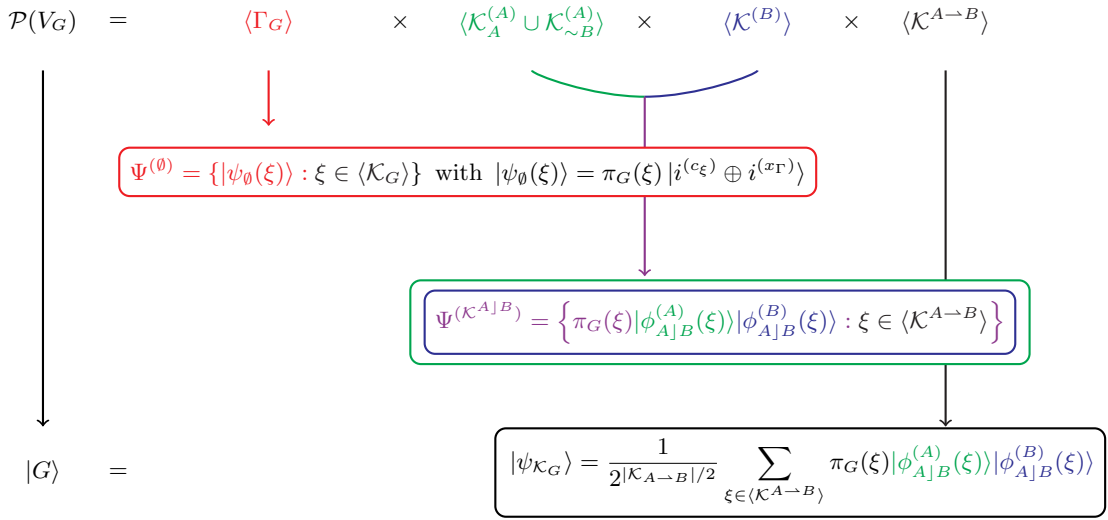

  \centering
  \ABfactorizationdiagram
  \caption{\colorfig X-chain factorization diagram for the Schmidt decomposition of graph states in the X-basis. A graphical summary of Lemmas \ref{lemma::biseparable_AB-correlation_states} and \ref{lemma::orthonormality_of_AB-correlation_states}, and Theorem \ref{theorem::Schmidt_decomposition_in_AB-correlation_states}.
  }
  \label{fig::factorization_diagram_AB_correlation}
\end{figure*}
\par
The $A\rfloor B$-factorization diagram of $|G_{\text{House}}\rangle$ is shown
in Fig. \ref{fig::house_graph_factorization_diagram}. As a result of this
theorem, the Schmidt decomposition of this state is
\begin{equation}
|G_{\text{House}}\rangle=\frac{1}{\sqrt{2}}\left(  |\psi_{A\rfloor
B}(\emptyset)\rangle+|\psi_{A\rfloor B}(\left\{ 2 \right\}  )\rangle
\right)
\end{equation}
with $|\psi_{A\rfloor B}(\emptyset)\rangle$ and $|\psi_{A\rfloor B}(\left\{2\right\}  )\rangle$ being given in Eqs.
\eqref{eq::house_state_A_B_correlation_state_1} and
\eqref{eq::house_state_A_B_correlation_state_2}. The house state has Schmidt
rank $r_{S}=2$ and the geometric measure of bipartite
entanglement $E_g^{(A|B)}=-2\log_{2}\left(  \min_{\psi}\left\vert \left\langle \psi
_{A}\psi_{B}|G_{\text{House}}\right\rangle \right\vert \right) = 1 $.
\subsubsection{Entanglement localization of graph states protected against errors}
\label{sec::unilateral_projection_against_errors}
\begin{figure*}[ht!]
  \centering
  \subfloat[]{
    \includegraphics[width=0.4\textwidth]{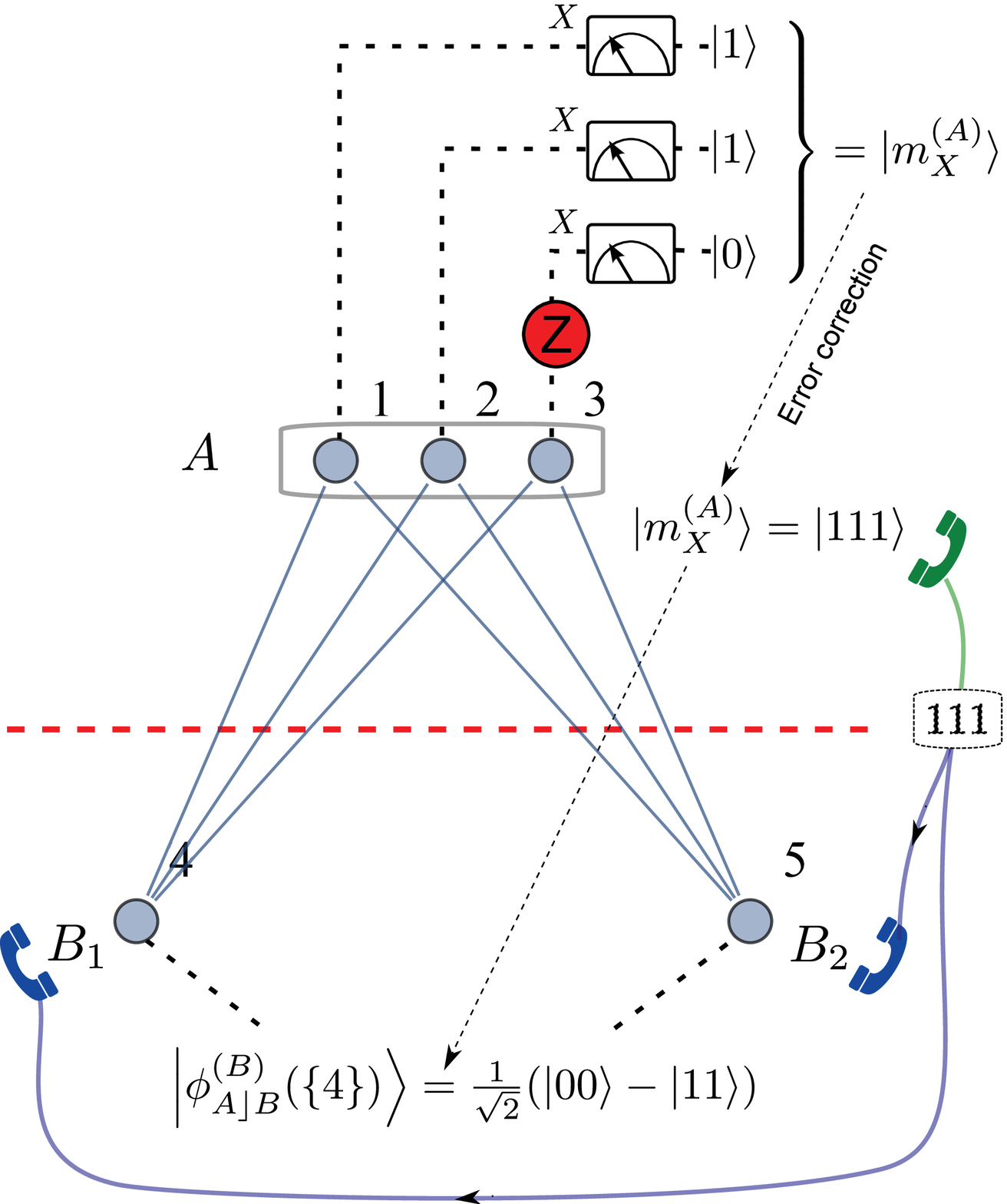}
    \label{fig::projection_against_error_bistar}
  }
  \subfloat[]{
    \includegraphics[width=0.55\textwidth]{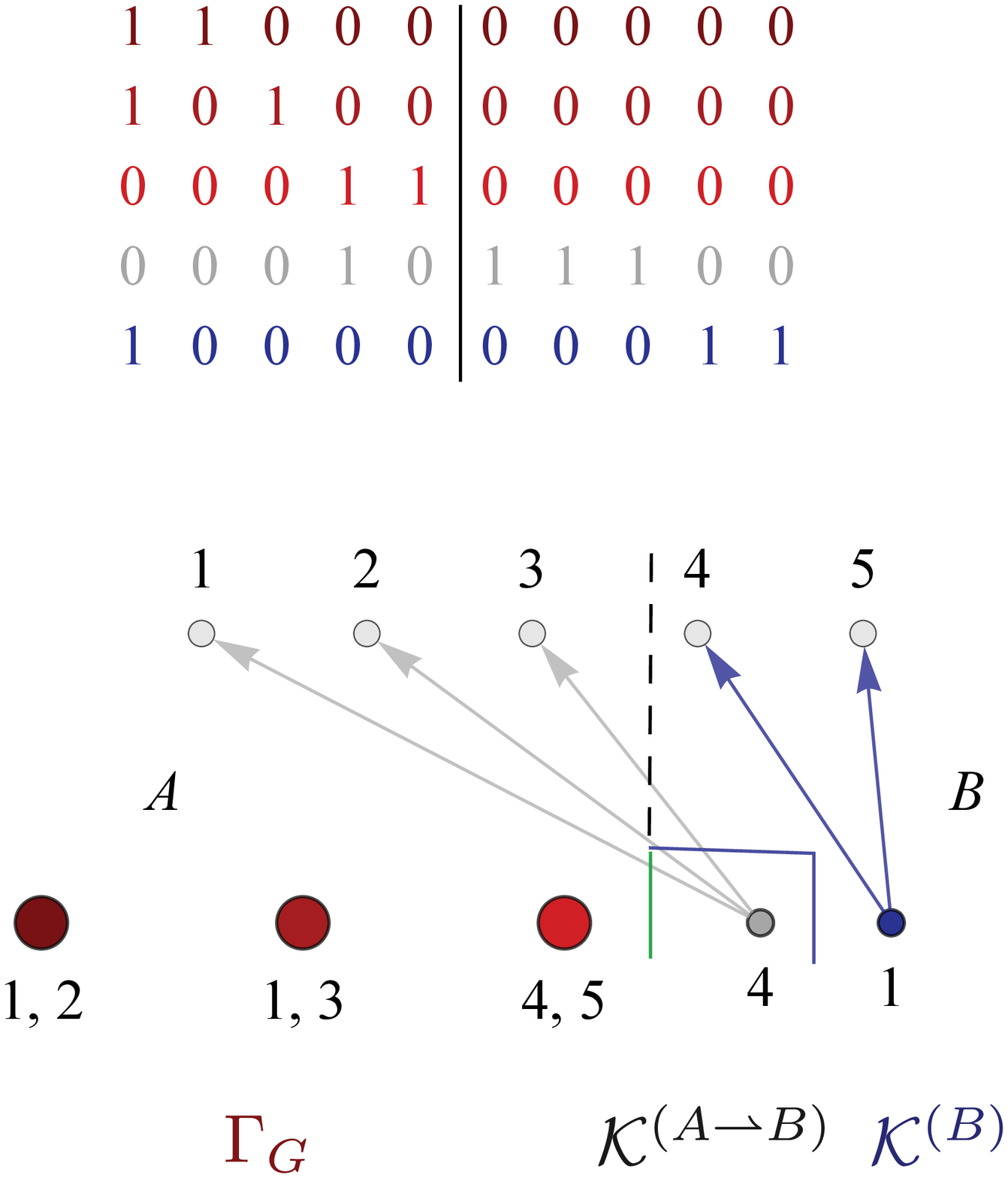}
    \label{fig::projection_against_error_AB-factorization}
  }%
  \caption{\colorfig An example of entanglement localization of graph states protected against errors:
  (a) Local X-measurements on subsystem $A$ project the graph state $\ket{G}$ onto the maximally entangled state $\ket{\phi_{A\rfloor B}^{(B)}(\xi)}$ for subsystem $B$.
  Under the assumption of a single qubit error, the outcome $\ket{m_X^{(A)}}=\ket{110}$ indicates a Z-error on vertex $3$.
  Alice sends Bob the corrected outcome $(111)$, such that Bob knows from the Schmidt decomposition that he possesses the state $\ket{\phi_{A\rfloor B}^{(B)}(\{4\})}$.
  (b) Binary representation and incidence structure after $A\rfloor B$-factorization.
  }
  \label{fig::projection_against_error}
\end{figure*}
In this section, we consider the localization of entanglement \cite{PoppVMCirac2005-LocalizableEnt} on graph states shared between Alice and Bob ($A|B$-bipartition); see Fig. \ref{fig::projection_against_error_bistar}.
Alice measures the graph state with Pauli-measurements on her system, then tells Bob her measurement results via a classical channel.
At the end, Bob should possess a bipartite maximally entangled state which he knows.
A connected graph state is maximally ``connected'' with respect to entanglement localization,
if every pair of vertices can be projected onto a Bell pair with local measurements \cite{GraphStateReviews2006}.
The simplest approach to localize the entanglement of $\ket{G}$ in the subsystem $\{B_1,B_2\}$ is finding a path between $B_1$ and $B_2$, then removing vertices outside the path with Z-measurements, and, at the end, measuring each vertex on the path between $\{B_1,B_2\}$ in the X-direction.
However, the resulting state depends on the measurement outcomes.
If errors occur in Alice's measurements, it will lead to a wrong state of Bob.
Therefore, error correction would be a nice feature in the entanglement localization of graph states.
\par
Graph states are stabilizer states.
These states can be exploited as quantum stabilizer codes \cite{Gottesman1996-QECSQHB, Gottesman1997-phdQEC, NielsenChuang2004-QCQI, GraphStateReviews2006}, which are linear codes and protect against errors.
In the Schmidt decomposition, the measurement outcomes on the system $A$ imply which states are projected in the system $B$.
The existence of X-chains on Alice's side can provide simple repetition codes as the Schmidt basis in the Schmidt decomposition in the X-basis.
Therefore, instead of removing the vertices outside a selected path between $B_1$ and $B_2$, we will make X-measurements on them to take the benefit of X-chains for the error correction.
\par
The graph state $\ket{G}$ in Fig. \ref{fig::projection_against_error_bistar} is taken as an example.
This state has the X-chain generating set $\Gamma_G=\{\{1,2\},\{1,3\},\{4,5\}\}$.
The generating set of the three correlation groups (Eq. \eqref{eq::def_KappaB_in_AB_bipartition}, \eqref{eq::def_KappaAA_in_AB_bipartition} and \eqref{eq::def_KappaA_B_in_AB_bipartition}) for the Schmidt decomposition are $\mathcal{K}_A^{(A)}=\mathcal{K}_{\sim B}^{(A)}=\emptyset$ and $\mathcal{K}^{(B)}=\{\{1\}\}$, while the generating set of the $(A\rightharpoonup B)$-correlation group is $\mathcal{K}^{(A\rightharpoonup B)}=\{\{4\}\}$.
According to Theorem \ref{theorem::Schmidt_decomposition_in_AB-correlation_states} and with the help of Algorithm \ref{algo::factorization_diagram_AB_correlation}, one has
\begin{equation}
|\psi_{A\rfloor B}(\emptyset)\rangle=\left\vert 000\right\rangle
\frac{\left\vert 00\right\rangle +\left\vert 11\right\rangle }{\sqrt{2}}%
\end{equation}
and%
\begin{equation}
|\psi_{A\rfloor B}(\{4\})\rangle=\left\vert 111\right\rangle \frac{\left\vert
00\right\rangle -\left\vert 11\right\rangle }{\sqrt{2}}.
\end{equation}
As a result, the Schmidt decomposition of the graph state is
\begin{equation}
|G\rangle=\frac{1}{\sqrt{2}}\left(  \left\vert 000\right\rangle \frac
{\left\vert 00\right\rangle +\left\vert 11\right\rangle }{\sqrt{2}}+\left\vert
111\right\rangle \frac{\left\vert 00\right\rangle -\left\vert 11\right\rangle
}{\sqrt{2}}\right)
.\label{eq::Schmidt_decomposition_in_unilateral_projection}%
\end{equation}
In this example, one observes that there are $2$ X-chain generators $\{1,2\}$ and $\{1,3\}$ on Alice's $3$-qubit system.
This encodes the following $[3,1,3] $ repetition code \cite{Gottesman1996-QECSQHB, Gottesman1997-phdQEC, NielsenChuang2004-QCQI} in the Schmidt vectors on Alice's system:
\begin{equation}
|\phi_{A\rfloor B}^{(A)}(0)\rangle=|000\rangle \text{ and }
|\phi_{A\rfloor B}^{(A)}(\{4\})\rangle=|111\rangle.
\end{equation}
These codes have the Hamming distance $3$.
Thus, a single Z-error can be corrected.
After a measurement in the X-basis, Alice can therefore correct her result before sending it to Bob.
In this approach, Bob will gain the correct acknowledgment of his maximally entangled state after Alice's measurement with confidence.
Although the repetition code cannot correct phase errors (the X-errors in X-measurements), it is already sufficient for our task, since a phase error on Alice's side does not change the measurement outcomes.
\par
This application may be useful for quantum repeaters \cite{BriegelDCZoller1998-QRepeater}.
The parties $B_1$ and $B_2$ can be at a large distance,
such that they are not able to directly create an entangled state between them.
In this case, they need the help from Alice as a repeater station to project the entanglement onto $B_1$ and $B_2$.

\section{Conclusions}
In this paper, we discussed properties of the representation of graph states in the computational X-basis.
We introduced the framework of X-resources and correlation indices
and linked them to the binary representation of graph states. 
A special type of X-resources was defined as X-chains: an X-chain is a subset of vertices for a given graph, such that the product of the stabilizer generators associated with these vertices contains only $\sigma_X$-Pauli operators.
The set of X-chains of a graph state is a group, which can be calculated efficiently \cite{WuKampermannBruss2015-XChainAlgo}.
The X-chain groups revealed structures of graph states and showed how to distinguish them by local $\sigma_X$ measurements.
We introduced X-chain factorization (Lemma \ref{lemma::X-chain_factorized_group}, \ref{theorem::graph_states_as_X-factorized_states}) for deriving the representation of graph states in the X-basis,
and it was shown that a graph state can be represented as superposition of all X-chain states (Theorem \ref{theorem::graph_states_as_X-factorized_states}).
This approach was illustrated in the so-called factorization diagram (Algorithm \ref{algo::factorization_diagram_graph_states}).
The larger the X-chain group is, the fewer X-chain states are needed for representing the graph state.
%
\par
We demonstrated various applications of the X-chain factorization.
An important application is its usefulness for efficiently determining the overlap of two graph states (Corollary \ref{coro::balanced_and_orthogonal_graph_states}) using our algorithm.
\par
Further, we generalized the X-chain factorization approach such that it allows to find the Schmidt decomposition of graph states, which is the superposition of appropriately selected correlation states (Theorem \ref{theorem::Schmidt_decomposition_in_AB-correlation_states}, Algorithm \ref{algo::factorization_diagram_AB_correlation} and MATHEMATICA package in the
Supplemental Material \cite{XchainMpackage_Wu}).
\par
Further benefits of the X-chain factorization are error correction procedures in entanglement localization of graph states in bipartite systems.
This could be useful for quantum repeaters \cite{BriegelDCZoller1998-QRepeater}.
\par
The results of this paper can be extended to general multipartite graph states, e.g. weighted graph states \cite{DurHHLBriegel2005-EntSpinChain, HartmannCDB2007-04} and hypergraph states \cite{RossiHBMacchiavello2013-11,QuWLBao2013-02,GuhneCSMRBKM2014}.
Another possible extension of these results is to consider the representation of graph states in a hybrid basis, i.e. for a subset of the qubits one adopts the X-basis, while for the other parties one uses the Z-basis.
The graph state in such a hybrid basis can even have a simpler representation (i.e. a smaller number of terms in the superposition) than the one obtained by X-chain factorization.
Besides, in \cite{HeinEisertBriegel2004-06,GraphStateReviews2006,CosentinoSimone2009-Weight,HajdusekMurao2013-Direct}, various multipartite entanglement measures for graph states were studied.
We expect that the approach of X-chain factorization may also be useful in these cases.

\begin{acknowledgments}
  This work was financially supported by the BMBF (Germany). We thank Michael Epping, Mio Murao and Yuki Mori for inspiration and useful discussions.
\end{acknowledgments}

\appendix

\section{Proofs}
\label{apdx::proofs_for_X-chains}
\label{sec::graph_states_proofs}
Proposition \ref{prop::isomorphism_of_vertex-induction_operation} shows the isomorphism between the stabilizer group and power set of the graph vertex set. It is proved as follows.
\begin{customprop}{\ref{prop::isomorphism_of_vertex-induction_operation}}
[Isomorphism of $\xi$-induction]%
Let $\left(
\mathcal{S}_{G},\cdot\right)  $ be the stabilizer group of a graph state
$|G\rangle$ and $\mathcal{P}(V_{G})$ be the power set of the vertex set of $G$.
The vertex-induction operation $s_{G}^{(\xi)}$ is a group isomorphism between
$(\mathcal{P}(V_{G}),\Delta)$ and $(\mathcal{S}_{G},\cdot)$, i.e.%
\begin{equation}
(\mathcal{P}(V_{G}),\Delta)\overset{s_{G}^{(\xi)}}{\sim}\left(  \mathcal{S}%
_{G},\cdot\right)  ,
\end{equation}
where $\Delta$ is the symmetric difference operation.
\end{customprop}
\begin{proof}[Proof of Proposition \ref{prop::isomorphism_of_vertex-induction_operation}]
Let $\xi_{1}$, $\xi_{2}\subseteq V_{G}$ be two vertex subsets. Since the
stabilizer group $\mathcal{S}_{G}$ is Abelian, one can resort the product as
$s_{G}^{(\xi_{1})}s_{G}^{(\xi_{2})}=\prod_{i\in\xi_{1}\Delta\xi_{2}}g_{i}%
\prod_{i^{\prime}\in\xi_{1}\cap\xi_{2}}(g_{i^{\prime}})^{2}$.
The property $(g_{i})^{2}=\id$ leads to $s_{G}^{(\xi_{1})}s_{G}^{(\xi_{2}%
)}=s_{G}^{(\xi_{1}\Delta\xi_{2})}$. Therefore $s_{G}^{(\xi)}$ is a group
homomorphism $(\mathcal{P}(V_{G}),\Delta)\overset{s_{G}^{(\xi)}}{\rightarrow
}\left(  \mathcal{S}_{G_{k}},\cdot\right)  $. The kernel of $s_{G}^{(\xi)}$ is
$\emptyset$, therefore $(\mathcal{P}(V_{G}),\Delta)\overset{s_{G}^{(\xi)}%
}{\sim}\left(  \mathcal{S}_{G},\cdot\right)  $.
\end{proof}
\par
Proposition \ref{prop::induced_stabilizer_math_formula} provides us with a mathematical expression of $\xi$-induce graph state stabilizer. It is proven by counting of the exchanging times of Pauli $X$ and $Z$ operators.
\begin{customprop}{\ref{prop::induced_stabilizer_math_formula}}
[Induced stabilizer]%
Let $\xi$ be a vertex subset of a graph $G$. The $\xi$-induced stabilizer (see Def. \ref{def::induced_stabilizer}) of a graph
state $|G\rangle$ is given by
\begin{equation}
s_{G}^{(\xi)}=\pi_{G}\left(  \xi\right)  \sigma_{X}^{(\xi)}\sigma_{Z}%
^{(c_{\xi})}%
\end{equation}
where $c_{\xi}$ is the \emph{correlation index} of $\xi$
and $\pi_{G}\left(\xi\right)$ is the stabilizer parity of $\xi$.
\end{customprop}
\begin{proof}
[Proof of Proposition \ref{prop::induced_stabilizer_math_formula}]
Let $\xi=\left\{  j_{1},...,j_{m}\right\}  $. Once we write down the $\xi
$-induced stabilizers explicitly, we have
\begin{equation}
s_{G}^{\left(  \xi\right)  }=\sigma_{x}^{(j_{1})}\sigma_{z}^{(N_{j_{1}})}%
\cdots\sigma_{x}^{(j_{m})}\sigma_{z}^{(N_{j_{m}})}%
\end{equation}
with $N_{j}$ being the neighborhood of $j$. Now we shift $\sigma_{x}$
operators to re-sort the expression such that all the $\sigma_{x}$ are on the
left side of $\sigma_{z}$. First, let us consider the last $X$-operator,
$\sigma_{x}^{(j_{m})}$. The number of $\sigma_{z}^{(j_{m})}$ on the left hand
side of $\sigma_{x}^{(j_{m})}$ indicates how many times one needs to exchange
$\sigma_{x}^{(j_{m})}$ and $\sigma_{z}^{(j_{m})}$. It is equal to the number
of neighbors of $j_{m}$ in the $\xi$-induced graph $G[\xi]$, namely $d_{j_{m}%
}(G[\xi])$. Due to the anti-commutativity of $\sigma_{x}$ and $\sigma_{z}$,
the shifting brings us a prefactor $(-1)^{d_{j_{m}}(G\left[  \xi\right]  )}$.
Recursively, shifting $\sigma_{x}^{(j_{m}-1)}$ to the left side of $\sigma
_{z}^{(j_{m}-1)}$ brings us a prefactor $(-1)^{d_{j_{k-1}}(G\left[
\xi\right]  -j_{m})}$, and so on.
In total, the times that one needs to exchange
$\sigma_{x}$ and $\sigma_{z}$ is
\begin{equation}
d_{j_{k}}(G\left[  \xi\right]  )+d_{j_{k-1}}(G\left[  \xi\right]
-j_{m})+\cdots+d_{2}(G\left[  \left\{  j_{1},j_{2}\right\}  \right]  ),
\end{equation}
which is equal to the edge number $\left\vert E(G[\xi])\right\vert $. Hence,
after the shifting, we obtain a product of re-sorted $\sigma_{x}$
and $\sigma_{z}$ operators with a prefactor $(-1)^{\left\vert E(G[\xi
])\right\vert }$, i.e.,%
\begin{equation}
s_{G}^{(\xi)}=(-1)^{\left\vert E(G[\xi])\right\vert }\sigma_{x}^{(j)}%
\sigma_{z}^{(n_{G}\left(  j_{1}\right)  )}\cdots\sigma_{z}^{(n_{G}\left(
j_{m}\right)  )},
\end{equation}
while $\sigma_{z}^{(N(j_{1}))}\cdots\sigma_{z}^{(N(j_{m}))}=\sigma_{z}%
^{c_{G}(\xi)}$.
\end{proof}
\bigskip
\par
Lemma \ref{lemma::X-chain_factorized_group} regroups the power set of vertices with factorization regarding the X-chain group into the correlation group. Accordingly, one can regroup the graph state projector by stabilizers induced by the correlation group. It is a result of Proposition \ref{prop::isomorphism_of_vertex-induction_operation}.
\begin{customlemma}{\ref{lemma::X-chain_factorized_group}}
[X-chain groups and correlation groups]%
Let $|G\rangle$ be a graph state.
The set of X-chains together with the symmetric difference
$(\mathcal{X}_{G}^{\left(  \emptyset\right)  },\Delta)$,
is a normal subgroup of $\left(  \mathcal{P}\left(  V_{G}\right)  ,\Delta\right)$.
The quotient group $(\mathcal{P} \left(  V_{G}\right)  /\mathcal{X}_{G}^{\left(\emptyset\right)},\Delta)$ is identical to the set of all resource sets
\begin{equation}
\mathcal{P}\left(  V_{G}\right)  /\mathcal{X}_{G}^{\left(  \emptyset\right)
}=\left\{  \mathcal{X}_{G}^{\left(  c\right)  }:c\in\mathcal{C}_{G}\right\},
\end{equation}
which we call call the \emph{correlation group} of $\ket{G}$.
Let $\Gamma_{G}$ and $\mathcal{K}_{G}$ denote the generating sets of
$(\mathcal{X}_{G}^{\left(  \emptyset\right)  },\Delta)$ and $(\mathcal{P}(V_{G})/\mathcal{X}_{G}^{(\emptyset)}, \Delta
)$, respectively.
The stabilizer group $\left(  \mathcal{S}_{G},\cdot\right)  $ is isomorphic to the direct product of the X-chain group and the correlation group,
\begin{equation}
\left(  \mathcal{S}_{G},\cdot\right)  \sim\left(  \left\langle \Gamma
_{G}\right\rangle ,\Delta\right)  \times\left(  \left\langle \mathcal{K}%
_{G}\right\rangle ,\Delta\right)
,
\end{equation}
As a result, the graph state $|G\rangle$ is the product of the X-chain group and correlation group inducing stabilizers, i.e.%
\begin{equation}
|G\rangle\langle G|=\prod_{\kappa\in\mathcal{K}_{G}}\frac{1+s_{G}^{(\kappa)}%
}{2}\prod_{\gamma\in\Gamma_{G}}\frac{1+s_{G}^{(\gamma)}}{2}%
.%
\end{equation}
\end{customlemma}
\begin{proof}
[Proof of Lemma \ref{lemma::X-chain_factorized_group}]
Let $\xi_{1}$ and $\xi_{2}$ be two elements of $\mathcal{X}_{G}^{\left(
c\right)  }$. The correlation index mapping, $c_{G}:\left(  \mathcal{P}%
(V_{G}),\Delta\right)  \rightarrow\left(  \mathcal{C}_{G},\Delta\right)  $, is
a group homomorphism, since $c_{G}\left(  \xi_{1}\Delta\xi_{2}\right)
=c_{G}\left(  \xi_{1}\right)  \Delta c_{G}\left(  \xi_{2}\right)  $. Due to
the definition of X-chains that $c_{G}(\xi)=0$, $(\mathcal{X}_{G}^{\left(
\emptyset\right)  },\Delta)$ is the kernel of the mapping $c_{G}$. Since
$\left(  \mathcal{P}\left(  V_{G}\right)  ,\Delta\right)  $ is Abelian, the
kernel $(\mathcal{X}_{G}^{\left(  \emptyset\right)  },\Delta)$ and the
correlation group $\mathcal{P}\left(  V_{G}\right)  /\mathcal{X}_{G}^{\left(  \emptyset\right)
}$ are both normal subgroups.
The correlation group $\braket{\mathcal{K}_G}$ is obtained via
\begin{align}
\braket{\mathcal{K}_G}  & =\mathcal{P}\left(  V_{G}\right)  /\mathcal{X}%
_{G}^{\left(  \emptyset\right)  }\nonumber\\
& =\left\{  \xi\Delta\mathcal{X}_{G}^{\left(  \emptyset\right)  }:\xi
\in\mathcal{P}\left(  V_{G}\right)  \right\}  \nonumber\\
& =\left\{  \mathcal{X}_{G}^{\left(  c\right)  }:c\in\mathcal{C}_{G}\right\}.
\end{align}
As a result of group theory,
\begin{equation}
\left(  \mathcal{P}(V_{G}),\Delta\right)  =\left(  \left\langle \Gamma
_{G}\right\rangle ,\Delta\right)  \times\left(  \left\langle \mathcal{K}%
_{G}\right\rangle ,\Delta\right)  .
\end{equation}
According to Proposition \ref{prop::isomorphism_of_vertex-induction_operation}%
, one obtains the isomorphism%
\begin{equation}
\left(  \mathcal{S}_{G_{k}},\cdot\right)  \overset{s_{G_{k}}^{(\xi)}}{\sim
}\left(  \left\langle \Gamma_{G}\right\rangle ,\Delta\right)  \times\left(
\left\langle \mathcal{K}_{G}\right\rangle ,\Delta\right)  .
\end{equation}
The projector of graph state $|G\rangle\langle G|$ is the sum of all $\xi
$-induced stabilizers, $s_{G}^{(\xi)}$, with $\xi\in\left\langle \Gamma
_{G}\right\rangle \times\left\langle \mathcal{K}_{G}\right\rangle $. As a
result,%
\begin{align}
|G\rangle\langle G| &  =\sum_{\xi\in\mathcal{P}\left(  V_{G}\right)  }%
s_{G}^{(\xi)}\nonumber\\
&  =\prod_{\kappa\in\mathcal{K}_{G}}\frac{1+s_{G}^{(\kappa)}}{2}\prod
_{\gamma\in\mathcal{\Gamma}_{G}}\frac{1+s_{G}^{(\gamma)}}{2}.
\end{align}

\end{proof}
\begin{customprop}{\ref{prop::X-chain_states_in_X-basis}}
[X-chain states in X-basis]
Let $|G\rangle$ be a graph state with the X-chain group $\left\langle \Gamma
_{G}\right\rangle $ and the correlation group $\left\langle \mathcal{K}%
_{G}\right\rangle $. Let $\Gamma_{G}=\left\{  \gamma_{1},\gamma_{2}%
,...\right\}  $, and $\gamma_{i}=\left\{  v_{i_{1}},v_{i_{2}},\cdots\right\}
$. The generating set $\Gamma_{G}$ and $\mathcal{K}_{G}$ can be chosen as

\begin{enumerate}
\item $\Gamma_G=\{\gamma_1,...,\gamma_k\}$ such that $\gamma_i\not\subseteq\gamma_j$ for all $\gamma_i, \gamma_j\in\Gamma_G$,
\item $\mathcal{K}_{G}=\left\{  \left\{  v\right\}  :v\in V_{G}\backslash
\bigcup_{i=1}^{k}\left\{  v_{i_{1}}\right\}  \right\}  $.
\end{enumerate}
Here, the first element of $\gamma_i=\{v_{i_1},v_{i_2},...\}$ is selected in a way such that $v_{i_{1}}\neq v_{j_{1}}$ for all $i\not =j$.
Then the X-chain state $\ket{\psi_{\emptyset}(\emptyset)}$ of $|G\rangle$ is an X-basis state, $|i^{(x_{\Gamma})}\rangle$, with%
\begin{equation}
x_{\Gamma}=\left\{  v_{i_{1}}:\pi_{G}\left(  \gamma_{i}\right)  =-1\right\}  .
\end{equation}
\end{customprop}
\begin{proof}[Proof of Proposition \ref{prop::X-chain_states_in_X-basis}]
Let $\gamma_{i}^{-}$ be an X-chain generator with negative parity $\pi
_{G}\left(  \gamma_{i}^{-}\right)  =-1$, then $v_{i_{1}}\in x_{\Gamma}$. Since
$v_{i_{1}}\in\gamma_{i}$ and $v_{i_{1}}\not \in \gamma_{j}$ for all $j\not =%
i$, the intersection $\gamma_{i}^{-}\cap x_{\Gamma}=\left\{  v_{i_{1}%
}\right\}  $, hence $\pi_{G}\left(  \gamma_{i}^{-}\right)  \sigma_{x}%
^{(\gamma_{i})}|i^{(x_{\Gamma})}\rangle=|i^{(x_{\Gamma})}\rangle$. For an
X-chain generator $\gamma_{j}^{+}$ with positive parity $\pi_{G}\left(
\gamma_{i}^{+}\right)  =1$, the intersection $\gamma_{i}^{+}\cap x_{\Gamma
}=\emptyset$, and therefore $\pi_{G}\left(  \gamma_{i}^{+}\right)  \sigma
_{x}^{(\gamma_{i})}|i^{(x_{\Gamma})}\rangle=|i^{(x_{\Gamma})}\rangle$. Hence
the condition \ref{item::def_X-chain_state_condition_1} in Definition
\ref{def::X-chain_states_K-correlation_states} is fulfilled.

Let $\left\{  v\right\}  \in\mathcal{K}_{G}$ be a generator of correlation
group, then $\sigma_{x}^{(\{v\})}|i^{(x_{\Gamma})}\rangle=\left(  -1\right)
^{\left\vert x_{\Gamma}\cap\left\{  v\right\}  \right\vert }|i^{(x_{\Gamma}%
)}\rangle=|i^{(x_{\Gamma})}\rangle$, since $\left\vert x_{\Gamma}\cap\left\{
v\right\}  \right\vert =0$ according to the choice of $\mathcal{K}_{G}$.
Hence, the condition \ref{item::def_X-chain_state_condition_2} in Definition
\ref{def::X-chain_states_K-correlation_states} is fulfilled.
\end{proof}
\bigskip
The Proposition \ref{prop::correlation_states_explicit_form} derives the correlation states as the summation of X-chain states. It follows directly from their definition.
\begin{customprop}{\ref{prop::correlation_states_explicit_form}}
[Form of X-chain states, $\mathcal{K}$-correlation states]%
Let $\xi\in\left\langle
\mathcal{K}_{G}\right\rangle $ be an X-resource and $\left\langle
\mathcal{K}\right\rangle \subseteq\left\langle \mathcal{K}_{G}\right\rangle $.
An X-chain state is given as
\begin{equation}
|\psi_{\emptyset}\left(  \xi\right)  \rangle=\pi_{G}\left(  \xi\right)
\left\vert i^{(x_{\Gamma})}\oplus i^{(c_{\xi})}\right\rangle,
\end{equation}
where $\pi_{G}\left(  \xi\right)  $ is the stabilizer parity of $\xi$ (see Eq. \eqref{eq::G-parity_formula}),
and $c_{\xi}$ is the correlation index of $\xi$.

A $\mathcal{K}$-correlation state is the superposition of \emph{X-chain
states},\emph{ }
\begin{equation}
|\psi_{\mathcal{K}}\left(  \xi\right)  \rangle=\frac{1}{2^{\left\vert
\mathcal{K}\right\vert /2}}\sum_{\xi^{\prime}\in\left\langle \mathcal{K}%
\right\rangle }|\psi_{\emptyset}\left(  \xi\Delta\xi^{\prime}\right)  \rangle.
\end{equation}
\end{customprop}

\begin{proof}[Proof of Proposition \ref{prop::correlation_states_explicit_form}]
According to Proposition \ref{prop::isomorphism_of_vertex-induction_operation}%
, $s_{G}^{(\xi)}\Delta s_{G}^{(\xi^{\prime})}=s_{G}^{(\xi\Delta\xi^{\prime})}%
$, the product of the operators in Eq. (\ref{eq::def_correlation_states}) can
be reformulated to the sum of%
\begin{equation}
|\psi_{\mathcal{K}}\left(  \xi\right)  \rangle=\frac{1}{2^{\left\vert
\mathcal{K}\right\vert /2}}\sum_{\xi^{\prime}\in\left\langle \mathcal{K}%
\right\rangle }s_{G}^{(\xi\Delta\xi^{\prime})}\left\vert i^{(x_{\Gamma}%
)}\right\rangle .
\end{equation}
With the formulas in Proposition \ref{prop::induced_stabilizer_math_formula},
\begin{equation}
|\psi_{\mathcal{K}}\left(  \xi\right)  \rangle=\frac{1}{2^{\left\vert
\mathcal{K}\right\vert /2}}\sum_{\xi^{\prime}\in\left\langle \mathcal{K}%
\right\rangle }\pi_{G}\left(  \xi\Delta\xi^{\prime}\right)  \sigma
_{z}^{(c_{\xi\Delta\xi^{\prime}})}\sigma_{x}^{(\xi\Delta\xi^{\prime}%
)}\left\vert i^{(x_{\Gamma})}\right\rangle .
\end{equation}
Since $\sigma_{x}^{(\kappa)}\left\vert i^{(x_{\Gamma})}\right\rangle
=\left\vert i^{(x_{\Gamma})}\right\rangle $ for all $\kappa\in\left\langle
\mathcal{K}\right\rangle $, one obtains%
\begin{equation}
|\psi_{\mathcal{K}}\left(  \xi\right)  \rangle=\frac{1}{2^{\left\vert
\mathcal{K}\right\vert /2}}\sum_{\xi^{\prime}\in\left\langle \mathcal{K}%
\right\rangle }\pi_{G}\left(  \xi\Delta\xi^{\prime}\right)  \left\vert
i^{(x_{\Gamma})}\oplus i^{(c_{\xi\Delta\xi^{\prime}})}\right\rangle .
\end{equation}

\end{proof}
\bigskip
The following calculation is the proof of Corollary \ref{coro::properties_of_correlation_states}, which is employed in the proof of Theorem \ref{theorem::graph_states_as_X-factorized_states}.
\begin{proof}[Proof of Corollary \ref{coro::properties_of_correlation_states}]
The properties (\ref{item::correlation_states_property_2}) and
(\ref{item::correlation_states_property_3}) are direct results of Proposition
\ref{prop::correlation_states_explicit_form}.
Property (\ref{item::correlation_states_property_1}) follows from the commutativity of graph
state stabilizers. Let $\kappa=\gamma\Delta\kappa_{0}\in\left\langle
\Gamma_{G}\right\rangle \times\left\langle \mathcal{K}\right\rangle $ with
$\gamma\in\left\langle \Gamma_{G}\right\rangle $ and $\kappa_{0}%
\in\left\langle \mathcal{K}\right\rangle $; then%
\begin{align*}
s_{G}^{(\kappa)}\phi_{\mathcal{K}}\left(  \xi\right)   &  =s_{G}^{(\gamma
)}s_{G}^{(\kappa_{0})}s_{G}^{(\xi)}\prod_{\kappa^{\prime}\in\mathcal{K}}%
\frac{1+s_{G}^{(\kappa^{\prime})}}{\sqrt{2}}|x_{\Gamma}\rangle\\
&  =s_{G}^{(\xi)}\frac{1}{2^{\left\vert \mathcal{K}\right\vert /2}}%
s_{G}^{(\kappa_{0})}\sum_{\kappa^{\prime}\in\left\langle \mathcal{K}%
\right\rangle }s_{G}^{(\kappa^{\prime})}s_{G}^{(\gamma)}|x_{\Gamma}\rangle.
\end{align*}
Due to the definition of $|x_{\Gamma}\rangle$, it holds that $s_{G}^{(\gamma)}%
|x_{\Gamma}\rangle=|x_{\Gamma}\rangle$. Since $\kappa_{0}\in\left\langle \mathcal{K}%
\right\rangle $, the operator $s_{G}^{(\kappa_{0})}\sum_{\kappa^{\prime}%
\in\left\langle \mathcal{K}\right\rangle }s_{G}^{(\kappa^{\prime})}%
=\sum_{\kappa^{\prime}\in\left\langle \mathcal{K}\right\rangle }s_{G}%
^{(\kappa^{\prime})}$ is not changed by $s_{G}^{(\kappa_{0})}$, hence%
\begin{equation}
s_{G}^{(\kappa)}\ket{\phi_{\mathcal{K}}\left(  \xi\right)}  =s_{G}^{(\xi)}\frac
{1}{2^{\left\vert \mathcal{K}\right\vert /2}}\sum_{\kappa^{\prime}%
\in\left\langle \mathcal{K}\right\rangle }s_{G}^{(\kappa^{\prime})}%
|x_{\Gamma}\rangle=\ket{\phi_{\mathcal{K}}\left(  \xi\right)}  .
\end{equation}

\end{proof}
\bigskip
\par
Lemma \ref{lemma::biseparable_AB-correlation_states} shows us the $A|B$-separability of the correlation state $|\psi_{A\rfloor B}(\xi)\rangle$. It is a result of the property of multiplication $G$-parities.
\begin{lemma}
[Multiplication of $G$-parity]\label{lemma::multiplication_of_G_parity}Let $G$
be a graph; then the multiplication of two of the parities of two-vertex subset
$\pi_{G}\left(  \xi_{1}\right)  $ and $\pi_{G}\left(  \xi_{2}\right)  $ is
equal to
\begin{equation}
\pi_{G}\left(  \xi_{1}\right)  \pi_{G}\left(  \xi_{2}\right)  =\left(
-1\right)  ^{\left\vert E_{G}(\xi_{1}:\xi_{2})\right\vert }\pi_{G}\left(
\xi_{1}\Delta\xi_{2}\right)  .
\end{equation}

\begin{proof}
Since $\left(  P(V),\Delta\right)  $ is isomorphic to the stabilizer group
$\left(  S_{G},\cdot\right)  $, it holds then that%
\[
s_{G}^{(\xi_{1})}s_{G}^{(\xi_{2})}=s_{G_{k}}^{(\xi_{1}\Delta\xi_{2})}.
\]
Reorder the $\sigma_{x}$ and $\sigma_{z}$ in both sides, such that $\sigma
_{x}$ are on the left side of $\sigma_{z}$; then one obtains
\begin{equation}
\pi_{G}\left(  \xi_{1}\right)  \pi_{G}\left(  \xi_{2}\right)  =\left(
-1\right)  ^{\left\vert E_{G}(\xi_{1}:\xi_{2})\right\vert }\pi_{G}\left(
\xi_{1}\Delta\xi_{2}\right)  .
\end{equation}

\end{proof}
\end{lemma}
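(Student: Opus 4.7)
The plan is to leverage the stabilizer algebra rather than attempt a direct edge-counting proof, since Propositions \ref{prop::isomorphism_of_vertex-induction_operation} and \ref{prop::induced_stabilizer_math_formula} already package the combinatorics I would otherwise have to redo by hand. The engine is the operator identity $s_G^{(\xi_1)} s_G^{(\xi_2)} = s_G^{(\xi_1\Delta\xi_2)}$, which holds by the group isomorphism in Proposition \ref{prop::isomorphism_of_vertex-induction_operation}. Once both sides are expanded via Proposition \ref{prop::induced_stabilizer_math_formula} into the canonical form $\pi_G(\xi)\sigma_X^{(\xi)}\sigma_Z^{(c_\xi)}$, the parities are forced to satisfy whatever scalar relation makes the two Pauli strings agree, and reading off that relation will be the whole content of the lemma.

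On the left-hand side I would substitute $s_G^{(\xi_j)} = \pi_G(\xi_j)\sigma_X^{(\xi_j)}\sigma_Z^{(c_{\xi_j})}$ for $j=1,2$ and then commute $\sigma_Z^{(c_{\xi_1})}$ past $\sigma_X^{(\xi_2)}$; the anticommutation of $\sigma_X$ and $\sigma_Z$ on each common site produces a factor $(-1)^{|c_{\xi_1}\cap\xi_2|}$. Using $\sigma_\alpha^2 = \id$ together with the additivity $c_{\xi_1}\Delta c_{\xi_2} = c_{\xi_1\Delta\xi_2}$ (immediate from the definition of $c_\xi$ as a symmetric difference of neighborhoods, with entries for $v\in\xi_1\cap\xi_2$ cancelling), the Pauli factors collapse to $\sigma_X^{(\xi_1\Delta\xi_2)}\sigma_Z^{(c_{\xi_1\Delta\xi_2})}$, exactly matching the Pauli content of $s_G^{(\xi_1\Delta\xi_2)}$. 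Equating the scalar prefactors then yields
\begin{equation*}
\pi_G(\xi_1)\,\pi_G(\xi_2)\,(-1)^{|c_{\xi_1}\cap\xi_2|} = \pi_G(\xi_1\Delta\xi_2).
\end{equation*}

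The main obstacle, and the only place where genuine graph-theoretic input enters, is the identification $|c_{\xi_1}\cap\xi_2| \equiv |E_G(\xi_1:\xi_2)| \pmod 2$ needed to rewrite the sign above into the claimed form. I would prove this by unfolding definitions: a vertex $v\in\xi_2$ lies in $c_{\xi_1}$ iff $v$ has an odd number of neighbors in $\xi_1$, so summing this indicator over $v\in\xi_2$ gives, modulo $2$, the total incidence count of edges from $\xi_2$ into $\xi_1$. A case analysis on the Venn partition $\xi_1\setminus\xi_2$, $\xi_2\setminus\xi_1$, $\xi_1\cap\xi_2$ then shows that edges lying inside the overlap $\xi_1\cap\xi_2$ are counted twice (once from each endpoint) and drop out mod $2$, while all genuine cross-edges are counted exactly once and assemble into $|E_G(\xi_1:\xi_2)|$. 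This parity cancellation on the overlap is the delicate step and, implicitly, fixes the intended reading of the notation $E_G(\xi_1:\xi_2)$ when $\xi_1$ and $\xi_2$ are not disjoint.
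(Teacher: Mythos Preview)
Your proposal is correct and follows essentially the same route as the paper: both start from the stabilizer identity $s_G^{(\xi_1)} s_G^{(\xi_2)} = s_G^{(\xi_1\Delta\xi_2)}$ and extract the parity relation by commuting the Pauli factors into canonical $\sigma_X\sigma_Z$ order. You simply make explicit what the paper compresses into the phrase ``reorder the $\sigma_x$ and $\sigma_z$''---in particular the identification $|c_{\xi_1}\cap\xi_2|\equiv |E_G(\xi_1{:}\xi_2)|\pmod 2$---and your closing remark about the intended reading of $E_G(\xi_1{:}\xi_2)$ when $\xi_1\cap\xi_2\neq\emptyset$ is apt, since the lemma indeed requires overlap-internal edges to contribute with even parity.
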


With this lemma one can prove Lemma \ref{lemma::biseparable_AB-correlation_states} as follows.
\begin{widetext}

\begin{customlemma}{\ref{lemma::biseparable_AB-correlation_states}}
[$A|B$-Separability of $A\rfloor B$-correlation states]%
For $\xi\in\langle
\mathcal{K}_{G}^{A\rightharpoonup B}\rangle$, the $\left(  A\rightharpoonup
B\right)  $-correlation states%
\begin{equation}
|\psi_{A\rfloor B}(\xi)\rangle=\pi_{G}\left(  \xi\right)  |\phi_{A\rfloor
B}^{(A)}(\xi)\rangle|\phi_{A\rfloor B}^{(B)}(\xi)\rangle
\end{equation}
are $A|B$-separable with $|\phi_{A\rfloor B}^{(A)}(\xi)\rangle:=|\psi
_{\mathcal{K}_{A}^{(A)}\cup\mathcal{K}_{\sim B}^{(A)}}^{(A)}(\xi)\rangle$
and $|\phi_{A\rfloor B}^{(B)}(\xi)\rangle:=|\psi_{\mathcal{K}^{(B)}}^{(B)}(\xi)\rangle$ being the $(  \mathcal{K}_{A}^{(A)}\cup\mathcal{K}%
_{\sim B}^{(A)})$- and $\mathcal{K}^{(B)}$-correlation states
projected into the subspaces of $A$ and $B$, respectively.
\end{customlemma}

\begin{proof}
[Proof of Lemma \ref{lemma::biseparable_AB-correlation_states}]
According to Proposition \ref{prop::correlation_states_explicit_form},
\begin{equation}
|\psi_{A\rfloor B}(\xi)\rangle=\sum_{\xi^{\prime}\in\left\langle
\mathcal{K}_{A}^{(A)}\cup\mathcal{K}_{\sim B}^{(A)}\cup
\mathcal{K}_{{}}^{(B)}\right\rangle }\pi_{G}\left(  \xi^{\prime}\Delta
\xi\right)  |x_{\Gamma}\oplus c_{\xi^{\prime}}\oplus c_{\xi}\rangle
.%
\end{equation}
Each X-resource $\xi^{\prime}\in\left\langle \mathcal{K}_{A}^{(A)}%
\cup\mathcal{K}_{\sim B}^{(A)}\cup\mathcal{K}_{{}}^{(B)}%
\right\rangle $ can be decomposed as $\xi^{\prime}=\alpha\Delta\beta
=\alpha_{A}\Delta\alpha_{B}\Delta\beta$ with $\alpha_{A}\in\left\langle
\mathcal{K}_{A}^{(A)}\right\rangle $ and $\alpha_{B}\in\left\langle
\mathcal{K}_{\sim B}^{(A)}\right\rangle $ and $\beta\in\left\langle
\mathcal{K}_{{}}^{(B)}\right\rangle $. Due to Lemma
\ref{lemma::multiplication_of_G_parity}
\begin{equation}
\pi_{G}\left(  \xi^{\prime}\Delta\xi\right)  =\left(  -1\right)  ^{\left\vert
E_{G}(\xi:\alpha\Delta\beta)\right\vert }\pi_{G}\left(  \alpha\Delta
\beta\right)  \pi_{G}\left(  \xi\right)  \text{ and }\pi_{G}\left(
\alpha\Delta\beta\right)  =\left(  -1\right)  ^{\left\vert E_{G}(\alpha
:\beta)\right\vert }\pi_{G}\left(  \alpha\right)  \pi_{G}\left(  \beta\right)
\end{equation}
Since $\alpha_{A}\subseteq A$ and $c_{\beta}\subseteq B$, it holds $\left\vert
E_{G}(\alpha_{A}:\beta)\right\vert \overset{\operatorname{mod}2}{=}\left\vert
\alpha_{A}\cap c_{\beta}\right\vert \overset{\operatorname{mod}2}{=}0$, while
$\alpha_{B}\in\left\langle \mathcal{K}_{\sim B}^{(A)}\right\rangle $ is
defined by $\left\vert E_{G}(\alpha_{B}:\beta)\right\vert \overset
{\operatorname{mod}2}{=}$ $0$. Therefore the edge number is $\left\vert
E_{G}(\alpha:\beta)\right\vert =\left\vert E_{G}(\alpha_{A}\Delta\alpha
_{B}:\beta)\right\vert \overset{\operatorname{mod}2}{=}0$. In addition,
$\left\vert E_{G}(\xi:\alpha\Delta\beta)\right\vert \overset
{\operatorname{mod}2}{=}\left\vert E_{G}(\xi:\alpha)\right\vert +\left\vert
E_{G}(\xi:\beta)\right\vert $, and therefore%
\[
\pi_{G}\left(  \xi^{\prime}\Delta\xi\right)  =\left(  -1\right)  ^{\left\vert
E_{G}(\xi:\alpha)\right\vert }\left(  -1\right)  ^{\left\vert E_{G}(\xi
:\beta)\right\vert }\pi_{G}\left(  \alpha\right)  \pi_{G}\left(  \beta\right)
\pi_{G}\left(  \xi\right)
\]
According to Eq. (\ref{eq::G-parity_formula}) in Proposition
\ref{prop::induced_stabilizer_math_formula}, the following equation holds
\begin{equation}
\pi_{G}\left(  \xi\right)  \left(  -1\right)  ^{\left\vert E_{G}(\xi
:\beta)\right\vert }\pi_{G}\left(  \beta\right)  =\left(  -1\right)
^{\left\vert E(G\left[  \beta\right]  )\right\vert +\left\vert E_{G}(\xi
:\beta)\right\vert +\left\vert E(G\left[  \xi\right]  )\right\vert }=\left(
-1\right)  ^{\left\vert E(G\left[  \beta\Delta\xi\right]  )\right\vert }%
=\pi_{G}\left(  \xi\Delta\beta\right)  .
\end{equation}
This equality also holds for $\alpha$, and therefore%
\begin{equation}
\pi_{G}\left(  \xi^{\prime}\Delta\xi\right)  =\pi_{G}\left(  \alpha\Delta
\xi\right)  \pi_{G}\left(  \beta\Delta\xi\right)  \pi_{G}\left(  \xi\right)  .
\end{equation}
Insert this equality into Eq. \eqref{eq::graph_state_in_X_basis}, and one obtains%
\begin{equation}
|\psi_{A\rfloor B}(\xi)\rangle=\pi_{G}\left(  \xi\right)  \sum_{\alpha
\in\left\langle \mathcal{K}_{A}^{(A)}\right\rangle }\sum_{\beta\in\left\langle
\mathcal{K}_{{}}^{(B)}\right\rangle }\pi_{G}\left(  \alpha\Delta\xi\right)
\pi_{G}\left(  \beta\Delta\xi\right)  |i^{(x_{\Gamma})}\oplus i^{(c_{\alpha}%
)}\oplus i^{(c_{\beta})}\oplus i^{(c_{\xi})}\rangle=\pi_{G}\left(  \xi\right)
|\phi_{A\rfloor B}^{(A)}(\xi)\rangle|\phi_{A\rfloor B}^{(B)}(\xi)\rangle
\end{equation}
with%
\begin{equation}
|\phi_{A\rfloor B}^{(A)}(\xi)\rangle=|\psi_{\mathcal{K}_{A}^{(A)}%
\cup\mathcal{K}_{\sim B}^{(A)}}^{(A)}(\xi)\rangle=\sum_{\alpha
\in\left\langle \mathcal{K}_{A}^{(A)}\cup\mathcal{K}_{\sim B}%
^{(A)}\right\rangle }\pi_{G}\left(  \alpha\Delta\xi\right)  |i^{(x_{\Gamma
}^{(A)})}\oplus i^{(c_{\alpha})}\oplus i^{(c_{\xi}^{(A)})}\rangle,
\end{equation}
and%
\begin{equation}
|\phi_{A\rfloor B}^{(B)}(\xi)\rangle=|\psi_{\mathcal{K}^{(B)}}^{(B)}%
(\xi)\rangle=\sum_{\beta\in\left\langle \mathcal{K}_{{}}^{(B)}\right\rangle
}\pi_{G}\left(  \beta\Delta\xi\right)  |i^{(x_{\Gamma}^{(B)})}\oplus
i^{(c_{\beta})}\oplus i^{(c_{\xi}^{(B)})}\rangle.
\end{equation}
\end{proof}

Lemma \ref{lemma::orthonormality_of_AB-correlation_states} is the key to derive Theorem \ref{theorem::Schmidt_decomposition_in_AB-correlation_states}. Its proof is as follows.
\begin{customlemma}{\ref{lemma::orthonormality_of_AB-correlation_states}}
[Orthonormality of $(A\rightharpoonup B)$-correlation states]
The components of $A\rfloor B$-correlation states on subspace $A$ and $B$, $|\phi_{A\rfloor B}^{(A)}%
(\xi)\rangle$ and $|\phi_{A\rfloor B}^{(B)}(\xi)\rangle$, are orthonormal with
respect to $\xi\in\langle\mathcal{K}^{A\rightharpoonup B}\rangle$ within the subspaces
$A$ and $B$, respectively, i.e.,
\begin{equation}
\langle\phi_{A\rfloor B}^{(A)}(\xi_{1})|\phi_{A\rfloor B}^{(A)}(\xi
_{2})\rangle=0
\end{equation}
and%
\[
\langle\phi_{A\rfloor B}^{(B)}(\xi_{1})|\phi_{A\rfloor B}^{(B)}(\xi
_{2})\rangle=0
\]
for all $\xi_{1},\xi_{2}\in\langle\mathcal{K}^{A\rightharpoonup B}\rangle$ and
$\xi_{1}\not =\xi_{2}$.
\end{customlemma}
\begin{proof}
[Proof of Lemma \ref{lemma::orthonormality_of_AB-correlation_states}]
According to the definition of correlation states (Def.
\ref{def::X-chain_states_K-correlation_states}) and the unitarity of
stabilizer $s_{G}^{(\xi_{1})}$ and $s_{G}^{(\xi_{2})}$, it holds that%
\begin{equation}
\langle\phi_{\mathcal{K}}(\xi_{1})|\phi_{\mathcal{K}}(\xi_{2})\rangle
=\langle\phi_{\mathcal{K}}(0)|\phi_{\mathcal{K}}(\xi_{1}\Delta\xi_{2})\rangle.
\end{equation}
One just needs to consider the overlap $\langle\phi_{A\rfloor B}^{(A)}%
(0)|\phi_{A\rfloor B}^{(A)}(\xi)\rangle$ and $\langle\phi_{A\rfloor B}%
^{(B)}(0)|\phi_{A\rfloor B}^{(B)}(\xi)\rangle$ with $\xi\in\langle
\mathcal{K}_{G}^{A\rightharpoonup B}\rangle$. That means for all $\alpha
\in\langle\mathcal{K}_{A}^{(A)}\cup\mathcal{K}_{\sim B}^{(A)}\rangle$
and $\beta\in\langle\mathcal{K}_{{}}^{(B)}\rangle$, it holds that%
\begin{equation}
c_{\alpha}\oplus c_{\beta}\not =c_{\xi}%
.\label{eq::proof_of_orthonormality_AB_correlation_states}%
\end{equation}
For $|\phi_{A\rfloor B}^{(A)}(\xi)\rangle$, due to the commutativity of graph
state stabilizers, it holds that
\begin{align}
\langle\phi_{A\rfloor B}^{(A)}(0)|\phi_{A\rfloor B}^{(A)}(\xi)\rangle &
=\frac{1}{2^{\left\vert \mathcal{K}_{A}^{(A)}\cup\mathcal{K}_{\sim %
B}^{(A)}\right\vert }}\sum_{\alpha,\alpha^{\prime}\in\left\langle
\mathcal{K}_{A}^{(A)}\cup\mathcal{K}_{\sim B}^{(A)}\right\rangle }%
\pi_{G}\left(  \alpha^{\prime}\right)  \pi_{G}\left(  \alpha\Delta\xi\right)
\langle i(x_{\Gamma}^{(A)})\oplus i(c_{\alpha^{\prime}})|i(x_{\Gamma}^{(A)})\oplus
i(c_{\alpha})\oplus i(c_{\xi}^{(A)})\rangle\\
&  =\frac{1}{2^{\left\vert \mathcal{K}_{A}^{(A)}\cup\mathcal{K}%
_{\sim B}^{(A)}\right\vert }}\sum_{\alpha,\alpha^{\prime}%
\in\left\langle \mathcal{K}_{A}^{(A)}\cup\mathcal{K}_{\sim B}%
^{(A)}\right\rangle :c_{\alpha\Delta\alpha^{\prime}}=c_{\xi}^{(A)}}\pi
_{G}\left(  \alpha^{\prime}\right)  \pi_{G}\left(  \alpha\Delta\xi\right)  .
\end{align}
If there exists $\lambda\in\left\langle \mathcal{K}_{A}^{(A)}\cup
\mathcal{K}_{\sim B}^{(A)}\right\rangle $ such that $c_{\lambda}%
=c_{\xi}^{(A)}$, then $\xi\Delta\lambda\in\mathcal{K}^{(B)}$ (since
$c_{\xi\Delta\lambda}\subseteq B$). This means $\xi\in\left\langle
\mathcal{K}_{A}^{(A)}\cup\mathcal{K}_{\sim B}^{(A)}\right\rangle
\times\left\langle \mathcal{K}^{(B)}\right\rangle $, which is in contradiction to
the definition of $\xi\in\langle\mathcal{K}^{A\rightharpoonup B}\rangle$.
Hence there are no pairs $\alpha,\alpha^{\prime}\in\mathcal{K}_{A}^{(A)}$ such
that $c_{\alpha\Delta\alpha^{\prime}}=c_{\xi}^{(A)}$, and therefore%
\begin{equation}
\langle\phi_{A\rfloor B}^{(A)}(0)|\phi_{A\rfloor B}^{(A)}(\xi)\rangle=0.
\end{equation}
Analogously for $|\phi_{A\rfloor B}^{(B)}(\xi)\rangle$, it holds that%
\begin{equation}
\langle\phi_{A\rfloor B}^{(B)}(0)|\phi_{A\rfloor B}^{(B)}(\xi)\rangle=\frac
{1}{2^{\left\vert \mathcal{K}^{(B)}\right\vert }}\sum_{\beta,\beta^{\prime}%
\in\left\langle \mathcal{K}^{(B)}\right\rangle :c_{\beta\Delta\beta^{\prime}%
}=c_{\xi}^{(B)}}\pi_{G}\left(  \beta^{\prime}\right)  \pi_{G}\left(
\beta\Delta\xi\right)  .
\end{equation}
If there exists no $\lambda\in\mathcal{K}^{(B)}$ such that $c_{\lambda}%
=c_{\xi}^{(B)}$, then $\langle\phi_{A\rfloor B}^{(B)}(0)|\phi_{A\rfloor
B}^{(B)}(\xi)\rangle=0$. If there exist such $\lambda\in\mathcal{K}^{(B)}$
then we substitute $\xi$ by $\xi^{\prime}:=\xi\Delta\lambda$, and then
$\xi^{\prime}\in\left\langle \mathcal{K}^{A\rightharpoonup B}\right\rangle $
still holds and $c_{\xi^{\prime}}^{(B)}=0$. Hence the overlap becomes%
\begin{align}
\langle\phi_{A\rfloor B}^{(B)}(0)|\phi_{A\rfloor B}^{(B)}(\xi)\rangle &
=\frac{1}{2^{\left\vert \mathcal{K}^{(B)}\right\vert }}\sum_{\beta
\in\left\langle \mathcal{K}^{(B)}\right\rangle }\pi_{G}\left(  \beta\right)
\pi_{G}\left(  \beta\Delta\xi^{\prime}\right)  =\frac{1}{2^{\left\vert
\mathcal{K}^{(B)}\right\vert }}\sum_{\beta\in\left\langle \mathcal{K}%
^{(B)}\right\rangle }\left(  -1\right)  ^{\left\vert E_{G}\left(  \beta
:\xi^{\prime}\right)  \right\vert }\nonumber\\
&  =\frac{1}{2^{\left\vert \mathcal{K}^{(B)}\right\vert }}\prod_{\beta
\in\mathcal{K}^{(B)}}\left(  1+\left(  -1\right)  ^{\left\vert E_{G}\left(
\beta:\xi^{\prime}\right)  \right\vert }\right)  .
\end{align}
Since $\xi^{\prime}\in\left\langle \mathcal{K}^{A\rightharpoonup B}%
\right\rangle $ and $\tilde{c}_{\xi^{\prime}}\subseteq A$ implies that
$\xi^{\prime}\not \subseteq A$, under the assumption that $\left\vert
E_{G}\left(  \beta:\xi^{\prime}\right)  \right\vert \overset
{\operatorname{mod}2}{=}0$ for all $\beta\in\mathcal{K}^{(B)}$, one can infer
(according to the definition in Eq. (\ref{eq::def_KappaA_B_in_AB_bipartition})
) that $\xi^{\prime}\in\langle\mathcal{K}_{\sim B}^{(A)}\rangle$. This is
in contradiction to the condition that $\xi^{\prime}\in\left\langle \mathcal{K}%
^{A\rightharpoonup B}\right\rangle $. Therefore, there must be at least one
$\beta_{0}$, which has odd number of edges to $\xi^{\prime}$, i.e., $\left\vert
E_{G}\left(  \beta:\xi^{\prime}\right)  \right\vert \overset
{\operatorname{mod}2}{=}1$.
Hence,
\begin{equation}
\langle\phi_{A\rfloor B}^{(B)}(0)|\phi_{A\rfloor B}^{(B)}(\xi)\rangle=0.
\end{equation}
\end{proof}

\end{widetext}

\newpage
\clearpage
\newpage

\section{The list of notations}
\label{apdx::list_of_notations}
Here we present a list of symbols together with the page number where they occur for the first time.
\printnoidxglossary[type=notation]

\bibliography{X-chains_factorizaton}
\bibliographystyle{myunsrt}

\end{document}